\let\emptyset\varnothing
\DeclarePairedDelimiter\floor{\lfloor}{\rfloor}
\newcommand{\myapprox}{{\raise.17ex\hbox{$\scriptstyle\sim$}}}
\newcommand\norm[1]{\left\lVert#1\right\rVert}
\newcommand{\icol}[1]{
  \left[\begin{matrix}#1\end{matrix}\right]%
}
\newcommand\mydots{\hbox to 1em{.\hss.\hss.}}
\newcommand{\dx}[1]{\textrm{d}#1}
\g@addto@macro\normalsize{%
  \setlength\abovedisplayskip{5pt}
  \setlength\belowdisplayskip{5pt}
  \setlength\abovedisplayshortskip{5pt}
  \setlength\belowdisplayshortskip{5pt}
}
\def\thm@space@setup{%
  \thm@preskip=0.15cm plus 0.05cm minus 0.05cm
  \thm@postskip=0.05cm plus 0.05cm minus 0.05cm
}
\newcommand{\Prob}[1]{\Pr\Bigl\{#1\Bigr\}}
\theoremstyle{plain}
\newtheorem{theorem}{Theorem}
\newtheorem{lemma}{Lemma}
\newtheorem{corollary}{Corollary}
\theoremstyle{definition}
\newtheorem{definition}{Definition}
\theoremstyle{remark}
\newtheorem{remark}{Remark}
\newcounter{longaligned}
\begin{document}
\title{Evaluating Load Balancing Performance in Distributed Storage with Redundancy}
\author{%
        Mehmet~F.~Akta\c{s},
        Amir~Behrouzi-Far,
        Emina~Soljanin,~\IEEEmembership{Fellow,~IEEE,} and
        Philip~Whiting
        
\thanks{M. F. Akta\c{s}, A. Behrouzi-Far and E. Soljanin are with the Department of Electrical and Computer Engineering, Rutgers University, New Brunswick,
NJ, 08901 USA. (email: \{mehmet.aktas, amir.behrouzifar, emina.soljanin\}@rutgers.edu)}
\thanks{P. Whiting is with the Department of Engineering, Macquarie University, NSW, 2109, Australia. (email: philip.whiting@mq.edu.au)}

\thanks{This paper was presented in part at 2019 XVI International Symposium ``Problems of Redundancy in Information and Control Systems'' \cite{Redundancy:AktasBS19}.}
\thanks{
Copyright (c) 2017 IEEE. Personal use of this material is permitted.  However, permission to use this material for any other purposes must be obtained from the IEEE by sending a request to pubs-permissions@ieee.org.}

}

\maketitle

\begin{abstract}
To facilitate load balancing, distributed systems store data redundantly. 
We evaluate the load balancing performance of storage schemes in which each object is stored at $d$ different nodes, and each node stores the same number of objects. In our model, the load offered for the objects is sampled uniformly at random from all the load vectors with a fixed cumulative value.
We find that the load balance in a system of $n$ nodes improves multiplicatively with $d$ as long as $d = o\left(\log(n)\right)$, and improves exponentially once $d = \Theta\left(\log(n)\right)$.
We show that the load balance improves in the same way with $d$ when the service choices are created with XOR's of $r$ objects rather than object replicas.
In such redundancy schemes, storage overhead is reduced multiplicatively by $r$. However, recovery of an object requires downloading content from $r$ nodes. At the same time, the load balance increases additively by $r$.
We express the system's load balance in terms of the maximal spacing or maximum of $d$ consecutive spacings between the ordered statistics of uniform random variables. Using this connection and the limit results on the maximal $d$-spacings, we derive our main results.
\end{abstract}

\begin{IEEEkeywords}
Load balancing, Distributed storage, Redundant storage, Distributed systems.
\end{IEEEkeywords}

\section{Introduction}
\label{sec:intro}
Distributed computing systems are built on a storage layer that provides data write/read service for executing workloads. Thus, the overall performance of a computing system depends on the data access (I/O) performance implemented by the underlying storage system. In production systems, data access times are the main bottleneck to performance \cite{ChallengesInBuildingLargeScaleInformationRetrievalSystems:Dean09}.
Indeed, access times in modern large-scale systems (e.g., Cloud systems) greatly suffer from storage nodes that exhibit poor or variable performance \cite{Dremel:MelnikGL10}. Poor performance is caused by many factors, but primarily it comes from multiple-workload resource sharing and the resulting contention at the system resources \cite{TailAtScale:DeanB13}.
Poor or variable performance is possible at any level of load but it is certainly aggravated at overloaded storage nodes \cite{StragglerRootCauseAnalysisInDatacenters:OuyangGY16}.
It is, therefore, paramount for distributed systems to be able to balance the offered data access load across the storage nodes.

It follows that to achieve good data access performance, we must balance the offered load across the storage nodes as evenly as possible.
In modern storage systems (e.g., HDFS \cite{HDFS:ShvachkoKR10}, Cassandra \cite{Cassandra:LakshmanM10}, Redis \cite{Redis}), data objects are replicated and made available across multiple nodes so that the offered load for each object, which we refer to as the \emph{demand} for the object, can be split across \emph{multiple nodes (service choices)}.
The best support for load balancing is achieved when each object is stored at each node, but that is feasible only in exceptional cases at large scale.
If the demand for each object is known and fixed, each object could be stored with an adequate level of redundancy. However, in practice, object popularities, and in turn their demands, are not only unknown but also fluctuate over time. Thus, load balancing should be robust against skews and changes in object popularities \cite{ChallengesInBuildingLargeScaleInformationRetrievalSystems:Dean09, Scarlett:AnanthanarayananAK11}.




Load balancing has been considered in two important settings.
In the first, we call the \emph{dynamic setting}, load balancing is addressed from the point of view of scheduling tasks for processing.
Here the nodes correspond to single queues which are processed independently and in parallel.
Load balancing amounts to interrogating some subset of all the queues and offering new tasks to those nodes which are least loaded.
The simplest of these models is the one in which each task is sent to the node with the least number of tasks. Clearly this achieves the ideal load balance.
This scheme is only practical for a relatively small number of nodes but becomes unworkable for large-scale systems with tens of thousands of nodes or more.
For this reason, a great deal of attention has been placed on developing schemes which offer tasks to a restricted number of nodes. These schemes include those based on the well-known \emph{power of $d$ choices} paradigm.
A range of asymptotic results have been obtained following this direction, often using analysis based on balls into bins models \cite{BallsIntoBins:RaabS98, BalancedAllocations:AzarBK99, BalancedAllocations_HeavilyLoadedCase:Berenbrink20}. 

All of the above literature addresses the load balancing question from the point of view of spreading tasks evenly across the nodes.
Its main weakness however is that the well-understood power of $d$ choices is only applicable for systems where the arrivals can be placed at any one of the nodes. For instance, for scheduling compute tasks across nodes within the same data center.
However, the flexibility of querying any $d$ bins at random does not exist in storage systems. This is because each object is typically stored only at a limited number of nodes and an arriving request can only be served at one of the nodes that host the requested object.

A more appropriate model for storage is to suppose that each request is offered to one of a subset of nodes, each of which hosts the requested object.
Kenthapadi and Panigrahy propose a model in \cite{BalancedAllocationsOnGraphs:Kenthapadi06} along these lines where subsets of nodes are represented as edges in a graph. They studied this restricted model for the power of two choices with $n$ balls and $n$ bins. Edges are selected according to incoming object requests and then the arrival is assigned to the least loaded node among the vertices of the edge.
Godfrey then extended this in \cite{BalancedAllocationsOnHypergraphs:Godfrey08} to general power of $d$ choices.
Applying this model to a system with $n$ storage nodes with each object being replicated $d = \Omega\left(\log(n)\right)$ times, Godfrey's results lead to the conclusion that effective load balancing can be achieved in the sense of power of $d$ choices. 
Godfrey then goes on to show that if $d$ grows more slowly, then the above conclusion for the power of $d$ schemes is no longer valid.
Storage schemes we consider in this paper are a natural special case of the balanced allocations on hyper graphs that was considered by Godfrey.
Beyond the above conclusions, Godfrey's results provide little insight into practical storage schemes. For example, how to distribute different objects across various storage nodes or what gain can be made from using coded schemes based on object XOR's.
Finally Godfrey's results are shown only for the lightly loaded case, which is when the cumulative load offered on the system scales as the order of the number of nodes.
This paper extends Godfrey's results for the case with concrete storage schemes and without restricting ourselves to the lightly loaded case.
Furthermore we examine
i) the number of different objects stored per node,
ii) object overlaps between the storage nodes,
iii) using coded objects rather than plain replicas, and address their impact on storage efficiency and load balancing performance.


Under the dynamic setting, load offered for the objects is not known a priori. Requests arrive sequentially and each is assigned to a node based on the current load at each node.
We now turn to the second setting, namely the \emph{static setting}.
In this setting, a different question is asked: is it feasible to carry the load, if the load offered for each object is known from the start.
Any assignment strategy realized under the dynamic setting is also achievable under the static setting.
This is because knowing the offered loads for the stored objects in advance makes it only easier to balance the load.
Load balancing performance in the static setting therefore represents the best-case performance of the system.

The question asked in the static setting gives rise to two distinct approaches.
The first approach leads to the design of redundancy schemes, namely \emph{batch codes}. They balance the load as long as any $m$ objects are chosen with replacement out of all objects and then requested simultaneously \cite{BatchCodesAndTheirApps:IshaiKO04}.
The storage schemes we consider fall into the class of combinatorial batch codes \cite{CombinatorialBatchCodes:StinsonWP09}.
We should note that the static model adopted for batch codes has been extended to a more dynamic setting, which led to the design of \emph{asynchronous} batch codes \cite{AsyncBatchCodes:RietST}. Batch codes were originally designed to balance only a single batch of requests. An asynchronous variant is designed to balance the present batch together with the upcoming batch or batches.
This approach asks the question the other way around and seeks to find the set of all object demand vectors that can be supported by a system with a given storage scheme, namely the system's \emph{service capacity region} \cite{AllertonServiceCapacity:AktasJS17, ITWServiceCapacity:AndersonJJ18}.
Our treatment of load balancing falls into this second approach.
References \cite{AllertonServiceCapacity:AktasJS17} and \cite{ITWServiceCapacity:AndersonJJ18} only address the case where each node stores a single object, their approach being to find the system's complete service capacity region. However determining this region with multiple objects at each node appears to be a largely intractable problem.
In our approach, we rely on a new stochastic formulation which allows us to analyze the load balancing in this scenario and to draw a range of conclusions on the design and structure of storage schemes.
Furthermore, the primary goal in the service rate approach is to evaluate system stability under different object popularities. In this paper however we address the related problem of \emph{feasibility} of load balancing, i.e., the degree to which storage resources can be adapted according to the changes of the object popularities.

It is helpful at this point to add a few words by way of explaining the model setup that we use in this paper.
First, we consider only \emph{regular balanced} storage schemes in which each object is replicated $d$ times (hence regular) and each node stores the same number of objects (hence balanced).
Storage schemes specify where each object copy is stored, and therefore determine the set of all possible ways that one can split and assign individual object demands across the nodes.
Second, as far as object demands are concerned, we suppose that the cumulative load is fixed and that all object popularities are equally likely.
This is motivated by the fact that the cumulative demand for all the objects stored in the system is known to vary slowly over time and therefore is easy to estimate (see, e.g., Fig.~7 in \cite{StudyOfMapreduceWorkloads:ChenAK12}). Individual demands for objects fluctuate much more rapidly.
Additionally, our assumption of fixed cumulative load on the system is the continuous generalization of the offered load model used in the batch code problem.

We now turn to the metrics which we will be using to analyze load balancing performance.
These metrics can be understood by considering Fig.~\ref{fig:pyramid_coverage}.
Fig.~\ref{fig:pyramid_coverage} shows a simplex region that corresponds to the set of all possible demand vectors for three objects.
If load balancing is ideal, then for all these vectors stability can be achieved. However, this is too onerous in practice as it would require an unacceptably large storage overhead.
A compromise therefore is to minimize the fraction of demand vectors which cannot be supported.
Under our formulation, this corresponds to our assumption that the object demands are uniformly distributed on the simplex region. This is indeed the standard model used in the study of load balancing in the dynamic setting.
Overall, we measure the \emph{robustness} of load balancing as the probability $P_{\Sigma}$ that the system will be stable when the demand vector is sampled uniformly at random from the simplex region defined by cumulative load $\Sigma$.
We also use another metric that is closely connected to $P_{\Sigma}$ to measure the load imbalance.
Precisely for a system of $n$ nodes under a cumulative load of $\Sigma$, the load imbalance, $\mathcal{I}$, is given by minimizing the maximum load and dividing it by its minimum possible value $\Sigma/n$.

\begin{figure}[t]
  \centering
  \includegraphics[width=.3\textwidth]{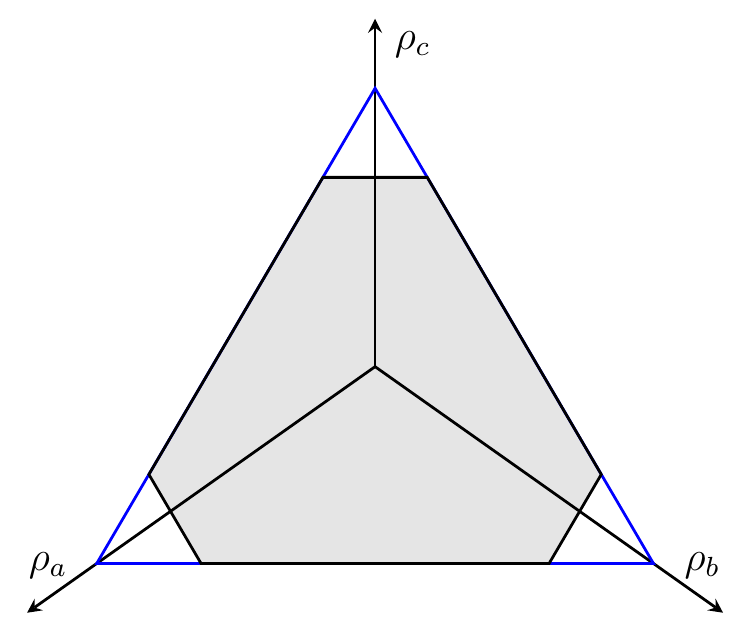}
  \caption{An illustration of the demand offered for three objects $a$, $b$, $c$ and the fraction of those that are supported by the system.
  The simplex region corresponds to all demand vectors that can be offered on the system. It is given by $\{(x, y, z) \mid x + y + z = \Sigma\}$ for cumulative load $\Sigma$.
  The shaded region shows the demand vectors that are supported by the system.}
\label{fig:pyramid_coverage}
\end{figure}

With the metrics now established, we can state the main contributions of this paper.
These can be understood by considering the following questions with respect to load balancing and storage:
\begin{enumerate}[label=\textbf{Q\arabic*}, leftmargin=*]
\item \label{q:1} How does $\mathcal{I}$ scale with the number of objects and nodes in the system where there is no storage redundancy?
\item \label{q:2} Does the degree of overlap between the service choices of different objects play a critical role in terms of achieving better load balance? How does $\mathcal{I}$ depend on the number of service choices $d$ provided for each object?
\item \label{q:3} XOR'ing reduces storage requirements, however can effective load balancing still be achieved using XOR's, rather than object replicas?
\end{enumerate}
We address these questions for common storage schemes with regular balanced redundancy. Optimizing storage schemes for various purposes are studied elsewhere (e.g., for improving data access in \cite{allocations:SardariRFS10}).

\vspace{1ex}
\noindent
\textbf{Our contribution}:\space
From the results of the paper, we are able to conclude the following answers \textbf{A1}, \textbf{A2} and \textbf{A3} for the questions \textbf{Q1}, \textbf{Q2} and \textbf{Q3} above.

\vspace{1ex}
\noindent
\textbf{A1}:\space
For the storage schemes with no redundancy, we find $\mathcal{I} = \Theta\left(\log(n)/m\right)$ where $m$ is the number of different objects stored on each node. This implies that in the limit as $n \to \infty$, load imbalance grows as $\log(n)$.
This implies that if we want to maintain a load of $\Sigma/n$ in the maximally loaded node, we need $n\log(n)$ nodes in the system.

\vspace{1ex}
\noindent
\textbf{A2}:\space
To answer \ref{q:2}, we consider $d$-replication storage schemes.
Different storage schemes under a regular balanced requirement lead to differences in the way objects overlap at the nodes.
By examining three different storage designs, we find that the scheme with consistently small overlaps outperforms schemes with fewer overlaps which are necessarily larger in size.
For one class of schemes with limited overlap, which we call the r-gap schemes, we have obtained the following asymptotic results
$\mathcal{I} = \Theta\left(\log(n)/d\right)$ when $d = o\left(\log(n)\right)$ and $\mathcal{I} = \Theta\left(\log\log(n)/\log(n)\right)$ when $d = \Theta\left(\log(n)\right)$.
These results imply that
i) creating $d$ service choices for each object initially reduces the load imbalance in the system multiplicatively by $d$,
ii) there is an exponential reduction in load imbalance as soon as $d$ reaches of order $\log(n)$.
This quantifies the tradeoff between storage and service capacity for r-gap schemes.



\vspace{1ex}
\noindent
\textbf{A3}:\space
Using XOR's reduces the amount of storage needed but at the same time increases the capacity needed to access various objects. This is because to obtain a single object, we have to access several object codes. An $r$-XOR object code is one which is constructed from $r$ objects.
Our asymptotic results show that storage with $d$-fold redundancy implemented with $r$-XOR's have the advantage that it has the same scaling of $\mathcal{I}$ in $d$ \emph{as if} the service choices were created with replicas.
Thus in large-scale systems, there is no loss of significant benefit over replication. XOR'ing can be used to trade off between storage and the access capacity.

The paper is organized as follows:
Sec.~\ref{sec:intro} gives an overview of the literature on load balancing in storage context. We also discuss the connections between our approach and the prior work.
Sec.~\ref{sec:preliminaries} presents our storage and offered load model and its connection with uniform spacings. We also precisely define the metrics that we use to evaluate load balancing performance.
In Sec.~\ref{sec:1choice} we consider storage schemes with no redundancy and answer \ref{q:1}.
In Sec.~\ref{sec:dchoice} we consider storage schemes with object replication and answer \ref{q:2}.
In Sec.~\ref{sec:dchoice_wxors} we consider creating storage redundancy with XOR's rather than object replicas and answer \ref{q:3}.

\section{System Model and Performance Metrics}
\label{sec:preliminaries}
In this section, we introduce our system model and define the metrics we use to evaluate load balancing performance.
We study load balancing in the static setting with a continuous service and offered load model.
Our model reveals an interesting connection of the load balancing problem to convex polytopes and the spacings between ordered random uniform samples, the so-called \emph{uniform spacings} \cite{Spacings:Pyke65}. We elaborate on this in Sec.~\ref{subsec:offered_load}.
The latter connection enabled us to apply prior results on uniform spacings in answering the questions posed in Sec.~\ref{sec:intro}.

\subsection{Storage and Access Model}
\label{subsec:sys_model}
We consider a system of $n$ storage nodes $s_1, \dots, s_n$ hosting $k$ data objects $o_1, \dots, o_k$, possibly with redundancy. 
Each node provides the same \emph{capacity} for content access, which is defined as the maximum number of bytes that can be streamed from a node per unit time.
An \emph{object} denotes the smallest unit of content, and mathematically, it is a fixed-length string of bits. XOR'ing multiple objects is carried out bitwise.

We refer to the \emph{offered load} for object $o_i$ as its \emph{demand} $\rho_i$.
Demand for an object represents the average number of bytes streamed from the system per unit time to access the object, divided by a single node's content access capacity.
We refer to a node that hosts an object as a \emph{service choice} for the object.
Multiple service choices for an object can be created by replicating it over several nodes. We consider $d$-choice storage schemes with replicas in Sec.~\ref{sec:dchoice}.
Alternatively, XOR'ed object copies can be used to create multiple service choices. We consider $d$-choice storage schemes with XOR's in Sec.~\ref{sec:dchoice_wxors}.
When XOR'ing is used, a service choice for an object refers to a \emph{recovery set}, that is, a set of nodes that can jointly recover the object.
Accessing an object through one choice should not interfere with accessing the same object through another choice. Different service choices for the same object are therefore disjoint.

Demand for an object can be arbitrarily split across its service choices. When a load of $\rho$ is exerted by an object on a recovery set, each node within the set will be offered a load of $\rho$.
The \emph{load on a node} is given by the sum of the offered load portions exerted on it by the objects for which the node can serve as a choice.
A node is said to be \emph{stable} if the load on it is less than $1$. A system is said to be stable if every node within the system is stable.
We assume that each of the object demands $\rho_i$ is split across its service choices so that the load on the maximally loaded node is minimized. As we describe further in Sec.~\ref{subsec:cap_region}, this can be obtained by solving a norm minimization problem given the storage scheme and the value of $\rho_i$.

A \emph{storage allocation} defines how each object is assigned, possibly with redundancy, to storage nodes.
This paper focuses on \emph{regular balanced $d$-choice} storage allocations.
\begin{definition}
  A \underline{regular balanced $d$-choice allocation} stores each object with $d$ service choices and distributes object copies across the nodes so that each node stores the same number of different objects (either as an exact or XOR'ed copy).
\label{def:reg_balanced_dchoice_alloc}
\end{definition}
There are many ways to design a $d$-choice allocation. We detail some of them in Sec.~\ref{sec:dchoice} and \ref{sec:dchoice_wxors}. In the rest of the paper, unless otherwise noted, the allocation itself will refer to a regular balanced allocation.

\vspace{1ex}
\noindent
\textbf{Connection with batch codes}:
A $(k, N, m, n, t)$ \emph{batch code} encodes $k$ objects into $N$ copies with redundancy and distributes them across the $n$ nodes in such a way that any $m$ of these objects can be accessed by reading at most $t$ objects from any node \cite{BatchCodesAndTheirApps:IshaiKO04}.
The goal while designing batch codes is to minimize the total storage requirement.
Redundancy can be either in the form of replicating individual objects or encoding (e.g., XOR'ing) multiple objects together.
Multiset batch codes are concerned with a more general case in which the selection of $m$ objects for access is done with replacement. It should be noted that each of the objects needs to be accessed separately, that is, the content that is read for accessing an object cannot be used to access another object.
We should note that the demand model assumed for batch codes have been extended to cases with additional constraints, such as balancing the access frequencies over the nodes \cite{AccessBalancingInDistributedStorage:DauM18}. This area of research has investigated the use of certain class of redundancy schemes to balance access when the popularity ranks of the objects are known in advance.

In Sec.~\ref{sec:dchoice} we will consider storage allocations that are constructed by object replication.
Such allocations implement batch codes as follows.
\begin{lemma}
  Any $d$-choice regular balance storage allocation with object replication
  represents a $(k, kd, n, n, 1)$ batch code and a $(k, kd, d, n, 1)$ multiset batch code.
\label{lm:dchoice_alloc_is_batchcode}
\end{lemma}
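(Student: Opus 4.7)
The plan is to reduce each batch-code property to a bipartite matching problem and invoke Hall's marriage theorem, exploiting the disjointness of the service choices together with the regular-balanced structure of the allocation.

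Fix an admissible batch and construct the bipartite graph $G$ whose left vertices are the query slots of the batch and whose right vertices are the $n$ storage nodes, with an edge between query $q$ and node $v$ whenever $v$ stores a copy of the object requested by $q$. A read schedule that serves the whole batch using at most one read per node is exactly a left-perfect matching of $G$, so both claims reduce to exhibiting such a matching for every admissible batch.

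For the multiset batch code $(k,kd,d,n,1)$ the batch has size $d$, so it suffices to verify Hall's condition $|N(T)| \geq |T|$ for every subset $T$ of query slots with $|T| \leq d$. Picking any $q \in T$, Definition~\ref{def:reg_balanced_dchoice_alloc} supplies $d$ pairwise disjoint service choices for the object requested by $q$, so $|N(T)| \geq d \geq |T|$ and Hall's condition holds trivially. For the set batch code $(k,kd,n,n,1)$ I would combine two bounds on $|N(T)|$: the same ``$\geq d$'' bound from disjoint service choices, and the double-counting inequality $|N(T)| \cdot m \geq |T| \cdot d$, where $m = kd/n$ is the common per-node object count, which yields $|N(T)| \geq |T|\,n/k$. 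These bounds should together certify Hall's condition for every subset $T$ of the $n$ distinct requested objects, and the required matching then follows from Hall's theorem.

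The step I expect to be the main obstacle is completing the Hall verification in the set-batch case when the requested objects exhibit heavy overlap in their service-choice sets, since the bound $|N(T)| \geq |T|\,n/k$ degrades in that regime. To close the gap, I would exploit the biregularity of the object--node incidence matrix (column sums equal to $m$, row sums equal to $d$) and, if needed, additional structural features (such as bounded pairwise overlap between service-choice sets) that will be inherent to the allocations constructed in Section~\ref{sec:dchoice}. Once Hall's condition is established in both cases, the batch-code and multiset-batch-code properties follow at once from the existence of the left-perfect matching.
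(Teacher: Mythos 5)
Your reduction to Hall's theorem is the right framework, and your multiset part is complete and essentially the paper's argument: for a batch of size $d$, any nonempty set $T$ of query slots satisfies $|N(T)|\geq d\geq |T|$ because a single requested object already has $d$ disjoint service choices. The gap is in the set-batch case $(k,kd,n,n,1)$, which you explicitly leave open: you propose to close Hall's condition in the ``heavy overlap'' regime by invoking bounded pairwise overlap or other structural features of the particular designs of Sec.~\ref{sec:dchoice}. That route is not available, because the lemma is asserted for \emph{every} regular balanced $d$-choice allocation (Definition~\ref{def:reg_balanced_dchoice_alloc}), including the clustering design whose choice sets overlap maximally; any proof that needs small overlaps proves a different statement.

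The missing observation is that regularity and balancedness alone already do the job, which is how the paper argues: the allocation graph is a regular bipartite graph (object degree $d$, node degree $m=kd/n$, and in the regime the paper analyzes $k=n$, so $m=d$), hence by K\H{o}nig's theorem it has a perfect matching, and a matching saturating the object side forces $|N(T)|\geq |T|$ for every object subset $T$. Equivalently, your own double-counting bound closes the gap once you note $m=d$: the inequality $|N(T)|\,m\geq |T|\,d$ becomes $|N(T)|\geq |T|$ outright, with no overlap hypothesis needed (this is exactly why clustering, despite its large overlaps, satisfies the claim). Be aware also that for $k$ strictly larger than $n$ the weaker bound $|N(T)|\geq |T|\,n/k$ cannot be repaired in general: with clustering, $d=2$ and $k=2n$, one can request $n$ distinct objects that live on only $n/2$ nodes, so the set-batch property genuinely requires the regular ($m=d$, i.e., $k=n$) setting that the K\H{o}nig argument covers, not extra overlap structure.
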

\begin{proof}
  See Appendix~\ref{subsec:proof_lm_dchoice_alloc_is_batchcode}.
\end{proof}
Batch codes with replication are known as \emph{combinatorial} batch codes and their construction has been well studied \cite{CombinatorialBatchCodes:StinsonWP09, CombinatorialDesigns:Stinson07}. In particular, a combinatorial batch code is named as \emph{$d$-uniform} if it stores each object in exactly $d$ nodes, which is exactly the $d$-choice requirement we consider here. An approach that is based on block design has been given in \cite{OptimalBatchCodesBasedOnBlockDesigns:SilbersteinG16} to construct optimal $d$-uniform batch codes.

\subsection{Offered Load and Uniform Spacing Model}
\label{subsec:offered_load}
We suppose that the system can be offered any object demand (offered load) vector $(\rho_1, \dots, \rho_k)$ in the set
\begin{equation}
  \mathcal{S}_{\Sigma} =
  \Bigl\{(\rho_1, \dots, \rho_k)\;\Bigl|\; \sum_{i=1}^k \rho_i = \Sigma, ~\rho_i \geq 0\Bigr\}.
\label{eq:S_Sigma}
\end{equation}
That is, the cumulative offered load remains constant but the object popularities can change arbitrarily.
The cumulative condition we impose on the offered load is the continuous generalization of the load model assumed in the multiset batch code problem. Recall that a multiset batch code is designed to support a user who can simultaneously access $m$ objects that are selected with replacement out of all objects stored in the system.

We further assume that the demand vector $(\rho_1, \dots, \rho_k)$ is sampled uniformly at random from $\mathcal{S}_{\Sigma}$.
Uniform distribution across all possible demand vectors models the case where no a priori knowledge is given on the object popularities. In other words, it represents the case with maximum uncertainty about the object popularities. 
This assumption is the continuous generalization of what has been used in balls-into-bins models. There each ball arrives for one of the stored objects chosen uniformly at random from all stored in the system.
In addition as we discuss shortly, sampling demand vectors uniformly at random is able to model the skewed nature of object popularities in real systems \cite{Scarlett:AnanthanarayananAK11}. However it should be noted that modeling with a more general distribution would yield additional insight on load balancing under more specific and possibly more realistic offered load models.
An example of such a model would be one that puts larger probability mass on the demand vectors representing skewed object popularities.


In what follows, we define uniform spacings. They are mathematical objects connected with the uniform sampling of points from a simplex.
Let $U_{(1)}, \dots, U_{(k-1)}$ be $k-1$ i.i.d.\ uniform samples in $[0, 1]$, given in non-decreasing order. Then 
$S_i = U_{(i)} - U_{(i-1)}$ for $i = 1, \dots, k$, where $U_{(0)} = 0$ and $U_{(k)} = 1$, are known as $k$ uniform spacings on the unit line.
\begin{lemma}[see e.g. \cite{Spacings:Pyke65}]
  Uniform spacings $(S_1, \dots, S_k)$ are uniformly distributed over the simplex
  \[ \Bigl\{(s_1, \dots, s_k)\; \Bigl|\; \sum_{i=1}^k s_i = 1, ~s_i \geq 0 ~\mathrm{for}~ i=1, \dots, k\Bigr\}. \]
\label{lm:spacings_uniformlydisted}
\end{lemma}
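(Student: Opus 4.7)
The plan is to prove uniformity by a direct change-of-variables argument, mapping the joint density of the order statistics to the joint density of the spacings and verifying that the Jacobian is constant. The care point is purely dimensional: the simplex in $k$ coordinates is $(k-1)$-dimensional, so I parametrize it by the first $k-1$ spacings, with $S_k = 1 - \sum_{i=1}^{k-1} S_i$ playing the role of a determined coordinate.

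First, I would recall the well-known fact that the joint density of the order statistics $U_{(1)}, \dots, U_{(k-1)}$ of $k-1$ i.i.d.\ Uniform$[0,1]$ random variables equals $(k-1)!$ on the open simplex
\[
\Delta = \{(u_1, \dots, u_{k-1}) \mid 0 < u_1 < u_2 < \cdots < u_{k-1} < 1\},
\]
and $0$ elsewhere. Next, I would define the map $\phi : \Delta \to \mathbb{R}^{k-1}$ by $\phi(u_1, \dots, u_{k-1}) = (s_1, \dots, s_{k-1})$ with $s_i = u_i - u_{i-1}$ and the convention $u_0 = 0$. This map is linear with a lower-bidiagonal matrix consisting of $1$'s on the main diagonal and $-1$'s on the subdiagonal, so its determinant has absolute value $1$.

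The image $\phi(\Delta)$ is precisely $\{(s_1, \dots, s_{k-1}) \mid s_i > 0,\ \sum_{i=1}^{k-1} s_i < 1\}$, which is the standard open $(k-1)$-simplex and serves as a global chart on the full simplex
\[
\Bigl\{(s_1, \dots, s_k) \,\Bigl|\, \sum_{i=1}^k s_i = 1,\ s_i \geq 0\Bigr\}
\]
via the identification $s_k = 1 - \sum_{i=1}^{k-1} s_i$. Applying the change-of-variables formula, the joint density of $(S_1, \dots, S_{k-1})$ is $(k-1)! \cdot |\det \phi^{-1}|^{-1} \cdot 1 = (k-1)!$ on $\phi(\Delta)$, independent of the point. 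Since this density is constant over its support, the distribution is uniform on the $(k-1)$-dimensional simplex, which is exactly the claim.

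There is no real obstacle here; the only subtlety worth flagging is that since the constraint $\sum_i S_i = 1$ makes the $k$-tuple $(S_1, \dots, S_k)$ degenerate, ``uniform over the simplex'' must be interpreted with respect to the $(k-1)$-dimensional Lebesgue measure on the affine hyperplane, which is what the chart $\phi$ supplies automatically. A compatibility check, e.g.\ verifying that the total mass is $(k-1)! \cdot \mathrm{vol}(\phi(\Delta)) = (k-1)! \cdot \tfrac{1}{(k-1)!} = 1$, would confirm the calculation.
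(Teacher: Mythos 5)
Your proof is correct. The paper does not prove this lemma at all --- it is stated as a classical fact with a citation to Pyke's survey on spacings --- so there is nothing to diverge from; your change-of-variables argument (order-statistic density $(k-1)!$ on the ordered region, unit-determinant bidiagonal map to the first $k-1$ spacings, constant density $(k-1)!$ on the open simplex, with $S_k$ determined) is exactly the standard proof from the literature, and your remark that ``uniform on the simplex'' must be read with respect to the $(k-1)$-dimensional measure on the affine hyperplane correctly handles the only subtle point, confirmed by your mass check $(k-1)!\cdot\tfrac{1}{(k-1)!}=1$.
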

Lemma~\ref{lm:spacings_uniformlydisted} implies that object demands $\rho_i$ in our model under a cumulative load of $\Sigma$ can be seen as $k$ uniform spacings in $[0, \Sigma]$.
This connection allows us to use the results on uniform spacings to evaluate load balancing performance in systems with $d$-choice storage allocation. We do the evaluation in terms of the performance metrics defined in the following subsection.

We next examine the popularity skew characteristics captured by our uniform offered load model (as promised above).
Without loss of generality, let us assume that the cumulative demand $\Sigma$ offered on the system is $1$.
Let $N(\alpha, \beta)$ denote the number of objects with a demand of $\geq \alpha$ and $\leq \beta$.
Then $N(\alpha, \beta)$ is given by the number of uniform spacings that are within $[\alpha, \beta]$.
An asymptotic characterization of $N(\alpha, \beta)$ has been given in \cite{OnSumOfFunctionsOfUniformSpacings:Darling53} as follows.
\begin{theorem}(\cite[Theorem 8.1-2-3]{OnSumOfFunctionsOfUniformSpacings:Darling53})
  \begin{enumerate}[label=\emph{R\arabic*}, leftmargin=*]
  \item \label{N_a_b_medium}. $N(\alpha/k, \beta/k)$ is asymptotically normally distributed as $k \to \infty$ with an asymptotic mean and variance
    \begin{equation*}
    \begin{split}
        &\mu_k \sim k\left(e^{-\alpha} - e^{-\beta}\right), \\
        &\sigma_k^2 \sim k\left(e^{-\alpha} - e^{-\beta} - (\alpha e^{-\alpha} - \beta e^{-\beta})^2 \right).
    \end{split}
  \end{equation*}
  
  \item \label{N_a_b_small}. $N(\alpha/k^2, \beta/k^2)$ has an asymptotic Poisson distribution with parameter $\beta-\alpha$.
  
  \item \label{N_a_b_large}. $N((\log(k) + \alpha)/k, (\log(k) + \beta)/k)$ has an asymptotic Poisson distribution with parameter $e^{-\alpha} - e^{-\beta}$.
  \end{enumerate}
\label{thm:N_a_b}
\end{theorem}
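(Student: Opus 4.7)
The plan is to base all three parts of the theorem on the classical exponential representation of uniform spacings: if $E_1,\dots,E_{k+1}$ are i.i.d.\ $\mathrm{Exp}(1)$ and $T=\sum_{j=1}^{k+1} E_j$, then $(S_1,\dots,S_k)\overset{d}{=}(E_1/T,\dots,E_k/T)$. Under this representation,
\[
N(\alpha,\beta) \;\overset{d}{=}\; \sum_{i=1}^{k}\mathbb{1}\!\left\{E_i\in[\alpha T,\beta T]\right\},
\]
so each count becomes a sum of exchangeable indicators built from i.i.d.\ exponentials plus the single global scaling factor $T$. Since $T/(k+1)\to 1$ a.s.\ with Gaussian fluctuations of size $1/\sqrt{k}$, the strategy is to first pretend $T/k=1$ (giving a clean sum of i.i.d.\ indicators) and then separately account for the random perturbation coming from the normalizer.

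For part \ref{N_a_b_small}, the window $[\alpha/k^2,\beta/k^2]$ translates, after multiplying by $T\approx k$, into the event $\{E_i\in[\alpha/k,\beta/k]\}$, which has probability $\sim(\beta-\alpha)/k$ since the exponential density at the origin equals $1$. The mean of $N$ is thus $\beta-\alpha$, and I would verify Poisson convergence by the method of factorial moments: because the $E_i$ are independent and the events are rare, the $r$-th factorial moment converges to $(\beta-\alpha)^r$, which characterizes $\mathrm{Poisson}(\beta-\alpha)$; a small argument controls the error introduced by the $T/k$ fluctuation (it contributes only an $O(1/\sqrt{k})$ distortion to the endpoints and does not change the limit). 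Part \ref{N_a_b_large} is handled identically: the window $[(\log k+\alpha)/k,(\log k+\beta)/k]$ gives per-spacing probability $\sim(e^{-\alpha}-e^{-\beta})/k$, so the counts are again rare, exchangeable and approximately independent, and factorial moments converge to those of $\mathrm{Poisson}(e^{-\alpha}-e^{-\beta})$.

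Part \ref{N_a_b_medium} is the real work, because the mean $\mu_k\sim k(e^{-\alpha}-e^{-\beta})$ is of order $k$ and a CLT rather than a Poisson approximation is needed. I would write
\[
N(\alpha/k,\beta/k) \;=\; \sum_{i=1}^{k}\mathbb{1}\!\left\{E_i\in[\alpha T/k,\beta T/k]\right\}
\]
and decompose this by conditioning on $T$. Conditional on $T=t$, the indicators are i.i.d.\ Bernoulli with parameter $p(t)=e^{-\alpha t/k}-e^{-\beta t/k}$, so the conditional CLT gives asymptotic normality with conditional mean $k\,p(t)$ and conditional variance $k\,p(t)(1-p(t))$. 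Unconditioning, $T/k$ is itself asymptotically normal with mean $1$ and variance $1/k$, and a delta-method / Taylor expansion of $k\,p(T)$ around $T=k$ contributes an extra Gaussian term with variance $k(\alpha e^{-\alpha}-\beta e^{-\beta})^2$. The total variance is then the sum of the within-$T$ binomial variance $k(e^{-\alpha}-e^{-\beta})(1-(e^{-\alpha}-e^{-\beta}))$ and minus the $T$-fluctuation variance $k(\alpha e^{-\alpha}-\beta e^{-\beta})^2$, with the sign coming from the fact that the $T$-fluctuation enters through the argument of $p$ and correlates negatively with the binomial count. Collecting terms recovers exactly $\sigma_k^2\sim k(e^{-\alpha}-e^{-\beta}-(\alpha e^{-\alpha}-\beta e^{-\beta})^2)$.

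The main obstacle I anticipate is making the variance bookkeeping in part \ref{N_a_b_medium} rigorous: the indicators are not independent (they share the denominator $T$), so a direct CLT for sums of independent variables does not apply, and the subtractive correction involving $(\alpha e^{-\alpha}-\beta e^{-\beta})^2$ only appears if one carefully tracks both the conditional binomial variance and the induced covariance through $T$. I would handle this by writing the joint characteristic function of $(N,(T-k)/\sqrt{k})$ via the conditional representation, showing that both marginals are asymptotically Gaussian and computing the limiting covariance explicitly; Slutsky together with the delta method then yields the asymptotic normality of $N$ with the stated variance. The Poisson parts only require uniform integrability of factorial moments, which is straightforward from the exponential tails.
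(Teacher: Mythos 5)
First, a point of reference: the paper itself gives no proof of this statement — it is quoted, with attribution, from Darling's 1953 paper — so your proposal is really competing with Darling's original argument, not with anything in this text. Your outline for the two Poisson parts \ref{N_a_b_small} and \ref{N_a_b_large} is sound: the per-spacing probabilities $\sim(\beta-\alpha)/k$ and $\sim(e^{-\alpha}-e^{-\beta})/k$ are correct, and convergence of factorial moments, computed most cleanly from the joint spacing law $\Pr\{S_1>x_1/k,\dots,S_r>x_r/k\}=\bigl(1-\sum_{i\le r}x_i/k\bigr)_+^{k-1}$, which factorizes in the limit, does give the stated Poisson limits; the weak dependence through the common normalizer is harmless there. (Minor: for $k$ spacings the exponential representation uses exactly $k$ exponentials, $T=\sum_{j=1}^{k}E_j$; your identity with $k+1$ exponentials and only the first $k$ ratios is not exact, though asymptotically irrelevant.)

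The genuine gap is in part \ref{N_a_b_medium}, and it is twofold. First, the pivotal step ``conditional on $T=t$ the indicators are i.i.d.\ Bernoulli with parameter $p(t)$'' is false: conditioning i.i.d.\ exponentials on their sum makes them dependent — in fact $(E_1/T,\dots,E_k/T)$ is exactly the spacings vector and is \emph{independent} of $T$, so conditioning on $T$ returns you to the original problem and provides no decoupling; the conditional-CLT-plus-delta-method decomposition as literally described collapses. The workable version of your idea, which your last paragraph gropes toward, is not conditioning but linearization: write $T/k=1+Z/\sqrt{k}$, replace the random window $[\alpha T/k,\beta T/k]$ by $[\alpha,\beta]$ plus the linear shift, prove a bivariate CLT for the i.i.d.\ pairs $\bigl(\mathbf{1}\{E_i\in[\alpha,\beta]\},E_i\bigr)$, and justify the replacement by a stochastic-equicontinuity estimate on the boundary counts; that estimate is the real technical content and is missing from the proposal. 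Second, your variance bookkeeping does not add up: with $p=e^{-\alpha}-e^{-\beta}$ and $q=\alpha e^{-\alpha}-\beta e^{-\beta}$, the sum of the binomial term $kp(1-p)$ and the correction $-kq^{2}$ is $k(p-p^{2}-q^{2})$, not the displayed $k(p-q^{2})$, so the claim that your terms ``recover exactly'' the stated $\sigma_k^2$ is arithmetically false. Carrying out the linearization correctly (note $\mathrm{Cov}\bigl(\mathbf{1}\{E\in[\alpha,\beta]\},E\bigr)=q$, so the limiting variance is $p(1-p)+q^{2}-2q^{2}=p-p^{2}-q^{2}$) shows that $k(p-p^{2}-q^{2})$ is in fact the true asymptotic variance; the expression displayed in the theorem appears to have dropped the $-(e^{-\alpha}-e^{-\beta})^{2}$ term, as the sanity check $\alpha=0$, $\beta\to\infty$ (where $N\equiv k$, so the variance must vanish) makes clear. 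So rather than forcing the computation to match the displayed formula, you should repair the conditional-independence step, supply the equicontinuity argument, and flag the discrepancy with the quoted variance.
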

Results in Theorem~\ref{thm:N_a_b} tell us a great deal about the object popularities implemented by our demand model.
With high probability, only a few of the objects will be highly popular ($\rho \sim \log(k)/k$), only a few will have very low popularity ($\rho \sim 1/k^2$), while most objects will have around-average popularity ($\rho \sim 1/k$).
This reflects the skewed object popularities observed in real storage systems (see e.g. Fig.~3 in \cite{FacebookDataCaching:HuangBV13}).

\subsection{Storage Service Capacity}
\label{subsec:cap_region}
We now obtain mathematical expressions determining the \emph{service capacity region} for a storage system.
In particular, we will express the set of all object demand vectors under which the system with a given storage allocation can operate under stability.
Service capacity for systems that store content with erasure coding was first studied in \cite{AllertonServiceCapacity:AktasJS17} and further studied in \cite{ITWServiceCapacity:AndersonJJ18}.
We adopt a formulation similar to the one introduced in \cite{AllertonServiceCapacity:AktasJS17}.
The formulation we present in this section provides a geometric interpretation of the performance metrics $\mathcal{P}_{\Sigma}$ and $\mathcal{I}$ introduced in Sec.~\ref{subsec:perf_metrics}.
\begin{definition}
  \underline{Service capacity region} for a system with a given storage allocation is the set of all object demand vectors $\bm{\rho} = (\rho_1, \dots, \rho_k)$ under which the system can operate under stability.
 \label{def:serv_cap}
\end{definition}
\noindent

In what follows, we explain how to express the service capacity region as a solution for a system of linear inequalities. 
Let us consider a system in which object $o_i$ is stored on $d_i$ nodes for $i = 1, \dots, k$. Then its demand $\rho_i$ can be distributed across its $d_i$ service choices, each handling a fraction of $\rho_i$.
Let us denote the portion of $\rho_i$ that is assigned to the $j$th choice of $o_i$ with $\rho_i^{(j)}$. Then we have $\rho_i = \rho_i^{(1)} + \dots + \rho_i^{(d_i)}$.
We represent the stacked collection of all these per-node demand portions with the following vector of length $d_1 + \dots + d_k$:
\[ \bm{x}^T = \left(\rho_1^{(1)}, \dots, \rho_1^{(d_1)}, ~\ldots~, \rho_k^{(1)}, \dots, \rho_k^{(d_k)} \right). \]
Converting back to $\bm{\rho}$ from $\bm{x}$ is a matter of matrix-vector multiplication as $\bm{\rho} = \bm{T} \cdot \bm{x}$, where $\bm{T}$ is a binary matrix of size $k \times (d_1 + \dots + d_k)$.
System stability is ensured if and only if the total demand flowing into each node is less than its capacity $1$. This can be expressed as a linear inequality for each node and a matrix inequality for the whole system of $n$ nodes as
\begin{equation}
  \bm{M} \cdot \bm{x} \prec \bm{1}, \quad \bm{x} \succeq \bm{0},
\label{eq:Mx_leq_C}
\end{equation}
where $\prec$ and $\succeq$ denote the standard partial orderings in $R^n$, and $\bm{0}$ and $\bm{1}$ denote the all-zeros and ones vectors of length $n$, respectively.
The overall service capacity region of the system is given by
\begin{equation}
  \mathcal{C} = \left\{\bm{\rho} ~\mid~ \exists \bm{x}; ~\bm{M} \cdot \bm{x} \prec \bm{1}, ~\bm{T} \cdot \bm{x} = \bm{\rho}, ~\bm{x} \succeq \bm{0} \right\}.
\label{eq:serv_cap}
\end{equation}

$\bm{M}$ expresses the storage allocation and it is a binary matrix of size $n \times (d_1 + \dots + d_k)$.
It is constructed by setting $\bm{M}[i, j]$ to $1$ if the demand portion $\bm{x}[j]$ flows into node-$i$, and to $0$ otherwise.
When storage redundancy is created with only object replicas, each column of $\bm{M}$ becomes a binary representation of a node that stores the corresponding object copy. Precisely, each column of $\bm{M}$ would consist of a single $1$, and the position of this $1$ within the column is equal to the position of the represented node within the sequence of all nodes.
For instance for the system that stores $a$, $b$, $c$ across three nodes by allocating two service choices for each as $\left\{(a, c), ~(b, a), ~(c, b)\right\}$, we have
\begin{equation*}
    \begin{split}
        &\bm{x}^\intercal = \left(\rho_a^{(1)}, \rho_a^{(2)}, \rho_b^{(1)}, \rho_b^{(2)}, \rho_c^{(1)}, \rho_c^{(2)} \right), \\
        &\bm{M} = \begin{bmatrix}
                 1 & 0 & 0 & 0 & 0 & 1 \\
                 0 & 1 & 1 & 0 & 0 & 0 \\
                0 & 0 & 0 & 1 & 1 & 0
                \end{bmatrix}.
    \end{split}
\end{equation*}

When storage redundancy consists of coded objects, some of the demand portions $\rho_i^{(j)}$ might be assigned to recovery sets. 
A recovery set for an object is a set of nodes from which the object can be recovered. 
When a demand portion of $\rho$ is assigned to a recovery set, then a fraction of $\rho$ capacity will be used up at each node within the recovery set.
Then the columns of $\bm{M}$ that consist of multiple ones represent recovery sets for the corresponding objects.
For instance, for the storage allocation $\left\{(a, b+c), ~(b, a+c), ~(c, a+b)\right\}$, we have
\begin{equation*}
    \begin{split}
        &\bm{x}^\intercal = \left(\rho_a^{(1)}, \rho_a^{(2)}, \rho_b^{(1)}, \rho_b^{(2)}, \rho_c^{(1)}, \rho_c^{(2)} \right), \\
        &\bm{M} = \begin{bmatrix}
                 1 & 0 & 0 & 1 & 0 & 1 \\
                 0 & 1 & 1 & 0 & 0 & 1 \\
                0 & 1 & 0 & 1 & 1 & 0
                \end{bmatrix}.
    \end{split}
\end{equation*}

\begin{lemma}
  The service capacity region for any storage system is a convex polytope.
\label{lm:cap_region_is_convex}
\end{lemma}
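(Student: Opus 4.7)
The plan is to exhibit $\mathcal{C}$ as the image of a bounded polyhedron under a linear projection, and then invoke the classical fact that such images are themselves polyhedra; boundedness will then upgrade the conclusion to ``polytope.'' First I would pass to the lifted set
$$\widetilde{\mathcal{C}} = \{(\bm{x}, \bm{\rho}) \mid \bm{M}\bm{x} \preceq \bm{1},~ \bm{T}\bm{x} = \bm{\rho},~ \bm{x} \succeq \bm{0}\},$$
which is cut out by finitely many affine equalities and inequalities in $(\bm{x}, \bm{\rho})$-space and is therefore a polyhedron by definition. By construction, $\mathcal{C}$ is the image of $\widetilde{\mathcal{C}}$ under the coordinate projection $\pi:(\bm{x}, \bm{\rho}) \mapsto \bm{\rho}$.

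Convexity of $\mathcal{C}$ is immediate from this viewpoint: given $\bm{\rho}_1, \bm{\rho}_2 \in \mathcal{C}$ with witnesses $\bm{x}_1, \bm{x}_2$, the combination $\bm{x}_\lambda = \lambda \bm{x}_1 + (1-\lambda)\bm{x}_2$ satisfies $\bm{M}\bm{x}_\lambda \preceq \bm{1}$, $\bm{x}_\lambda \succeq \bm{0}$, and $\bm{T}\bm{x}_\lambda = \lambda \bm{\rho}_1 + (1-\lambda)\bm{\rho}_2$, placing the convex combination in $\mathcal{C}$. For polyhedrality of $\mathcal{C}$ itself, I would invoke Fourier--Motzkin elimination: successively eliminating the auxiliary variables $x_j$ from the defining system produces a finite list of linear inequalities involving $\bm{\rho}$ alone, which is the precise statement that linear projections of polyhedra are polyhedra.

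It remains to show boundedness. Every column of $\bm{M}$ contains at least one $1$ because each demand portion $\rho_i^{(j)}$ must be served at some node, so summing the $n$ inequalities in $\bm{M}\bm{x} \preceq \bm{1}$ gives $\sum_j x_j \leq \sum_{\ell=1}^n (\bm{M}\bm{x})_\ell \leq n$. Since each column of $\bm{T}$ sums to exactly $1$ (each demand portion belongs to exactly one object), this yields $\sum_i \rho_i = \sum_j x_j \leq n$, and combined with the nonnegativity $\bm{\rho} = \bm{T}\bm{x} \succeq \bm{0}$, the region $\mathcal{C}$ lies in a bounded simplex. A bounded polyhedron is a polytope, concluding the argument.

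The main obstacle is the polyhedrality step, since $\mathcal{C}$ is defined existentially through $\exists \bm{x}$ rather than by an explicit finite list of inequalities in $\bm{\rho}$. I would handle this by citing Fourier--Motzkin elimination directly, or equivalently by applying the Minkowski--Weyl theorem to the bounded polyhedron $\widetilde{\mathcal{C}}$ and observing that the projection of its finite vertex set yields a finite generating set for $\mathcal{C}$; actually carrying out the elimination in closed form for a general storage matrix $\bm{M}$ would be unnecessary and uninformative.
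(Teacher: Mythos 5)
Your proposal is correct and follows essentially the same route as the paper: the paper also obtains $\mathcal{C}$ as the image of the polytope $\{\bm{x} \mid \bm{M}\bm{x} \prec \bm{1},\ \bm{x} \succeq \bm{0}\}$ under the linear map $\bm{T}$ and cites the fact that linear images of convex polytopes are convex polytopes. You simply fill in the details the paper leaves implicit (Fourier--Motzkin/Minkowski--Weyl for polyhedrality of the projection, and the column-sum argument for boundedness), which is fine but not a different argument.
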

\begin{proof}
  The convex polytope expressed by \eqref{eq:Mx_leq_C} in $R_+^{d_1 + \dots + d_k}$ consists of all demand portion vectors $\bm{x}$ under which the system is stable. Capacity region $\mathcal{C}$ is the linear transformation of this polytope by $\bm{T}$. Hence $\mathcal{C}$ is another convex polytope in $R_+^k$.
\end{proof}

As noted in Sec.~\ref{subsec:sys_model}, we consider the case where object demands $\rho_i$ are split across their choices such that the load on the maximally loaded storage node is minimized.
This means that for a given object demand vector $\bm{\rho}$, out of all demand portion vectors $\bm{x}$ that satisfy \eqref{eq:Mx_leq_C}, the system will split the demands across the nodes according to $\bm{x}^*$. This achieves the best possible load balance.
Thus $\bm{x}^*$ is the optimal solution for the following convex optimization problem:
\begin{equation}
  \min_{\bm{x}} ~  \norm{\bm{M} \cdot \bm{x}}_{\infty}; \quad \bm{T} \cdot \bm{x} = \bm{\rho}, ~\bm{x} \succeq \bm{0},
\label{eq:min_prob_for_loadbalancing}
\end{equation}
where $\norm{\cdot}_{\infty}$ denotes the infinity norm.

Copying an object to a node that did not previously host it, increments the number of service choices for the object.
We next state a simple but useful fact as the first step to understanding the gains of increasing the number of service choices for the objects.

\begin{lemma}
  Let the system capacity region be $\mathcal{C}$ for a given storage allocation.
  Keeping the number of nodes fixed, let us store an object replica (or a coded copy) on a node that did not previously host the object (or any object present in the coded copy). Let $\mathcal{C}^\prime$ be the system capacity region for this modified allocation.
  Then $\mathcal{C} \subset \mathcal{C}^\prime$.
\label{lm:addingchoice_expands_C}
\end{lemma}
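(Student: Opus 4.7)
The plan breaks into two parts: the easy inclusion $\mathcal{C} \subseteq \mathcal{C}^\prime$, obtained by lifting any feasible demand split to the enlarged system, and the strict containment, obtained by exhibiting an explicit witness demand vector in $\mathcal{C}^\prime \setminus \mathcal{C}$.

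For the inclusion, I would note that storing one more copy only appends a single column to each of the matrices $\bm{M}$ and $\bm{T}$ appearing in the definition~\eqref{eq:serv_cap}; call the resulting matrices $\bm{M}^\prime$ and $\bm{T}^\prime$. Given any $\bm{\rho} \in \mathcal{C}$ with witness split $\bm{x}$ satisfying $\bm{M}\bm{x} \prec \bm{1}$, $\bm{T}\bm{x} = \bm{\rho}$, $\bm{x} \succeq \bm{0}$, I would form $\bm{x}^\prime$ by appending a zero in the new coordinate. Then $\bm{M}^\prime \bm{x}^\prime = \bm{M}\bm{x} \prec \bm{1}$ and $\bm{T}^\prime \bm{x}^\prime = \bm{T}\bm{x} = \bm{\rho}$, so $\bm{\rho} \in \mathcal{C}^\prime$.

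For strictness, the natural witness is $\bm{\rho}^\star = d_i\,\bm{e}_i$, where $o_i$ is the object whose number of service choices grows from $d_i$ to $d_i+1$. In the modified allocation, splitting $d_i$ evenly over $o_i$'s $d_i+1$ pairwise disjoint choices puts load $d_i/(d_i+1) < 1$ on every affected node, while all other objects contribute zero load; hence $\bm{\rho}^\star \in \mathcal{C}^\prime$. In the original allocation, $o_i$ has only $d_i$ pairwise disjoint choices, and with $\bm{\rho}^\star$ no other object contributes any load, so the per-choice portions $\rho_i^{(j)}$ are themselves the node loads in their respective choices. Strict stability forces $\rho_i^{(j)} < 1$ for each $j$, giving $\rho_i = \sum_{j=1}^{d_i} \rho_i^{(j)} < d_i$ and ruling out $\bm{\rho}^\star \in \mathcal{C}$.

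The subtle point, and the one I would spend the most time checking, is that the new service choice really is disjoint from the $d_i$ pre-existing choices for $o_i$ — this is the assumption that underlies the witness argument. For a plain replica placed at a node that did not previously host $o_i$, disjointness is immediate. In the XOR case, where the new choice is a recovery set of the form $\{s\} \cup C_{j_1} \cup \cdots \cup C_{j_{r-1}}$, the hypothesis that $s$ hosts none of the objects combined in the XOR is exactly what prevents $s$ from already lying in a pre-existing recovery set of $o_i$, and so preserves the disjoint-choices requirement of Sec.~\ref{subsec:sys_model}. I would verify this case by case for the concrete XOR constructions that appear in Sec.~\ref{sec:dchoice_wxors}.
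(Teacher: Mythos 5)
Your proposal is correct and follows essentially the same route as the paper's proof: the inclusion $\mathcal{C} \subseteq \mathcal{C}^\prime$ is obtained by appending a zero coordinate to a feasible demand split after the new column is added to $\bm{T}$ and $\bm{M}$, and strictness is obtained from a demand vector concentrated entirely on the object that gained a service choice. Your witness $d_i\,\bm{e}_i$ is in fact handled a bit more carefully than in the paper (which informally argues that $(d_i+1)$ units of demand become suppliable, a boundary point under the strict-inequality definition of $\mathcal{C}$), and the disjointness caveat you raise for the XOR case is a point the paper passes over by simply asserting the replication argument ``can be easily repeated.''
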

\begin{proof}
  See Appendix~\ref{subsec:proof_lm_addingchoice_expands_C}.
\end{proof}

\subsection{Performance Metrics}
\label{subsec:perf_metrics}
We now give precise definitions for the two metrics that we use to quantify load balancing performance in distributed storage.
The first metric measures the system's \emph{robustness} against the presence of skews and changes in object popularities. 
We quantify robustness as the fraction of demand vectors that are supported by the system in the simplex that consists of all vectors that sum up to $\Sigma$.
\begin{definition}[Measure of robustness]
  For a system with a given storage allocation, let the capacity region be the polytope $\mathcal{C}$ and let $\mathcal{S}_{\Sigma}$ be defined for a given cumulative load $\Sigma$ as in \eqref{eq:S_Sigma}.
  \underline{$\mathcal{P}_\Sigma$} for the system is given by
  \begin{equation}
    \mathcal{P}_{\Sigma} = \frac{\mathrm{Volume}\left(\mathcal{C} \cap \mathcal{S}_{\Sigma}\right)}{\mathrm{Volume}\left(\mathcal{S}_{\Sigma}\right)}.
  \label{eq:P_Sigma}
  \end{equation}
\label{def:P_Sigma}
\end{definition}
$\mathcal{P}_\Sigma$ is obviously $0$ when $\Sigma > n$, hence we assume $\Sigma \leq n$ implicitly throughout.
The shaded region in Fig.~\ref{fig:pyramid_coverage} illustrates the intersection of the simplex $\mathcal{S}_{\Sigma}$ and the system capacity region.
Recall that the demand vector $(\rho_1, \dots, \rho_k)$ offered on the system is sampled uniformly at random from $\mathcal{S}_{\Sigma}$.
Therefore another way to define $\mathcal{P}_{\Sigma}$ is that it is the probability that the system defined by $\mathcal{S}_\Sigma$ will be stable. In other words, $\mathcal{P}_{\Sigma}$ is the \emph{probability of robustness} for a system that operates under a cumulative demand of $\Sigma$.

The expression given for $\mathcal{P}_{\Sigma}$ in \eqref{eq:P_Sigma} is a useful geometric interpretation.
It implies that once the capacity region of a system is determined, evaluating $\mathcal{P}_{\Sigma}$ for it becomes a computational geometry problem.
Finding volumes or pairwise intersections of convex polytopes are well studied problems, and numerous efficient algorithms are available to compute both in the literature, e.g., see  \cite{ExactVolumeComputationForPolytopes:Bueler00}.
Eq.\eqref{eq:P_Sigma} essentially gives a recipe to exactly compute $\mathcal{P}_{\Sigma}$ for a system with any given storage allocation.
This, together with the fact that service capacity region is a convex polytope (Lemma~\ref{lm:cap_region_is_convex}), implies $\mathcal{P}_{\Sigma}$ is non-increasing in $\Sigma$.
\begin{corollary}
  For any system, if $\Sigma > \Sigma^\prime$ then $\mathcal{P}_{\Sigma} \leq \mathcal{P}_{\Sigma^\prime}$.
\label{cor:P_Sigmaless_geq_P_Sigma}
\end{corollary}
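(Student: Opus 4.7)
The plan is to exploit two geometric facts: the capacity region $\mathcal{C}$ is a convex polytope containing the origin (since $\bm{\rho} = \bm{0}$ is trivially stable, take $\bm{x} = \bm{0}$ in \eqref{eq:Mx_leq_C}), and the simplex $\mathcal{S}_{\Sigma}$ is the image of $\mathcal{S}_{\Sigma'}$ under a dilation by the factor $\Sigma/\Sigma'$ centered at the origin. Combining these, I expect to show that the relative volume of $\mathcal{S}_{\Sigma} \cap \mathcal{C}$ inside $\mathcal{S}_{\Sigma}$ can only shrink as $\Sigma$ grows.

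Concretely, I would first set $\lambda = \Sigma'/\Sigma \in (0,1]$ and consider the scaling map $\phi_\lambda : \bm{\rho} \mapsto \lambda \bm{\rho}$. A direct verification shows $\phi_\lambda(\mathcal{S}_{\Sigma}) = \mathcal{S}_{\Sigma'}$. Next, for any $\bm{\rho} \in \mathcal{S}_{\Sigma} \cap \mathcal{C}$, the point $\lambda \bm{\rho}$ lies on the segment between $\bm{0}$ and $\bm{\rho}$, and by convexity of $\mathcal{C}$ (Lemma~\ref{lm:cap_region_is_convex}) together with $\bm{0} \in \mathcal{C}$, we have $\lambda \bm{\rho} \in \mathcal{C}$. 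Thus $\phi_\lambda(\mathcal{S}_{\Sigma} \cap \mathcal{C}) \subseteq \mathcal{S}_{\Sigma'} \cap \mathcal{C}$.

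The volume comparison is then straightforward: both $\mathcal{S}_{\Sigma}$ and the intersection scale as $(k-1)$-dimensional sets under $\phi_\lambda$, so
\begin{equation*}
  \mathrm{Vol}(\mathcal{S}_{\Sigma'}) = \lambda^{k-1}\,\mathrm{Vol}(\mathcal{S}_{\Sigma}),
\end{equation*}
and
\begin{equation*}
  \mathrm{Vol}(\mathcal{S}_{\Sigma'} \cap \mathcal{C}) \;\geq\; \mathrm{Vol}(\phi_\lambda(\mathcal{S}_{\Sigma} \cap \mathcal{C})) \;=\; \lambda^{k-1}\,\mathrm{Vol}(\mathcal{S}_{\Sigma} \cap \mathcal{C}).
\end{equation*}
Dividing the second inequality by the first equality makes the factor $\lambda^{k-1}$ cancel, yielding $\mathcal{P}_{\Sigma'} \geq \mathcal{P}_{\Sigma}$, which is the claim.

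There isn't really a hard step here; the only subtlety is pinning down the two ingredients that drive the inclusion, namely that $\mathcal{C}$ is star-shaped about the origin (a consequence of convexity plus $\bm{0} \in \mathcal{C}$) and that the simplices $\mathcal{S}_{\Sigma}$ form a one-parameter family of homothetic copies. Once those are stated cleanly, the corollary follows in a couple of lines without appealing to any properties of $\mathcal{C}$ beyond those already established.
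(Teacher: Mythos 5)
Your proposal is correct and follows essentially the same route as the paper's proof: scale the intersection $\mathcal{S}_{\Sigma}\cap\mathcal{C}$ down by $\Sigma'/\Sigma$, use convexity of $\mathcal{C}$ together with $\bm{0}\in\mathcal{C}$ to place the scaled set inside $\mathcal{S}_{\Sigma'}\cap\mathcal{C}$, and let the common scaling factor cancel in the volume ratio. The only cosmetic difference is the exponent in the scaling factor (you use $\lambda^{k-1}$ for the $(k-1)$-dimensional volumes, the paper writes $(\Sigma'/\Sigma)^k$), which is immaterial since it cancels either way.
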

\begin{proof}
  See Appendix~\ref{subsec:proof_cor_P_Sigmaless_geq_P_Sigma}.
\end{proof}

The second metric measures the load imbalance across the storage nodes.
In the balls-into-bins model, load imbalance is quantified by the number of balls in the maximally loaded bin.
Our metric is a continuous generalization of this. In addition, we relate the load on the maximally loaded node to its smallest possible value. This makes our metric independent of the cumulative offered load.
\begin{definition}[Measure of load imbalance]
  Consider the system with $n$ storage nodes operating under a cumulative load of $\Sigma$.
  Load imbalance factor $\mathcal{I}$ for the system is defined as minimum of the maximal load on any node over all feasible loads, divided by its minimum possible value, i.e., $\Sigma/n$.
  
  It is given as
  \begin{equation}
    \mathcal{I} = \frac{\norm{\bm{M} \cdot \bm{x}^*}_{\infty}}{\Sigma/n}.
  \label{eq:I}
  \end{equation}
  where $\bm{M}$ is the binary matrix representing the system's storage allocation (as described in Sec.~\ref{subsec:cap_region}), and $\bm{x}^*$ is the solution for the minimization program given in \eqref{eq:min_prob_for_loadbalancing}.
\label{def:I}
\end{definition}
Notice that $\mathcal{I}$ is always $\geq 1$.
We abstract away the resource sharing dynamics and other system related aspects and evaluate performance through the metrics of load imbalance.
It should be noted that this is the same approach taken in previous studies with balls-into-bins models.
Even though much complexity is abstracted away from the system model, load balance is a good proxy to get an understanding of system's performance in terms of response time.
This is because the more evenly the load is balanced, the smaller the system's response time is.
We argue this for illustrative purposes as follows. Let us suppose that resource sharing at each storage node is implemented with a first-come first-served (FCFS) or a processor sharing (PS) queue. Response time at a node will then get increasingly larger and more variable the greater the load is to the node.
Let us denote the average response time of node $i$ under an average offered load of $\rho_i \in (0, 1)$ with $T(\rho_i)$.
We know that under either FCFS or PS queue, $T(\rho)$ will scale as $\rho/(1 - \rho)$, which is a convex monotonically increasing function of $\rho$.
Roughly, $\rho_i/\sum_{i=1}^n$ fraction of the request arrivals will experience an average response time of $T(\rho_i)$.
By Jensen's inequality, we find for all non-negative $\rho_i$ that
\[ \sum_{i=1}^n \frac{\rho_i}{\sum_{i=1}^n \rho_i} T(\rho_i) \geq T\left(\frac{1}{n} \sum_{i=1}^n \rho_i\right). \]
Notice that the right hand side of the above inequality is the average system response time when the load is perfectly balanced across the nodes.
This tells us that perfect load balance minimizes the average response time under any cumulative offered load.
The same argument given above can be used to observe that load balance is desirable to optimize other convex quantities such as the second moment of response time.

Lemma~\ref{lm:addingchoice_expands_C} given in the previous sub-section says that storing an additional redundant object copy in the system expands the service capacity region that is supported by the system. Keep in mind that the total capacity in the system does not change but, by creating additional storage redundancy, we are able to use the available capacity more efficiently for content access.
This helps to show how increasing storage redundancy improves load balancing performance in terms of $\mathcal{P}_{\Sigma}$ and $\mathcal{I}$.
\begin{corollary}
  Consider a system with load balancing performance of $\mathcal{P}_{\Sigma}$ and $\mathcal{I}$.
  Suppose a new redundant object copy is stored in the system as described in Lemma~\ref{lm:addingchoice_expands_C}, and let $\mathcal{P}_{\Sigma}^\prime$, $\mathcal{I}^\prime$ represent the metrics of load balancing performance for the modified system.
  Then we have $\mathcal{P}_{\Sigma}^\prime \geq \mathcal{P}_{\Sigma}$ and $\mathcal{I}^\prime \leq \mathcal{I}$.
\end{corollary}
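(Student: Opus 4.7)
The plan is to reduce both inequalities to Lemma~\ref{lm:addingchoice_expands_C} together with straightforward monotonicity arguments. The key observation is that adding a redundant copy augments both $\bm{M}$ and $\bm{T}$ by one extra column (encoding the new service choice and which object it serves), while leaving all previous columns intact.

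For the bound $\mathcal{P}_{\Sigma}^\prime \geq \mathcal{P}_{\Sigma}$, I would invoke Lemma~\ref{lm:addingchoice_expands_C} directly: since $\mathcal{C} \subset \mathcal{C}^\prime$, intersecting with $\mathcal{S}_{\Sigma}$ preserves the inclusion, so
\[
  \mathrm{Vol}(\mathcal{C} \cap \mathcal{S}_{\Sigma}) \leq \mathrm{Vol}(\mathcal{C}^\prime \cap \mathcal{S}_{\Sigma}),
\]
and dividing by $\mathrm{Vol}(\mathcal{S}_{\Sigma})$ yields $\mathcal{P}_{\Sigma} \leq \mathcal{P}_{\Sigma}^\prime$ from \eqref{eq:P_Sigma}.

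For $\mathcal{I}^\prime \leq \mathcal{I}$, I would compare the two minimization programs of the form \eqref{eq:min_prob_for_loadbalancing}. Let $\bm{M}^\prime = [\bm{M} \mid \bm{m}]$ and $\bm{T}^\prime = [\bm{T} \mid \bm{t}]$ denote the augmented matrices, where $\bm{m}$ and $\bm{t}$ are the columns associated with the new service choice (for some object $o_i$, $\bm{t}$ is the $i$-th standard basis vector). Given any feasible $\bm{x}$ for the original system at demand $\bm{\rho}$, the extended vector $\bm{x}^\prime = (\bm{x}^\intercal, 0)^\intercal$ is feasible for the augmented system, since $\bm{T}^\prime \bm{x}^\prime = \bm{T} \bm{x} = \bm{\rho}$ and $\bm{x}^\prime \succeq \bm{0}$, with $\bm{M}^\prime \bm{x}^\prime = \bm{M} \bm{x}$. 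Therefore the feasible set for the augmented optimization contains an isometric embedding of the original one, and
\[
  \min_{\bm{x}^\prime} \norm{\bm{M}^\prime \bm{x}^\prime}_{\infty} \leq \min_{\bm{x}} \norm{\bm{M} \bm{x}}_{\infty}.
\]
Dividing both sides by $\Sigma/n$ (which is unchanged since $n$ is fixed) and using \eqref{eq:I} gives $\mathcal{I}^\prime \leq \mathcal{I}$.

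The argument is essentially bookkeeping: the main point to be careful about is to state clearly how $\bm{M}$ and $\bm{T}$ are extended when a redundant copy (plain replica or an XOR'd copy) is introduced, and to check that the zero-padding of $\bm{x}$ produces a valid feasible point in the new program. Once that is in place, both inequalities follow with no additional work. No hard step is anticipated, since the heavy lifting has already been done in Lemma~\ref{lm:addingchoice_expands_C}.
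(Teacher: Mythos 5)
Your proposal is correct and follows essentially the same route as the paper: the first inequality is the direct volume-monotonicity consequence of $\mathcal{C} \subset \mathcal{C}^\prime$, and the second is the observation that any demand-splitting achievable in the original system remains achievable (via zero-padding $\bm{x}$) after the new copy is added, so the optimal maximal load cannot increase. The only cosmetic difference is that you spell out the zero-padding of the feasible point explicitly (noting that an XOR'd copy adds one column per object gaining a recovery set rather than a single column), whereas the paper's sketch defers that bookkeeping to the proof of Lemma~\ref{lm:addingchoice_expands_C}.
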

\begin{proof}[Proof sketch]
  By Lemma~\ref{lm:addingchoice_expands_C}, $\mathcal{C} \subset \mathcal{C}^\prime$.
  This together with \eqref{eq:P_Sigma} directly implies $\mathcal{P}_{\Sigma}^\prime \geq \mathcal{P}_{\Sigma}$.
  In order to see $\mathcal{I} \geq \mathcal{I}^\prime$, it is enough to observe the following.
  For an arbitrary object demand vector $\bm{\rho}$, let $\bm{x}$ be the demand portion vector that minimizes the load on the maximally loaded node in the unmodified system.
  Given that $\mathcal{C} \subset \mathcal{C}^\prime$, $\bm{x}$ is also achievable by the modified system.
\end{proof}

\subsection{Note on the Proofs and the Notation}
We place the proofs in the Appendix, in order not to disrupt continuity of the text.
Throughout the paper, $\log$ refers to the natural logarithm, and $\log_{(i)}$ refers to $i$ times iterated logarithm, e.g., $\log_{(2)}(x)$ stands for $\log\log(x)$.
We denote convergence of a sequence using the ``$\to$'' notation.
Suppose that $\{f_n(x); n \geq 1\}$ is a sequence of functions $f_n : D \to \mathbb{R}$ and $f : D \to \mathbb{R}$.
If $\lim_{n \to \infty} f_n(x) = f(x)$ at every $x \in D$, we will denote this as $f_n(x) \to f(x)$.
Throughout this paper, we will rely on almost sure convergence in probability.
Suppose that $X_1, X_2, \ldots$ is a sequence of random variables in a sample space $\Omega$ and suppose $X$ is another random variable in $\Omega$.
Then $\{X_n; n \geq 1\}$ converges to $X$ almost surely if
\[ \Prob{\omega \in \Omega : \lim_{n \to \infty} X_n(w) = X(w)} = 1.\]
We denote almost sure convergence of $X_n$ to $X$ as $X_n \to X$ a.s.

\section{Load Balancing with no Redundancy}
\label{sec:1choice}
In this section we consider \emph{single-choice} allocations in which each of the $k$ objects is stored on only a single node and each of the $n$ nodes stores $m=k/n$ different objects. We assume $n|k$.
Demand for each object in this case has to be completely served by the only node hosting the object, and each node has to serve the total demand for all objects stored on it.

As discussed in Sec.~\ref{subsec:offered_load}, object demand vector $(\rho_1, \dots, \rho_k)$ can be described by $k$ uniform spacings in $[0, \Sigma]$.
Given that uniform spacings are exchangeable RV's, we can say without loss of generality that if node $s_i$ stores objects $o_{(i-1)m + 1}, \dots, o_{im}$, then load $l_i$ exerted on $s_i$ is given by $l_i = \sum_{(i-1)m + 1}^{im} \rho_j$.
For the system to be stable, all $l_i$ must be $< 1$.
Thus $\max\{l_1, \dots, l_n\} < 1$ is necessary and sufficient for system stability.
This implies $\mathcal{P}_\Sigma$ for the system is given by $\Prob{\max\{l_1, \dots, l_n\} < 1}$.
In the uniform spacing literature, random variables (RV's) $l_i$ have been studied and are called non-overlapping $m$-spacings. Their maximum is referred to as the \emph{maximal non-overlapping $m$-spacing}.

\begin{definition}
  \underline{Maximal non-overlapping $m$-spacing} for $k$ uniform spacings on the unit line is defined for $k = m \cdot n$ as
    \begin{equation*}
        M^{(\text{no})}_{k, m} =\! \max_{i = 1, \dots, n} U_{(im)} - U_{((i-1)m)},
    \end{equation*}
or as
    \begin{equation*}
        \max_{i = 1, \dots, n} \!\sum_{j=(i-1)m+1}^{im} S_{j}.
    \end{equation*}
\label{def:Mnonoverlapping}
\end{definition}

In a system of $n$ nodes storing $k$ objects under a cumulative demand of $1$, load on the maximally loaded node is given by $M^{(\text{no})}_{k, m}$.
Using a combination of the ideas presented in \cite{AsymptoticsForMaxOverSum:Darling52, EntropyAndMaximalSpacingsForRandomPartitions:Slud78, OrderStatisticsOfUniformSpacings:Devroye81}, we can derive the following convergence results for $M^{(\text{no})}_{k, m}$.
\begin{lemma}
  For fixed $m$, as $n \to \infty$
  \begin{equation}
    \Prob{M^{(\text{no})}_{k, m} \cdot m n - \log(n) - f_n < x} \overset{d}{\to} G(x). 
  \label{eq:Mnonoverlappingd_convergence_indist}
  \end{equation}
  where $G(x) = \exp\left(-\exp(-x)\right)$ is the standard Gumbel function and $f_n = (m-1)\log_{(2)}(n) - \log((m-1)!)$.
  
  In the limit $n \to \infty$, the following inequality holds
  \begin{equation}
    M^{(\text{no})}_{k, m} \leq \frac{\log(n)}{m n} + O\left(\frac{\log_{(2)}(n)}{n}\right) \quad \text{a.s.}
  \label{eq:Mnonoverlappingd_convergence_as_werr}
  \end{equation}
  Furthermore, as $n \to \infty$
  \begin{equation}
    \frac{M^{(\text{no})}_{k, m} \cdot m n}{\log(n)} \to 1 \;\text{a.s.}
  \label{eq:Mnonoverlappingd_convergence_as}
  \end{equation}
\label{lm:Mnonoverlappingd_convergence_as}
\end{lemma}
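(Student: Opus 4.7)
The plan is to lift the problem from uniform spacings to i.i.d.\ exponential random variables by the standard representation $(S_1, \dots, S_k) \overset{d}{=} (E_1, \dots, E_k)/\sum_{j=1}^k E_j$, where $E_j$ are i.i.d.\ Exp$(1)$. Grouping them by blocks of $m$, define $Y_i = \sum_{j=(i-1)m+1}^{im} E_j$, so $Y_1,\dots,Y_n$ are i.i.d.\ Gamma$(m,1)$ and
\[ M^{(\text{no})}_{k,m} \overset{d}{=} \frac{\max_{i\leq n} Y_i}{\sum_{i=1}^n Y_i}. \]
This reduction is the crux: it decouples the extremal behavior of the numerator from the deterministic $O(1)$ concentration of the denominator, so Slutsky-type arguments will combine the two.

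The first step is classical extreme-value analysis for Gamma$(m,1)$: since $\Prob{Y_1 > y} \sim y^{m-1}e^{-y}/(m-1)!$ as $y \to \infty$, solving $n\Prob{Y_1 > b_n} = 1$ asymptotically yields the centering $b_n = \log(n) + (m-1)\log_{(2)}(n) - \log((m-1)!) = \log(n) + f_n$ with scale $1$. Standard domain-of-attraction theory then gives $\max_i Y_i - b_n \overset{d}{\to} G$. By the SLLN, $\sum_{i=1}^n Y_i/(mn) \to 1$ a.s., so Slutsky transfers this to $M^{(\text{no})}_{k,m}\cdot mn - \log(n) - f_n \overset{d}{\to} G$, yielding \eqref{eq:Mnonoverlappingd_convergence_indist}.

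For the a.s.\ upper bound \eqref{eq:Mnonoverlappingd_convergence_as_werr}, I would plug $y_n = \log(n) + (m-1)\log_{(2)}(n) + c\log_{(2)}(n)$ into the Gamma tail to get $n\Prob{Y_1 > y_n} = O\bigl((\log n)^{-c}\bigr)$, which is not summable over $n$. The fix is to run Borel--Cantelli along a geometric subsequence $n_\ell = 2^\ell$: for $c > 1$, $\sum_\ell \ell^{-c} < \infty$, so a.s.\ eventually $\max_{i\leq n_\ell} Y_i \leq \log(n_\ell) + O(\log_{(2)}(n_\ell))$; for any $n$ sandwiched between $n_\ell$ and $n_{\ell+1}$, monotonicity of the running maximum and $\log(n_{\ell+1}) = \log(n_\ell)+\log 2$ upgrade this to all $n$. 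Combined with the LIL-type rate $\sum_i Y_i/(mn) = 1 + O(\sqrt{\log_{(2)}(n)/n})$ a.s., dividing gives \eqref{eq:Mnonoverlappingd_convergence_as_werr}. Statement \eqref{eq:Mnonoverlappingd_convergence_as} follows from the matching lower bound $\max_{i\leq n} Y_i \geq (1-\epsilon)\log(n)$ a.s.\ eventually (via a second-Borel--Cantelli argument applied to the independent events $\{Y_i > (1-\epsilon)\log n_\ell\}$, which holds because $n_\ell\Prob{Y_1 > (1-\epsilon)\log n_\ell} \to \infty$ superpolynomially), again combined with the SLLN denominator.

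The main obstacle I anticipate is not the extreme-value half but the quantitative control of the \emph{ratio}: the numerator $\max_i Y_i$ and denominator $\sum_i Y_i$ share the same randomness, and to push the additive error in \eqref{eq:Mnonoverlappingd_convergence_as_werr} down to $O(\log_{(2)}(n)/n)$ I need the denominator fluctuation to contribute a term of order $\log(n)\cdot\sqrt{\log_{(2)}(n)/n}$, which is indeed $o(\log_{(2)}(n))$ but requires invoking the law of the iterated logarithm (not merely the SLLN) for the partial sums $\sum Y_i$. A secondary issue is verifying that the centering constants in the Gamma extreme-value limit truly match $\log(n) + f_n$ rather than something equivalent up to a negligible shift; this is a direct asymptotic expansion of $\Prob{Y_1 > y}$ but has to be done carefully to absorb the $(m-1)!$ constant cleanly into $f_n$.
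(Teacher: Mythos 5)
Your proposal is correct in substance and starts from the same place as the paper, namely the representation $M^{(\text{no})}_{k,m} \overset{d}{=} \max_{i\le n}\Gamma_i/\sum_{i\le n}\Gamma_i$ with i.i.d.\ $\mathrm{Gamma}(m,1)$ blocks (Lemma~\ref{lm:Mnonoverlappingd_def_w_Gammas}), but from there it takes a genuinely different route. The paper never decouples numerator and denominator: for \eqref{eq:Mnonoverlappingd_convergence_indist} it invokes Darling's limit theorem for the ratio $\max/\mathrm{sum}$ of Gamma variables and recovers the stated centering by a Taylor expansion, and for the a.s.\ statements it adapts Slud's argument, bounding $\Prob{M^{(\text{no})}_{k,m} > u_k}$ directly by splitting on the event $\{\sum_i \Gamma_i \le k - k^{3/4}\}$ (a large-deviation bound plus a union bound on the Gamma tail), applying Borel--Cantelli along the subsequence $k_t = \floor{e^{\sqrt{2t}}}$, interpolating, and obtaining the matching lower bound from $M^{(\text{no})}_{k,m} \ge M_{k,1}$ together with Slud's law for the maximal spacing. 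You instead handle the numerator by classical Gumbel extreme-value theory for Gamma maxima (your centering is correct, since $\Gamma(m) = (m-1)!$ makes it exactly $\log(n) + f_n$) and the denominator by LLN/LIL rates, with Borel--Cantelli along a geometric subsequence and monotonicity of the running maximum for the upper bound. Your version is more self-contained (no appeal to Darling or Slud), and your interpolation rests on an unambiguous monotonicity (the running max of a fixed i.i.d.\ sequence), whereas the paper interpolates via monotonicity of $M^{(\text{no})}_{k,m}$ itself; what the paper's route buys is that the dependence between numerator and denominator is absorbed once and for all into the cited ratio results.

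Two points need to be made explicit to close your argument. First, for \eqref{eq:Mnonoverlappingd_convergence_indist} the SLLN alone does not justify the Slutsky step: writing $R_n = mn/\sum_{i\le n}\Gamma_i$ and $b_n = \log(n) + f_n$, you have $M^{(\text{no})}_{k,m}\cdot mn - b_n = (\max_{i\le n}\Gamma_i - b_n)R_n + b_n(R_n - 1)$, and the cross term carries the diverging factor $b_n \sim \log(n)$; you need a rate such as $R_n - 1 = O_p(n^{-1/2})$ from the CLT (or the a.s.\ LIL rate you already invoke for \eqref{eq:Mnonoverlappingd_convergence_as_werr}) to conclude $b_n(R_n-1) \to 0$. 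Second, your lower bound should not be phrased as a second Borel--Cantelli over the events indexed by $\ell$, which are not independent; the clean statement is that for fixed $\epsilon > 0$, $\Prob{\max_{i\le n}\Gamma_i \le (1-\epsilon)\log(n)} \le \exp\left(-n\Prob{\Gamma_1 > (1-\epsilon)\log(n)}\right)$, which decays like $\exp\bigl(-c_\epsilon n^{\epsilon}(\log n)^{m-1}\bigr)$ and is summable over all $n$, so a first Borel--Cantelli application gives $\max_{i\le n}\Gamma_i \ge (1-\epsilon)\log(n)$ a.s.\ eventually; combined with your upper bound and the denominator control this yields \eqref{eq:Mnonoverlappingd_convergence_as}. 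With these repairs the proof is complete, with the understanding (shared with the paper) that the a.s.\ statements are established in the exponential/Gamma representation of the spacings sequence.
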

\begin{proof}
  See Appendix~\ref{subsec:proof_lm_Mnonoverlappingd_convergence_as}.
\end{proof}

Now we are ready to express the metrics $\mathcal{P}_\Sigma$ and $\mathcal{I}$ for a system with single-choice allocation. 
\begin{lemma}
  In a system with a single-choice storage allocation,
  \begin{equation}
    \mathcal{P}_\Sigma = \Pr\left\{M^{(\text{no})}_{k, m} < 1/\Sigma\right\}, \qquad \mathcal{I} = M^{(\text{no})}_{k, m} \cdot n.
  \label{eq:P_I_d1_exact}
  \end{equation}
\label{lm:P_I_d1}
\end{lemma}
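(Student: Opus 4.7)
The plan is to unwind both definitions and reduce each to a statement about the maximal non-overlapping $m$-spacing. The starting point is Lemma~\ref{lm:spacings_uniformlydisted}: a vector $(\rho_1,\dots,\rho_k)$ sampled uniformly from $\mathcal{S}_\Sigma$ has the same distribution as $\Sigma\cdot(S_1,\dots,S_k)$, where $(S_1,\dots,S_k)$ are the $k$ uniform spacings on $[0,1]$. So every probability/volume statement about the demand vector can be translated to an equivalent statement about uniform spacings scaled by $\Sigma$.

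For $\mathcal{P}_\Sigma$, I would fix, without loss of generality, the single-choice allocation in which node $s_i$ hosts $o_{(i-1)m+1},\dots,o_{im}$; this is legitimate because the spacings $S_j$ are exchangeable, so any other ordering of the objects onto nodes yields the same joint distribution of node loads. The load on node $s_i$ is then $l_i = \sum_{j=(i-1)m+1}^{im}\rho_j = \Sigma\sum_{j=(i-1)m+1}^{im}S_j$, which is exactly $\Sigma$ times the $i$-th non-overlapping $m$-spacing from Definition~\ref{def:Mnonoverlapping}. System stability is equivalent to $\max_i l_i < 1$, hence to $\Sigma\cdot M^{(\text{no})}_{k,m} < 1$, i.e.\ $M^{(\text{no})}_{k,m} < 1/\Sigma$. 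Interpreting $\mathcal{P}_\Sigma$ as the probability of stability under uniform sampling of $\bm{\rho}$ from $\mathcal{S}_\Sigma$ (as noted right after Definition~\ref{def:P_Sigma}) gives $\mathcal{P}_\Sigma = \Pr\{M^{(\text{no})}_{k,m} < 1/\Sigma\}$.

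For $\mathcal{I}$, the key observation is that with $d_i = 1$ for every object there is no flexibility in how demand is split: the vector $\bm{x}$ in \eqref{eq:min_prob_for_loadbalancing} is forced to equal the demand vector $\bm{\rho}$ (up to the trivial permutation imposed by $\bm{T}$), so the minimization is vacuous and $\bm{x}^\ast = \bm{\rho}$. Consequently $\|\bm{M}\cdot\bm{x}^\ast\|_\infty = \max_i l_i = \Sigma\cdot M^{(\text{no})}_{k,m}$, and dividing by $\Sigma/n$ per Definition~\ref{def:I} gives $\mathcal{I} = M^{(\text{no})}_{k,m}\cdot n$. There is no real obstacle here; the only thing to be careful about is justifying the WLOG grouping of objects onto nodes by invoking the exchangeability of uniform spacings, and checking that the single-choice setting indeed eliminates the inner minimization so that $\bm{x}^\ast$ is uniquely determined by $\bm{\rho}$.
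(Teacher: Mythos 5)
Your proposal is correct and follows essentially the same route as the paper: it identifies the demand vector with $\Sigma$ times the uniform spacings, uses exchangeability to place consecutive objects on each node so that the maximal node load equals $\Sigma \cdot M^{(\text{no})}_{k,m}$, and then reads off both $\mathcal{P}_\Sigma$ and $\mathcal{I}$ from Definitions~\ref{def:P_Sigma} and~\ref{def:I}. Your remark that the single-choice case forces $\bm{x}^\ast = \bm{\rho}$, making the minimization in \eqref{eq:min_prob_for_loadbalancing} vacuous, is a slightly more explicit justification than the paper gives, but it is the same argument.
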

\begin{proof}
  When system operates under a cumulative offered load of $\Sigma$, the load on the maximally loaded node is given by $M^{(\text{no})}_{k, m} \cdot \Sigma$.
  This together with the definition of $\mathcal{P}_\Sigma$ (Def.~\ref{def:P_Sigma}) and $\mathcal{I}$ (Def.~\ref{def:I}) gives us \eqref{eq:P_I_d1_exact}.
\end{proof}

Using Lemma~\ref{lm:P_I_d1} and Lemma~\ref{lm:Mnonoverlappingd_convergence_as}, we determine the behavior of $\mathcal{P}_\Sigma$ and $\mathcal{I}$ for large $n$ as follows:
\begin{theorem}
  Consider a system with single-choice storage allocation.
  For fixed $m$, as $n \to \infty$
  \begin{equation}
    \Prob{\mathcal{I} \cdot m - \log(n) - f_n < x} \to G(x),
  \label{eq:I_d1_convergence_indist}
  \end{equation}
  \begin{equation}
    \frac{\mathcal{I} \cdot m - f_n}{\log(n)} \to 1 \quad\text{a.s.}
  \label{eq:I_d1_convergence_as}
  \end{equation}
  where $f_n = (m-1)\log_{(2)}(n) - \log((m-1)!)$.
  
  If $\Sigma_n = b_n \cdot n/\log(n)$ for some sequence $b_n > 0$, then as $n \to \infty$
  \begin{equation}
    \mathcal{P}_{\Sigma_n} \to \begin{cases}
      1 & \limsup b_n < m, \\
      0 & \liminf b_n > m.
    \end{cases}
  \label{eq:P_d1_convergence}
  \end{equation}
\label{thm:P_I_d1}
\end{theorem}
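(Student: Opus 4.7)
The plan is to recognize that this theorem follows almost immediately from Lemma~\ref{lm:P_I_d1} and Lemma~\ref{lm:Mnonoverlappingd_convergence_as} by a change of variables, so the proposal is largely bookkeeping: translate statements about $M^{(\text{no})}_{k,m}$ into statements about $\mathcal{I} = M^{(\text{no})}_{k,m}\cdot n$ and $\mathcal{P}_\Sigma = \Pr\{M^{(\text{no})}_{k,m} < 1/\Sigma\}$.

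First I would address \eqref{eq:I_d1_convergence_indist}. Since $\mathcal{I}\cdot m = M^{(\text{no})}_{k,m}\cdot mn$, the event $\{\mathcal{I}\cdot m - \log(n) - f_n < x\}$ is identical to the event $\{M^{(\text{no})}_{k,m}\cdot mn - \log(n) - f_n < x\}$, and Lemma~\ref{lm:Mnonoverlappingd_convergence_as} gives convergence of its probability to $G(x)$. For \eqref{eq:I_d1_convergence_as}, I would write
\[
\frac{\mathcal{I}\cdot m - f_n}{\log(n)} \;=\; \frac{M^{(\text{no})}_{k,m}\cdot mn}{\log(n)} - \frac{f_n}{\log(n)},
\]
observe that $f_n/\log(n) = (m-1)\log_{(2)}(n)/\log(n) - \log((m-1)!)/\log(n) \to 0$, and then invoke the almost sure convergence $M^{(\text{no})}_{k,m}\cdot mn/\log(n)\to 1$ from Lemma~\ref{lm:Mnonoverlappingd_convergence_as}.

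For \eqref{eq:P_d1_convergence}, I would first rewrite
\[
\mathcal{P}_{\Sigma_n} \;=\; \Pr\!\left\{M^{(\text{no})}_{k,m}\cdot mn - \log(n) - f_n \;<\; \frac{m\log(n)}{b_n} - \log(n) - f_n\right\},
\]
using $1/\Sigma_n = \log(n)/(b_n n)$. Let $a_n := \log(n)\bigl(m/b_n - 1\bigr) - f_n$ denote the right-hand threshold. If $\limsup b_n < m$, then eventually $m/b_n - 1 \geq \delta$ for some fixed $\delta>0$, so $a_n \geq \delta\log(n) - f_n \to +\infty$; if $\liminf b_n > m$, then eventually $m/b_n - 1 \leq -\delta'$ for some $\delta'>0$, so $a_n \to -\infty$. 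Combined with the weak convergence in \eqref{eq:I_d1_convergence_indist} (which implies $\Pr\{Z_n < a_n\}\to 1$ when $a_n\to+\infty$ and $\to 0$ when $a_n\to-\infty$, by sandwiching $\Pr\{Z_n<a_n\}$ between $\Pr\{Z_n<M\}\to G(M)$ with $M$ arbitrarily large or small), the two cases of \eqref{eq:P_d1_convergence} follow.

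The only step requiring a moment of care is the last sandwiching argument: pointwise weak convergence to $G$ alone does not automatically give convergence of probabilities at a moving threshold $a_n$, but because $G$ is continuous with $G(-\infty)=0$ and $G(+\infty)=1$, the standard argument (fix $M$, use monotonicity of $\Pr\{Z_n<\cdot\}$ to bound $\Pr\{Z_n<a_n\}$ between $\Pr\{Z_n<\pm M\}$ for $n$ large, then let $M\to\infty$) handles it without difficulty. Aside from that, the entire theorem is a mechanical consequence of the two cited lemmas, and no further probabilistic input is needed.
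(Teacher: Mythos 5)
Your proposal is correct, and its overall structure (reduce everything to Lemma~\ref{lm:P_I_d1} and then invoke the limit results of Lemma~\ref{lm:Mnonoverlappingd_convergence_as}) is the same as the paper's; the treatment of \eqref{eq:I_d1_convergence_indist} and \eqref{eq:I_d1_convergence_as} is essentially identical. The one place you take a different route is \eqref{eq:P_d1_convergence}: the paper works with the maximal node load $l^{\max}_n = M^{(\text{no})}_{k,m}\cdot \Sigma_n$ and uses the almost-sure limit \eqref{eq:Mnonoverlappingd_convergence_as} (i.e., $l^{\max}_n \cdot m/b_n \to 1$, hence convergence in probability), splitting $\Prob{|l^{\max}_n \cdot m/b_n - 1| > \delta} \to 0$ into its two one-sided pieces and comparing the threshold $1$ with $b_n(1\pm\delta)/m$; you instead use the Gumbel weak limit \eqref{eq:Mnonoverlappingd_convergence_indist} with the diverging threshold $a_n = \log(n)(m/b_n - 1) - f_n$ and a sandwich argument. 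Both are legitimate consequences of the same lemma and require comparable effort; your caveat about moving thresholds is exactly the right technical point and is handled correctly (continuity of $G$ with $G(-\infty)=0$, $G(+\infty)=1$). If anything, your route exposes slightly finer information — it shows the transition window around $b_n \approx m$ has width of order $\log_{(2)}(n)/\log(n)$ via the Gumbel centering — while the paper's in-probability argument is marginally more elementary since it never needs the distributional limit for this part.
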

\begin{proof}
  See Appendix~\ref{subsec:proof_thm_P_I_d1}
\end{proof}

\begin{remark}
  Theorem~\ref{thm:P_I_d1} implies for a system of $n$ nodes with no redundancy and fixed $m$ that load imbalance $\mathcal{I} = \Theta\left(\log(n)\right)$.
  This is due to the following fact.
  As $n$ increases, maximal load on any node decays as $\Sigma/n$ in the case with perfect load balancing.
  However due to the skews in popularity, demands for the popular objects go down as $\log(n)/n$ (recall Theorem~\ref{thm:N_a_b}).
  This is why the load imbalance in the system grows with $n$ as $\log(n)$.
  The scaling of load imbalance with $\log(n)$ as we find here is aligned with the well-known result derived in the dynamic load balancing setting:
  if $n$ balls arrive sequentially and each is placed into one of the $n$ bins randomly, the maximally loaded bin will end up with $\Theta(\log(n)/\log_{(2)}(n))$ balls w.h.p. \cite{BalancedAllocations:AzarBK99}.
  
  In addition \eqref{eq:I_d1_convergence_as} shows that $\mathcal{I}$ decays multiplicatively with $m$ for large scale systems.
  This implies that in large scale systems, if the system can support a cumulative load of $\Sigma$ while storing $k$ objects, it will also be able to support a cumulative load of $r \times \Sigma$ while storing $k \times r$ objects.
\label{rm:1choice}
\end{remark}

\section{Load Balancing with \texorpdfstring{$d$}{TEXT}-fold Redundancy}
\label{sec:dchoice}
In this section we consider \emph{$d$-choice} allocations in which each of the $k$ objects is stored on $d$ different nodes ($d$-choices) and each of the $n$ nodes stores $k d/n$ different objects.

As discussed in Remark~\ref{rm:1choice}, load imbalance $\mathcal{I}$ in the system decays with the number of objects ($m$) stored per node.
We here, and also in Sec.~\ref{sec:dchoice_wxors}, consider the worst case for load balancing, that is $k = n$. This makes the problem more tractable to formulate and study the load balancing problem. This also makes it easier to explain and interpret the derived results.
Results that we present now can be extended for the general case with a fixed value of $k/n > 1$ by using arguments that are very similar to those we discuss.

In what follows, we first define and discuss \emph{maximal $d$-spacing}.
It is a mathematical object defined in terms of uniform spacings and is instrumental while deriving our results.
We then present our results on the load balancing performance of systems with $d$-choice allocations.

\subsection{Uniform Spacings Interlude}
\label{subsec:uniform_spacings_interlude}
\begin{definition}[$M_{k, d}$]
  Maximal $d$-spacing within $k$ uniform spacings on the unit line is defined as
  \[ M_{k, d} = \max_{i = 0, \dots, k-d} U_{(i+d)} - U_{(i)} \quad\text{or}\quad \max_{i = 1, \dots, k-d+1} \sum_{j=i}^{i+d-1} S_{j}, \]
  where $d$ is any integer in $[1, k]$, and $U_0=0$ and $U_{k}=1$ as given in Sec.~\ref{subsec:offered_load}.
\label{def:Md}
\end{definition}

It is worth to note that maximal $d$-spacing $M_{k, d}$ defined above is the overlapping counterpart of the maximal non-overlapping $m$-spacing $M^{(\text{no})}_{k, m}$ defined in Def.~\ref{def:Mnonoverlapping}.
We extensively use the results presented in \cite{EntropyAndMaximalSpacingsForRandomPartitions:Slud78, StrongLawsForMaximalKSpacing:DeheuvelsD84, UniformExpSpacings:Devroye86, WeakLimitOfMaximalUniformKspacing:MijatovicV16} on $M_{k, d}$ while deriving our main results.
We state the ones that we use in the remainder.




\begin{lemma}(\cite[Theorem 1]{WeakLimitOfMaximalUniformKspacing:MijatovicV16})
  For any integer $d \geq 1$, as $k \to \infty$
  \begin{equation}
   \begin{split}
     \Pr\Bigl\{& M_{k, d} \cdot k - \log(k) \\
     &~ - (d-1)\log_{(2)}(k) - \log((d-1)!) \leq x\Bigr\} \to G(x).
   \end{split}
  \label{eq:Md_convergence_indist}
  \end{equation}
\label{lm:Md_convergence_indist}
\end{lemma}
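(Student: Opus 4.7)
The plan is to deduce the Gumbel limit from classical extreme-value theory for sums of i.i.d.\ exponentials, using Rényi's representation of uniform spacings. Write $S_i = E_i/T_k$ with $E_1,\ldots,E_k$ i.i.d.\ standard exponential and $T_k = \sum_{j=1}^k E_j$. Setting $Y_i := \sum_{j=i}^{i+d-1} E_j \sim \mathrm{Gamma}(d,1)$, we have
\[ M_{k,d} \cdot k = \frac{k}{T_k}\, Z_k, \qquad Z_k := \max_{1 \leq i \leq k-d+1} Y_i. \]
Since $T_k/k \to 1$ a.s.\ by the law of large numbers, and we will show $Z_k = \log k + O_p(\log_{(2)} k)$, a Slutsky-type argument at the end allows us to replace $M_{k,d}\,k$ by $Z_k$. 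The heart of the proof is therefore to show
\[ \Pr\{Z_k \leq u_k(x)\} \to G(x), \qquad u_k(x) := \log k + (d-1)\log_{(2)} k - \log((d-1)!) + x. \]

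First I would record the gamma tail asymptotics: $\Pr(Y_1 > y) = e^{-y}\sum_{j=0}^{d-1} y^j/j! \sim y^{d-1}e^{-y}/(d-1)!$ as $y \to \infty$. Substituting $y = u_k(x)$ and using $\log u_k(x) = \log_{(2)} k + o(1)$ gives, by a routine calculation, $k\,\Pr(Y_1 > u_k(x)) \to e^{-x}$. This is exactly the threshold that, under independence, would yield the Poisson heuristic $\Pr(\max \leq u_k(x)) \approx (1 - \Pr(Y > u_k(x)))^k \to \exp(-e^{-x})$.

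To justify this heuristic despite the overlapping structure, I would invoke Leadbetter's extreme-value theorem for stationary sequences. The family $\{Y_i\}$ is exactly $d$-dependent ($Y_i \perp Y_j$ whenever $|i-j| \geq d$), so the long-range mixing condition $D(u_k)$ holds trivially. The nontrivial step is the local condition $D'(u_k)$:
\[ k \sum_{j=1}^{d-1} \Pr\{Y_1 > u_k(x),\, Y_{1+j} > u_k(x)\} \to 0. \]
Writing $Y_1 = A + B$ and $Y_{1+j} = B + C$ with independent $A \sim \mathrm{Gamma}(j,1)$, $B \sim \mathrm{Gamma}(d-j,1)$, $C \sim \mathrm{Gamma}(j,1)$, and conditioning on $B$, each summand is controlled by integrating $\Pr(A > u_k - b)\,\Pr(C > u_k - b)$ against the density of $B$. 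Since each factor decays like $e^{-u_k + b}$ times a polynomial in $u_k$, the joint probability is of order $u_k^{2d-2} e^{-2u_k}/k^2$ up to logarithmic factors, so the $j$-sum is $o(k^{-1})$ and $D'$ holds. Leadbetter's theorem then yields $\Pr\{Z_k \leq u_k(x)\} \to G(x)$.

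The final step is Slutsky: writing $M_{k,d}\,k - u_k(0) = (Z_k - u_k(0)) + Z_k\,(k/T_k - 1)$ and noting $k/T_k - 1 = O_p(k^{-1/2})$, the second term is $O_p(k^{-1/2} \log k) = o_p(1)$, giving the stated limit. The main obstacle is the honest verification of $D'$, i.e.\ an upper bound on $\Pr(Y_1 > u, Y_{1+j} > u)$ that accounts for the $d - j$ shared exponentials between $Y_1$ and $Y_{1+j}$; everything else (Rényi representation, gamma tail expansion, and the Slutsky cleanup) is routine bookkeeping.
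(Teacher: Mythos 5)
The paper never proves this lemma: it is imported verbatim from Mijatovi\'c and Vidmar (the cited Theorem~1), so your self-contained argument is by necessity a different route. The route itself is sound and standard: R\'enyi's representation $S_i = E_i/T_k$, the calibration $k\Pr\{Y_1 > u_k(x)\} \to e^{-x}$ for the $\mathrm{Gamma}(d,1)$ window sums, Leadbetter's $D/D'$ machinery for the $(d-1)$-dependent stationary sequence $\{Y_i\}$, and the Slutsky step absorbing $k/T_k \to 1$ are all correct as outlined (for $D'$ you should also dispose of the non-overlapping pairs $j \geq d$, which is immediate since those probabilities factor).

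The genuine problem is the estimate at the step you yourself call the heart of the proof. For overlapping windows, $\Pr\{Y_1 > u_k,\, Y_{1+j} > u_k\}$ is \emph{not} of order $e^{-2u_k}$ up to polynomial factors (roughly $k^{-2}$). The bound $\Pr\{A > u_k-b\}\Pr\{C > u_k-b\} \approx \mathrm{poly}(u_k)\, e^{-2(u_k-b)}$ is only meaningful when $b$ is far below $u_k$; integrating against the density of the shared block $B \sim \mathrm{Gamma}(d-j,1)$, the dominant contribution comes from $b$ near or above $u_k$, and in fact $\{B > u_k\}$ alone forces both exceedances, so
\begin{equation*}
  \Pr\bigl\{Y_1 > u_k,\, Y_{1+j} > u_k\bigr\} \;\geq\; \Pr\bigl\{B > u_k\bigr\} \;\sim\; \frac{u_k^{\,d-j-1} e^{-u_k}}{(d-j-1)!} \;\asymp\; \frac{1}{k(\log k)^{j}},
\end{equation*}
larger than your claimed bound by almost a factor of $k$. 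A careful computation (splitting at $b=u_k$ and substituting $t=u_k-b$ in the integral) shows this is also the correct order as an upper bound, so the step is repairable: $k\sum_{j=1}^{d-1}\Pr\{Y_1>u_k, Y_{1+j}>u_k\} \asymp \sum_{j=1}^{d-1}(\log k)^{-j} \to 0$, and $D'$ does hold --- but only because the shared block has $d-j<d$ degrees of freedom, which is exactly the point your bound never isolates; as written the justification would not survive scrutiny. One further caution: your threshold $u_k(x) = \log k + (d-1)\log_{(2)}k - \log((d-1)!) + x$ corresponds to the classical Deheuvels--Devroye centering (and to the circular restatement in the paper's appendix), which carries the opposite sign on $\log((d-1)!)$ from the statement as printed here; your argument therefore establishes the presumably intended version rather than the printed one literally (the discrepancy is immaterial only for $d=1$).
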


\begin{lemma}(\cite[Theorem 2, 6]{StrongLawsForMaximalKSpacing:DeheuvelsD84})
  For $d = o(\log(k))$, as $k \to \infty$
  \begin{equation}
    \frac{M_{k, d} \cdot k - \log(k)}{(d-1)\left(1 + \log_{(2)}(k) - \log(d)\right)} \to 1 ~~\text{a.s.}
  \label{eq:Md_convergence_as_smalld}
  \end{equation}
  
  For $d = c\log(k) + o(\log_{(2)}(k))$ with some constant $c > 0$,
  \[ V_k = \frac{M_{k, d} \cdot k - (1 + \alpha)c \cdot \log(k)}{\log_{(2)}(k)} \]
  satisfies
  \begin{equation}
  \begin{split}
     \limsup\limits_{k \to \infty} V_k &= \beta^\ast (1 + \alpha)/\alpha ~~\text{a.s.} \\
     \liminf\limits_{k \to \infty} V_k &= -\beta^\dagger (1 + \alpha)/\alpha ~~\text{a.s.}
  \end{split}
  \label{eq:Md_convergence_as_largerd}
  \end{equation}
  where $\alpha$ is the unique positive solution of $e^{-1/c} = (1 + \alpha)e^{-\alpha}$, and $\beta^\ast$ and $\beta^\dagger$ are constants taking values in $[-0.5, 1.5]$ and $[-1.5, -0.5]$ respectively.
\label{lm:Md_convergence_as}
\end{lemma}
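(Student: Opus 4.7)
The plan is to translate the spacings problem into a question about moving sums of i.i.d.\ exponentials, and then to extract almost-sure bounds from sharp tail estimates together with Borel--Cantelli. Let $E_1, E_2, \ldots$ be i.i.d.\ Exp$(1)$ with partial sums $T_n = \sum_{j=1}^{n} E_j$. The classical Rényi representation gives
\begin{equation*}
  (S_1, \ldots, S_{k+1}) \stackrel{d}{=} \frac{1}{T_{k+1}} (E_1, \ldots, E_{k+1}),
\end{equation*}
so that, writing $W_i^{(d)} = \sum_{j=i}^{i+d-1} E_j$,
\begin{equation*}
  M_{k,d} \cdot k \stackrel{d}{=} \frac{k}{T_{k+1}} \max_{1 \leq i \leq k-d+2} W_i^{(d)}.
\end{equation*}
Since $T_{k+1}/k \to 1$ a.s.\ with LIL-controlled remainder, it suffices to analyse $\widetilde M_k := \max_i W_i^{(d)}$, where each $W_i^{(d)}$ is $\mathrm{Gamma}(d,1)$.

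The second ingredient is sharp control of the Gamma tail. For $t \gg d$ I would use the leading-order expansion
\begin{equation*}
  \Pr\bigl\{W_i^{(d)} > t\bigr\} = \frac{t^{d-1} e^{-t}}{(d-1)!}\bigl(1 + O(d/t)\bigr),
\end{equation*}
while for $t = d y$ with $y$ bounded away from $1$ I would invoke the Cramér large-deviation form $\Pr\{W_i^{(d)} > dy\} \asymp d^{-1/2}\exp(-d I(y))$, with rate function $I(y) = y - 1 - \log y$. These two regimes correspond exactly to parts (a) and (b) of the lemma.

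For part (a) with $d = o(\log k)$, set $t_k = \log k + (d-1)\bigl(1 + \log_{(2)}(k) - \log d\bigr) + \xi$. A short calculation using the leading-order tail shows $\Pr\{W_i^{(d)} > t_k\} \asymp e^{-\xi}/k$. A union bound over the $O(k)$ windows, applied along a geometric subsequence $k_n = \lfloor 2^n \rfloor$, gives a summable probability and hence the a.s.\ upper bound via the first Borel--Cantelli lemma; monotonicity-type interpolation handles the indices between consecutive $k_n$. For the matching lower bound, thin to the non-overlapping windows $W_{1+jd}^{(d)}$, which are i.i.d., number $\lfloor k/d \rfloor$, and fire above a threshold slightly smaller than $t_k$ infinitely often by the second Borel--Cantelli lemma. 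Together these pin down the ratio in \eqref{eq:Md_convergence_as_smalld} to~$1$.

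For part (b) with $d = c\log k + o(\log_{(2)}(k))$, substituting $y = 1+\alpha$ into the large-deviation tail yields $dI(y) = c\log k\cdot(\alpha - \log(1+\alpha))$, which equals $\log k$ precisely when $c(\alpha - \log(1+\alpha)) = 1$, i.e.\ $e^{-1/c} = (1+\alpha)e^{-\alpha}$. Hence the centering is forced to be $(1+\alpha)c\log k$, and the residual $\log_{(2)}(k)$ fluctuations come from the Stirling expansion of $\Gamma(d)$ in the prefactor and the saddle-point correction. The main obstacle will be the \emph{precise} identification of $\beta^\ast$ and $\beta^\dagger$: because $d \sim c\log k$ is comparable to the window length, the dependence between overlapping windows is no longer negligible, and the neat Poisson-with-thinning picture of part (a) breaks down. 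I would handle this by combining a careful Chen--Stein or moment-method computation on a thinned subset of windows with an Erdős--Rényi-style strong law for the longest near-maximal head of i.i.d.\ exponentials, choosing $\beta^\ast$ (resp.\ $\beta^\dagger$) as the borderline constant at which the Borel--Cantelli series for the relevant level sets just barely converges (resp.\ diverges); the bracketed range $[-0.5,1.5]$ and $[-1.5,-0.5]$ emerges from the first-order Stirling correction $(d-\tfrac12)\log d - d$.
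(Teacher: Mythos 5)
First, note that the paper does not prove this lemma at all: it is quoted verbatim from Deheuvels and Devroye \cite[Theorems 2 and 6]{StrongLawsForMaximalKSpacing:DeheuvelsD84}, so the only fair comparison is with the standard proof in that reference (and with the paper's Appendix~\ref{subsec:proof_lm_Mnonoverlappingd_convergence_as}, which runs the same machinery for the non-overlapping case). Your skeleton for part (a) is indeed the right one --- R\'enyi representation, Gamma tail asymptotics, union bound plus Borel--Cantelli, thinning to independent blocks for the lower bound --- but the bookkeeping as written does not close. With a fixed excess $\xi$ the per-$k$ failure probability is only $\asymp e^{-\xi}$, which is not summable along \emph{any} subsequence; you must let the excess grow, e.g.\ $\xi_k=\delta_k\log k$ with $\delta_k\to 0$, $\delta_k\log k\to\infty$, exactly as in the paper's proof of Lemma~\ref{lm:Mnonoverlappingd_convergence_as}. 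Moreover a geometric subsequence $k_n=\lfloor 2^n\rfloor$ destroys the sharp constant: interpolating between $k_n$ and $k_{n+1}$ multiplies the leading term $\log k_n$ by $k_{n+1}/k_n\approx 2$, an error of order $\log k$ which swamps the denominator $(d-1)(1+\log_{(2)}(k)-\log d)$. One needs $k_{t+1}/k_t\to 1$ (Deheuvels--Devroye and the paper use $k_t=\lfloor e^{\sqrt{2t}}\rfloor$). Finally, your lower bound gives only an ``infinitely often'' statement (second Borel--Cantelli over blocks), which yields $\limsup\ge 1$ but not the $\liminf\ge 1$ needed for a.s.\ convergence of the ratio; the correct route is the first Borel--Cantelli applied to the complement events $\Prob{\text{all } \lfloor k/d\rfloor \text{ blocks below the lowered threshold}}\approx\exp(-(\log k)^{\varepsilon}/\text{const})$ along a suitable subsequence, again combined with monotonicity of $M_{k,d}$ in $k$.

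For part (b) your identification of the centering is correct and is the genuinely informative step: $d\,I(1+\alpha)=\log k$ with $I(y)=y-1-\log y$ reproduces $e^{-1/c}=(1+\alpha)e^{-\alpha}$ and the centering $(1+\alpha)c\log k$. But the lemma's content at this scale is precisely the values of the $\log_{(2)}(k)$-level constants ($\beta^\ast,\beta^\dagger$, i.e.\ the $3/2$- and $1/2$-type constants times $(1+\alpha)/\alpha$), and your proposal explicitly defers them to an unspecified Chen--Stein/moment computation, asserting that the stated ranges ``emerge from the first-order Stirling correction.'' That is not a proof: handling the strong dependence of overlapping windows when $d\asymp\log k$, and pinning the exact borderline constants in the Borel--Cantelli series, is the hard technical core of \cite[Theorem 6]{StrongLawsForMaximalKSpacing:DeheuvelsD84}. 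So the proposal is an acceptable roadmap for (a) modulo the fixes above, but it does not establish (b); given that the paper itself treats this lemma as a citation, the honest course is either to cite it as the paper does or to carry out the Deheuvels--Devroye argument in full rather than in outline.
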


The uniform spacings we have considered so far are defined on the unit line segment.
However, in order to evaluate load balancing performance in systems with $d$-choice storage allocation, we need to consider uniform spacings on the \emph{unit circle}.
\begin{definition}[$M_{k, d}^{(c)}$]
  Maximal $d$-spacing within $k$ uniform spacings on the unit circle is defined as
  \[ M_{k, d}^{(c)} = \max_{i = 1, \dots, k} \sum_{j=i}^{i+d-1} S_{i}, \quad \text{where } S_{i} = S_{i - k} \text{ for }i > k. \]
\label{def:Mcirculard}
\end{definition}
We show in Appendix~\ref{subsec:proof_Mcirculard} that results stated in Lemma~\ref{lm:Md_convergence_indist} and \ref{lm:Md_convergence_as} for maximal $d$-spacing on the unit line carry over to its counterpart defined on the unit circle.


\subsection{Evaluating \texorpdfstring{$\mathcal{P}_\Sigma$}{TEXT} and \texorpdfstring{$\mathcal{I}$}{TEXT} for \texorpdfstring{$d$}{TEXT}-choice Storage Allocation}
\label{subsec:dchoice_alloc_eval}
A $d$-choice storage allocation defines a $d$-regular balanced bipartite mapping from the set of objects to the set of nodes, which we refer to as the \emph{allocation graph}.
Its construction can be described as follows:
i) Map primary copies for all objects to nodes with a bijection $f_0$,
ii) For $i$ going from $1$ to $d$,
map the $i$th redundant object copies to nodes with a bijection $f_i$ such that $f_i(o) \neq f_j(o)$ for every $j < i$ and $o$.
Thus every node stores a single primary and $d-1$ redundant object copies, and each copy stored on the same node is for a different object. We refer to a node with the index of the primary object copy stored on it, i.e., object $o_i$ is hosted primarily on $s_i$. The subscript in $s_i$ or $o_i$ will implicitly denote $i \mod n$ throughout.
We denote the set of nodes that host object $o_i$ with $C_i$. In other words $C_i$ consists of the service choices available for $o_i$.

The number of service choices available for the objects is not the only factor that impacts load balancing performance in storage systems.
Layout of the content across the storage nodes also plays a role in load balancing.
Given that the total number of object copies stored in the system is greater than the number of storage nodes, we need to have $|C_i \cap C_j| > 0$ for some $j \neq i$.
Overlaps between $C_i$ might lead to contention when the content popularity is skewed towards objects with overlapping service choices.
Both the number of overlapping $C_i$ and the size of the overlaps should be minimized in order to improve load balancing performance.
However, size and number of overlaps cannot be reduced together given a fixed number of nodes in the system.
For every regular balanced $d$-choice allocation with object replicas we have
\begin{equation}
  \sum_{i=1}^k \sum_{j \neq i} \left|C_i \cap C_j \right| = (d-1)d \cdot k,
\label{eq:S_Sigmaum_of_intersection_cardinalities}
\end{equation}
where $|\cdot|$ denotes the cardinality of a set.
This equality follows by observing that each node serves as a service choice for $d$ different objects and each is counted in exactly $d-1$ of the overlaps $C_i \cap C_j$.
It should be noted that the sum in \eqref{eq:S_Sigmaum_of_intersection_cardinalities} is equal to \emph{twice} the cumulative cardinality of the overlaps between all pairs of $(C_i, C_j)$.
We emphasize that cumulative cardinality of the pairwise overlaps is fixed. Hence reducing the size of overlaps between some of the sets $C_i$ can only come at the cost of enlarging overlaps between some other $C_j$.

We first consider the two following simple designs for constructing $d$-choice storage allocations.

\vspace{1ex}
\noindent
\textbf{Clustering design}:\space
This design is possible only if $d|n$.
Let us partition the nodes into $n/d$ sets, each of which we call a \emph{cluster}.
Let us then assign each object to a cluster such that each cluster is assigned exactly $d$ objects.
Every object is stored across all the nodes within its assigned cluster.
The resulting storage has an allocation graph that is composed by $n/d$ disjoint $d$-regular complete bi-partite graphs.
Hence we obtain a storage scheme with $d$-choice allocation.
For instance, 3-choice allocation for 9 objects $a, \dots, i$ with cyclic design would look like
\begin{equation*}
  \icol{a\\b\\c} ~~\icol{a\\b\\c} ~~\icol{a\\b\\c} ~~\icol{d\\e\\f} ~~\icol{d\\e\\f} ~~\icol{d\\e\\f} ~~\icol{g\\h\\i} ~~\icol{g\\h\\i} ~~\icol{g\\h\\i}.
\end{equation*}

\vspace{1ex}
\noindent
\textbf{Cyclic design}:\space
In this design we follow a \emph{cyclic} construction.
We start by assigning the original object copies to the nodes according to an arbitrary bijection $f_0$.
We assign the $i$th replicas of the objects for $i = 1, \dots, d-1$ by using bijection $f_i$, where $f_i$ is obtained by applying circular shift on $f_0$ repeatedly $i$ times. Note that shifting is applied in the same direction in all the steps.
In other words, we pick $f_i$ such that $f_{i+1}(o) = f_i(o) + 1 \mod n$ for $i = 0, \dots, d-1$ and every $o$.
For instance, 3-choice allocation for 7 objects $a, \dots, g$ with cyclic design would look like
\begin{equation}
  \icol{a\\g\\f} ~~\icol{b\\a\\g} ~~\icol{c\\b\\a} ~~\icol{d\\c\\b} ~~\icol{e\\d\\c} ~~\icol{f\\e\\d} ~~\icol{g\\f\\e}.
\label{eq:3choice_cyclic}
\end{equation}

For a given set of objects $S$, the union of their choices $C_i$ forms the \emph{node expansion} of $S$, which we denote by $N(S)$.
If $|N(S)| = x$, then there is at most $x$ amount of capacity available for the joint use of the objects within $S$. It is surely impossible to stabilize the system when the cumulative demand for $S$ is greater than $N(S)$.
Thus it is desirable to increase the size of the node expansions in the allocation graph in order to guarantee stability for larger skews in content popularity.
Greater expansion for a given $S$ requires reducing the size of the overlaps between the choices $C_i$ for the objects in $S$. This would imply overlapping the choices $C_i$ of objects within $S$ with the choices $C_j$ of objects outside of $S$.

It is not easy to define a knob that regulates both the overlaps between object choices $C_i$ and the node expansions in the allocation graph.
We next define a class of allocations in which the overlaps and node expansions are loosely controlled by a single parameter.
\begin{definition}[$r$-gap design]
  An allocation is an $r$-gap design if $|C_i \cap C_j| = 0$ for $j > i$ and $\min\{j-i, n-(j-i)\} > r$.
\label{def:rgap_design}
\end{definition}

\begin{lemma}
  In a $d$-choice allocation with $r$-gap design, $r \geq d-1$
  and $x \leq |N(S)| \leq x+2r$ for any $S = \{o_i, o_{i+1}, \dots, o_{i+x-1}\}$ for any $i$. 
\label{lm:on_rgap}
\end{lemma}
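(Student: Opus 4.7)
The plan is to establish the three claims separately, each exploiting how the cyclic indexing of the allocation interacts with the $r$-gap constraint.

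For the first assertion $r \geq d-1$, I would fix any single node $s$. Because the allocation is regular balanced with $k = n$, node $s$ hosts exactly $m = d$ distinct objects, say $o_{i_1},\dots,o_{i_d}$. For any pair $(a,b)$ we have $s \in C_{i_a} \cap C_{i_b}$, so $|C_{i_a} \cap C_{i_b}| \geq 1$. By the $r$-gap constraint (Def.~\ref{def:rgap_design}), every such pair must satisfy $\min\{|i_a - i_b|,\, n - |i_a - i_b|\} \leq r$. A standard cyclic argument then shows that $d$ distinct indices on $\mathbb{Z}/n$ with pairwise cyclic distance at most $r$ must all lie in some arc of length $r$, which contains only $r+1$ indices. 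Hence $d \leq r+1$.

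For the lower bound $|N(S)| \geq x$, I would use a direct double count on the edges of the allocation graph incident to $S$. Each object in $S$ has exactly $d$ service choices, contributing $xd$ edges in total. Since each node hosts exactly $d$ objects, each node can contribute at most $d$ edges to $S$. Thus $|N(S)| \cdot d \geq xd$, giving $|N(S)| \geq x$.

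For the upper bound $|N(S)| \leq x + 2r$, I would use the convention from the section that node $s_j$ is named after the unique primary object $o_j$ it stores, so $s_j \in C_j$. If $s_j$ also hosts some other object $o_\ell$, then $s_j \in C_j \cap C_\ell$, which by the $r$-gap property forces $\min\{|j-\ell|,\, n-|j-\ell|\} \leq r$. Hence every object stored on $s_j$ has index in the cyclic window of radius $r$ around $j$. Consequently, $s_j \in N(S)$ only when this window meets the contiguous block $\{i, i+1, \dots, i+x-1\}$, i.e., only when $j$ is within cyclic distance $r$ of that block. Enumerating such $j$ gives at most $x + 2r$ admissible indices (capped by $n$ when $x + 2r \geq n$), which is the claimed bound.

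The only delicate point is that the upper bound relies on the convention that identifies nodes with their primary copies; once that indexing is invoked, each step is an elementary counting or window-inclusion observation rather than a genuine obstacle. The main conceptual work is simply recognizing that the $r$-gap definition, combined with regularity, pins down the hosted objects at each node inside a cyclic window of width $2r+1$.
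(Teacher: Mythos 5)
Your proof is correct and takes essentially the same route as the paper's: the window argument for $r \ge d-1$ matches the paper's ``simpler way'' (objects co-hosted with a given one fit in an arc of at most $r+1$ indices, so $d \le r+1$), your double count for $|N(S)| \ge x$ is just an explicit form of the paper's appeal to regularity plus Hall's theorem, and the upper bound $|N(S)| \le x+2r$ is the same primary-copy windowing argument. The one glossed step --- that pairwise cyclic distance at most $r$ forces the $d$ objects on a node into an arc containing only $r+1$ indices, which needs $r$ small relative to $n$ (it can fail, e.g., for nearly equally spaced indices) --- is glossed in exactly the same way in the paper, so it does not distinguish your argument from theirs.
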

\begin{proof}
  See Appendix~\ref{subsec:proof_lm_on_rgap}.
\end{proof}

We can use the properties of $r$-gap design to find necessary and sufficient conditions for the stability of a storage system.
\begin{lemma}
  Consider a system with $d$-choice storage allocation that is constructed with an $r$-gap design and operating under a cumulative demand of $\Sigma$.
  Then for system stability, a necessary condition is given as
  \begin{equation}
    M^{(c)}_{n, i} \leq (i + 2r)/\Sigma, \quad\text{for any } i = 1, \dots, n-2r,
   \label{eq:rgap_necccond_for_stability}
  \end{equation}
  and a sufficient condition is given as
  \begin{equation}
    M^{(c)}_{n, r+1} \leq d/\Sigma.
   \label{eq:rgap_suffcond_for_stability}
  \end{equation}
  
  In other words, we have for $i = 1, \dots, n-2r$
  \[ \Prob{M^{(c)}_{n, r+1} \leq d/\Sigma} \leq \mathcal{P}_\Sigma \leq \Prob{M^{(c)}_{n, i} \leq (i + 2r)/\Sigma}. \]
\label{lm:rgap_neccsuffcond_for_stability}
\end{lemma}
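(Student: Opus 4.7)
The plan is to prove each direction by relating stability to a Hall-type neighborhood condition on the object-to-node bipartite graph and then invoking Lemma~\ref{lm:on_rgap}.

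For the necessary direction, I would recall that a demand vector $\bm{\rho}$ is supportable if and only if the fractional flow from objects (with supplies $\rho_i$) to nodes (with unit capacities) is feasible, which by max-flow/min-cut is equivalent to $\sum_{i \in S} \rho_i \leq |N(S)|$ for every subset $S$ of objects. Specializing $S$ to a consecutive arc $\{o_j, o_{j+1}, \ldots, o_{j+x-1}\}$ and bounding $|N(S)| \leq x + 2r$ via Lemma~\ref{lm:on_rgap} yields $\sum_{i=j}^{j+x-1} \rho_i \leq x + 2r$. Taking the maximum over $j$, and invoking Lemma~\ref{lm:spacings_uniformlydisted} to identify the $\rho_i$ as $\Sigma$-scaled uniform circular spacings, converts the left-hand side to $\Sigma \cdot M^{(c)}_{n, x}$ and produces \eqref{eq:rgap_necccond_for_stability}. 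The range $x \leq n-2r$ simply removes the vacuous cases in which $x + 2r \geq n$.

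For the sufficient direction, I would exhibit an explicit feasible routing: split each demand $\rho_i$ equally across its $d$ service choices, so that the load on any node $s$ becomes $(1/d)\sum_{i \in I(s)} \rho_i$, where $I(s)$ denotes the $d$-element set of objects served by $s$. The structural heart of the argument is that in an $r$-gap design the indices in $I(s)$ sit inside some arc of $r+1$ consecutive object positions on the circle: any two $o_i, o_j \in I(s)$ share $s$ in their choice sets, so $|C_i \cap C_j| \geq 1$ and Definition~\ref{def:rgap_design} forces their circular distance to be at most $r$; combined with $|I(s)| = d \leq r+1$ from Lemma~\ref{lm:on_rgap}, this localizes $I(s)$ in one such arc. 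Hence $\sum_{i \in I(s)} \rho_i \leq \Sigma \cdot M^{(c)}_{n, r+1} \leq d$ by hypothesis, so every node carries load at most $1$. The sandwich on $\mathcal{P}_\Sigma$ then follows by integrating the two event inclusions against the uniform law on $\mathcal{S}_\Sigma$ using Definition~\ref{def:P_Sigma}.

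I anticipate the principal obstacle to lie in rigorously establishing the arc-containment claim, since pairwise circular-distance bounds alone do not automatically force a set of integer positions on the circle into a common arc of length $r+1$ (dispersed configurations can persist when $n$ is only a small multiple of $dr$). Settling this cleanly will require either a short geometric lemma exploiting the regime $r \ll n$ together with the cardinality bound $d \leq r+1$, or restricting attention to the shift-invariant $r$-gap designs (e.g.\ the cyclic construction) actually employed in Sec.~\ref{sec:dchoice}. A secondary minor point is tracking the wrap-around in the necessary direction when $S$ straddles the boundary between indices $n$ and $1$, which is absorbed directly by the circular definition of $M^{(c)}_{n, x}$ in Definition~\ref{def:Mcirculard}.
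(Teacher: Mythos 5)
Your necessary-condition argument is essentially the paper's: the proof there simply observes that no demand vector can be supported if some object set $S$ has cumulative demand exceeding $|N(S)|$, and applies the bound $|N(S)|\leq x+2r$ of Lemma~\ref{lm:on_rgap} to consecutive arcs; the full max-flow/min-cut equivalence you invoke is not needed (only its trivial direction is used), and the wrap-around is indeed absorbed by the circular definition of $M^{(c)}_{n,x}$, exactly as you say. For the sufficient condition your route is genuinely different. The paper never exhibits a routing: it shows that ``spiky'' demand vectors (demand $d$ on objects spaced $r+1$ apart, zero elsewhere) are supportable because the $r$-gap property decouples their choice sets, and then invokes convexity of the capacity region (Lemma~\ref{lm:cap_region_is_convex}) to cover every vector whose $(r+1)$-window sums are at most $d$. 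You instead produce an explicit feasible point of \eqref{eq:min_prob_for_loadbalancing} by splitting each $\rho_i$ evenly over its $d$ choices and bounding each node's load by an $(r+1)$-window sum. This is more constructive and sidesteps the somewhat delicate step, left implicit in the paper, that every window-constrained vector is dominated by a convex combination of spiky vectors; its price is the localization claim you flagged.

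That claim does hold, and closes in two lines, once $n>3r$: fix an object $o_p$ stored on node $s$; every other object stored on $s$ shares $s$ with $o_p$, so by Definition~\ref{def:rgap_design} its index lies in the arc $\{p-r,\dots,p+r\}$, which has $2r+1<n$ points and can be linearized; if $a$ and $b$ are the extreme indices of $I(s)$ in this linearization then $b-a\leq 2r$, so their circular distance equals $\min\{b-a,\,n-(b-a)\}$ with $n-(b-a)\geq n-2r>r$, and the $r$-gap condition applied to that pair forces $b-a\leq r$. Hence $I(s)$ sits inside an arc of at most $r+1$ consecutive indices, and the cardinality bound $d\leq r+1$ from Lemma~\ref{lm:on_rgap} is not actually needed. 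Since the lemma is only used in regimes with $r=O(\log n)$ (and its necessary condition already presumes $n>2r$), assuming $n>3r$ is harmless; strictly speaking your proof, like the paper's spiky construction (whose spikes must remain more than $r$ apart after wrap-around), does not cover $n$ comparable to $r$. Finally, equal splitting yields node loads $\leq 1$ rather than $<1$ on the boundary event $M^{(c)}_{n,r+1}=d/\Sigma$; this event has probability zero and does not affect $\mathcal{P}_\Sigma$, consistent with the paper's own non-strict conventions.
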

\begin{proof}
  See Appendix~\ref{subsec:proof_lm_rgap_neccsuffcond_for_stability}.
\end{proof}


Notice that clustering or cyclic design is an $r$-gap design.
Hence the necessary and sufficient conditions given in Lemma~\ref{lm:rgap_neccsuffcond_for_stability} for system stability are valid for storage allocation with either design.
In addition, the well-defined structure of these two designs allows us to refine the results given in Lemma~\ref{lm:rgap_neccsuffcond_for_stability} as follows.
\begin{lemma}
  In a $d$-choice allocation constructed with clustering or cyclic design
  \begin{equation*}
    \Prob{M^{(c)}_{n, d} \leq d/\Sigma} \leq \mathcal{P}_\Sigma \leq \Prob{M^{(c)}_{n, d+1} \leq 2d/\Sigma}.
  \end{equation*}
\label{lm:clustering_cyclic_neccsuffcond_for_stability}
\end{lemma}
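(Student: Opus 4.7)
The plan is to obtain the two bounds separately. The lower bound follows almost immediately from Lemma~\ref{lm:rgap_neccsuffcond_for_stability}, while the upper bound requires a sharper node-expansion count that exploits the specific structure of the two designs.

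First I would verify that both designs are $r$-gap designs with $r = d - 1$. For the clustering design, objects in the same cluster share all $d$ choices whereas objects in different clusters have disjoint choice sets, so $|C_i \cap C_j| = 0$ whenever the cyclic distance between $o_i$ and $o_j$ exceeds $d-1$. For the cyclic design (cf.~\eqref{eq:3choice_cyclic}), $C_i = \{s_i, s_{i+1}, \ldots, s_{i+d-1}\}$ modulo $n$, so the same gap holds. Plugging $r = d - 1$ into the sufficient condition of Lemma~\ref{lm:rgap_neccsuffcond_for_stability} immediately gives $\Pr\{M^{(c)}_{n, d} \leq d/\Sigma\} \leq \mathcal{P}_\Sigma$, which is the desired lower bound.

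The upper bound requires more care, since the generic necessary condition of Lemma~\ref{lm:rgap_neccsuffcond_for_stability} evaluated at $i = d+1$ only yields $M^{(c)}_{n, d+1} \leq (3d-1)/\Sigma$. I would tighten this by computing $|N(S)|$ directly for a block $S = \{o_i, o_{i+1}, \ldots, o_{i+d}\}$ of $d+1$ cyclically consecutive objects. In the cyclic design, $N(S) = \{s_i, s_{i+1}, \ldots, s_{i+2d-1}\}$ has exactly $2d$ elements. In the clustering design, $d+1$ consecutive objects straddle at most two clusters, so $N(S)$ is contained in the union of those two clusters and again has cardinality at most $2d$. Since the objects in $S$ can only draw service from nodes in $N(S)$, each of unit capacity, stability forces $\sum_{j=i}^{i+d} \rho_j \leq |N(S)| \leq 2d$; maximizing over cyclic starting positions gives $M^{(c)}_{n, d+1} \cdot \Sigma \leq 2d$, hence $\mathcal{P}_\Sigma \leq \Pr\{M^{(c)}_{n, d+1} \leq 2d/\Sigma\}$.

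The main obstacle is the bookkeeping for $N(S)$: one has to handle the cyclic wrap-around carefully and confirm that the $2d$ bound is uniform in the starting position $i$, in particular for blocks that straddle cluster boundaries or the index-$n$ wrap in the cyclic design. Once the structural claim $|N(S)| \leq 2d$ for $|S| = d+1$ is in hand, the remainder is a direct application of the flow-feasibility argument already used in Lemma~\ref{lm:rgap_neccsuffcond_for_stability}, combined with the identification of $\max_i \sum_{j=i}^{i+d} \rho_j$ with $M^{(c)}_{n, d+1} \cdot \Sigma$ through the uniform-spacing representation of the demand vector.
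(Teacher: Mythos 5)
Your proposal is correct and follows essentially the same route as the paper: the lower bound by specializing the $r$-gap sufficient condition of Lemma~\ref{lm:rgap_neccsuffcond_for_stability} to $r = d-1$, and the upper bound by a design-specific node-expansion count showing that $d+1$ consecutive objects reach at most $2d$ nodes (the paper phrases the cyclic case via $d$ consecutive objects expanding to $2d-1$ nodes, but this is the same idea). No gaps to report.
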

\begin{proof}
  See Appendix~\ref{subsec:proof_lm_clustering_cyclic_neccsuffcond_for_stability}.
\end{proof}

Using the bounds given in Lemma~\ref{lm:clustering_cyclic_neccsuffcond_for_stability}, we can find an asymptotic characterization for $\mathcal{P}_\Sigma$ and $\mathcal{I}$ as follows.
\begin{theorem}
  Consider a system with $d$-choice storage allocation constructed with clustering or cyclic design.
  
  When $d = o\left(\log(n)\right)$, the following inequality hold in the limit as $n \to \infty$
  \begin{equation}
    \frac{1}{2} \leq \frac{\mathcal{I} \cdot d}{\log(n) + (d-1)(1 + \log_{(2)}(n) - \log(d))} \leq 1 \quad \text{a.s.}
  \label{eq:I_dchoice_w_clustering_or_cyclic_convergence_as_smalld}
  \end{equation}
  If $\Sigma_n = b_n \cdot n/\log(n)$ for some sequence $b_n > 0$, then as $n \to \infty$
  \begin{equation}
    \mathcal{P}_{\Sigma_n} \to \begin{cases}
      1 & \limsup b_n/d < 1, \\
      0 & \liminf b_n/2d > 1.
    \end{cases}
  \label{eq:P_dchoice_w_clustering_cyclic_convergence_smalld}
  \end{equation}
  
  When $d = c\log(n)$ for some constant $c > 0$, the following inequality holds in the limit $n \to \infty$
  \begin{equation}
    \frac{1}{6} \leq \frac{2c \alpha}{3(\alpha+1)} \cdot \frac{\mathcal{I} \cdot \log(n)}{\log_{(2)}(n)} \leq 1 \quad \text{a.s.},
  \label{eq:I_dchoice_w_clustering_or_cyclic_convergence_as_largerd}
  \end{equation}
  where $\alpha$ is the unique positive solution of $e^{-1/c} = (1 + \alpha)e^{-\alpha}$.
  If $\Sigma_n = b_n \cdot n/\log(n)$ for some sequence $b_n > 0$, then as $n \to \infty$
  \begin{equation}
      \mathcal{P}_{\Sigma_n} \to \begin{cases}
      1 & \limsup b_n \cdot 1.5\tau/d < 1, \\
      0 & \liminf b_n \cdot 0.25\tau/d > 1,
    \end{cases}
  \label{eq:P_dchoice_w_clustering_cyclic_convergence_largerd}
  \end{equation}
  where $\tau = c(1 + \alpha)^2/\alpha$.
\label{thm:P_I_dchoice_w_clustering_cyclic}
\end{theorem}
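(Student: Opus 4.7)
The plan is to deduce both halves of the theorem from the sandwich in Lemma~\ref{lm:clustering_cyclic_neccsuffcond_for_stability}, by invoking the asymptotics of the maximal circular $d$-spacing $M^{(c)}_{n,d}$ from Lemma~\ref{lm:Md_convergence_as} (which carry over from the line to the circle via Appendix~\ref{subsec:proof_Mcirculard}). First I would promote the probability sandwich in Lemma~\ref{lm:clustering_cyclic_neccsuffcond_for_stability} to a pointwise one on the load imbalance. Because the sufficient and necessary conditions in that lemma come from per-realization bounds on the maximal node load, giving max load $\in [\Sigma\, M^{(c)}_{n,d+1}/(2d),\, \Sigma\, M^{(c)}_{n,d}/d]$, dividing by $\Sigma/n$ yields
\[ \frac{n\, M^{(c)}_{n,d+1}}{2d} \;\leq\; \mathcal{I} \;\leq\; \frac{n\, M^{(c)}_{n,d}}{d} \quad \text{a.s.,} \]
scale-free in $\Sigma$ as it should be. This inequality drives all four asymptotic assertions in the theorem.

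For the $\mathcal{I}$ bound in the regime $d = o(\log n)$, I would plug in \eqref{eq:Md_convergence_as_smalld}, using that shifting $d$ by one does not affect the leading-order asymptotic $n\, M^{(c)}_{n,d} \sim \log n + (d-1)(1 + \log_{(2)} n - \log d)$. Dividing by $d$ and $2d$ respectively then produces \eqref{eq:I_dchoice_w_clustering_or_cyclic_convergence_as_smalld}, the factor-of-$1/2$ gap coming directly from the slack in the sandwich. In the regime $d = c\log n$ I would use \eqref{eq:Md_convergence_as_largerd}, which writes $n\, M^{(c)}_{n,d} = (1+\alpha) c \log n + V_n \log_{(2)} n$ with $\limsup V_n = \beta^\ast (1+\alpha)/\alpha$ and $\liminf V_n = -\beta^\dagger(1+\alpha)/\alpha$, and substitute the extremal values of $\beta^\ast$ and $\beta^\dagger$ from their stated intervals $[-1/2,3/2]$ and $[-3/2,-1/2]$ to extract the constants $1/6$ and $1$ in \eqref{eq:I_dchoice_w_clustering_or_cyclic_convergence_as_largerd}.

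For $\mathcal{P}_{\Sigma_n}$ with $\Sigma_n = b_n\, n/\log n$, I would substitute directly into the probability sandwich in Lemma~\ref{lm:clustering_cyclic_neccsuffcond_for_stability}, obtaining
\[ \Prob{\frac{n\, M^{(c)}_{n,d}}{\log n} \leq \frac{d}{b_n}} \;\leq\; \mathcal{P}_{\Sigma_n} \;\leq\; \Prob{\frac{n\, M^{(c)}_{n,d+1}}{\log n} \leq \frac{2d}{b_n}}. \]
The a.s.\ convergence of $n\, M^{(c)}_{n,d}/\log n$ from Lemma~\ref{lm:Md_convergence_as} then forces each side to $0$ or $1$ according to whether $d/b_n$ (respectively $2d/b_n$) exceeds or falls below the a.s.\ limit. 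For $d = o(\log n)$ that limit is $1$, immediately yielding the dichotomy of \eqref{eq:P_dchoice_w_clustering_cyclic_convergence_smalld}; for $d = c\log n$ the limit is $(1+\alpha)$ plus $\Theta(\log_{(2)} n/\log n)$ corrections controlled by $\beta^\ast, \beta^\dagger$, and matching those corrections produces the thresholds $1.5\tau$ and $0.25\tau$ with $\tau = c(1+\alpha)^2/\alpha$ in \eqref{eq:P_dchoice_w_clustering_cyclic_convergence_largerd}.

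The main obstacle will be constant-tracking in the $d = c\log n$ regime: the factor-of-$2$ slack between the necessary and sufficient conditions in Lemma~\ref{lm:clustering_cyclic_neccsuffcond_for_stability} compounds with the gap between the $\limsup$ and $\liminf$ of $V_n$, and reproducing the rather loose but explicit constants $1/6$ versus $1$ (for $\mathcal{I}$) and $0.25\tau$ versus $1.5\tau$ (for $\mathcal{P}$) requires careful case analysis of which extremal $\beta$-value controls which side. The only other nontrivial ingredient is the line-to-circle extension of Lemma~\ref{lm:Md_convergence_as}, but that is dispatched once in Appendix~\ref{subsec:proof_Mcirculard} and can be invoked here as a black box.
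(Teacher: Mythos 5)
Your proposal is correct and follows essentially the same route as the paper: it converts the necessary/sufficient stability conditions of Lemma~\ref{lm:clustering_cyclic_neccsuffcond_for_stability} into the per-realization sandwich $\mathcal{I} \in [M^{(c)}_{n,\cdot} \cdot n/(2d),\; M^{(c)}_{n,\cdot} \cdot n/d]$, then invokes the almost-sure limits of Lemma~\ref{lm:Md_convergence_as} transferred to the circle via Appendix~\ref{subsec:proof_Mcirculard}, and handles $\mathcal{P}_{\Sigma_n}$ by the same probability sandwich plus the a.s.\ convergence of $M^{(c)}_{n,d}\Sigma_n/b_n$. The constant-tracking you flag (the factor-of-two slack compounding with the $\beta^\ast,\beta^\dagger$ ranges in the $d = c\log n$ regime) is exactly how the paper obtains $1/6$ versus $1$ and $0.25\tau$ versus $1.5\tau$, so no further ingredients are needed.
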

\begin{proof}
  See Appendix~\ref{subsec:proof_thm_P_I_dchoice_w_clustering_cyclic}
\end{proof}

\begin{remark}
  Theorem~\ref{thm:P_I_dchoice_w_clustering_cyclic} implies that load imbalance
  $\mathcal{I} = \Theta\left(\log(n)/d\right)$ when $d = o\left(\log(n)\right)$, and $\mathcal{I} = \Theta\left(\log\log(n)/\log(n)\right)$ when $d = \Theta\left(\log(n)\right)$.
  These imply that
  i) $d$ choices for each object initially reduces load imbalance multiplicatively by $d$,
  ii) there is an exponential reduction in load imbalance as soon as $d$ reaches of order $\log(n)$, that is, what we started with $\mathcal{I} = \Theta(\log(n))$, for $d = 1$, is exponentially larger than what we end up with $\mathcal{I} = \Theta\left(\log\log(n)/\log(n)\right)$ for $d = \Theta\left(\log(n)\right)$.
  The second implication extends Godfrey's first observation in \cite{BalancedAllocationsOnHypergraphs:Godfrey08} to the static setting under general offered load. Godfrey derived his results with the dynamic balls-into-bins model under light offered load, i.e., when $O(n)$ balls are sequentially placed into $n$ bins.
  Godfrey's second observation implies in the dynamic setting that for any $1 < d < \Theta\left(\log(n)/\log\log(n)\right)$, there exists a $d$-choice storage allocation such that incrementing $d$ from $1$ to $2$ will not yield exponential reduction in load imbalance.
  The first implication given above, shows this observation of Godfrey for a concrete storage design and extends it to the static setting under general offered load.
\label{rm:dchoice}
\end{remark}

In what follows, we discuss some of our arguments using simulation results on $\mathcal{I}$.
We compute $\mathcal{I}$ by taking an average of its values obtained from $10^{5}$ simulation runs.
Within each simulation run, the object demand vectors that are offered to the system are sampled uniformly at random from the simplex $\mathcal{S}_\Sigma$, which is defined by a fixed $\Sigma$ as in \eqref{eq:S_Sigma}.

Fig.~\ref{fig:I_wd} plots $\mathcal{I}$ for $n=100$.
Notice that $\mathcal{I}$ is close to $\log(n)$ when $d=1$ as suggested by Theorem~\ref{thm:P_I_d1}.
As $d$ is incremented, $\mathcal{I}$ decays as $1/d$ as suggested by Theorem~\ref{thm:P_I_dchoice_w_clustering_cyclic}.
This illustrates that our asymptotic results are close estimates for the finite case.
\begin{figure}[t]
  \centering
  \begin{subfigure}
    \centering
    \includegraphics[width=.4\textwidth]{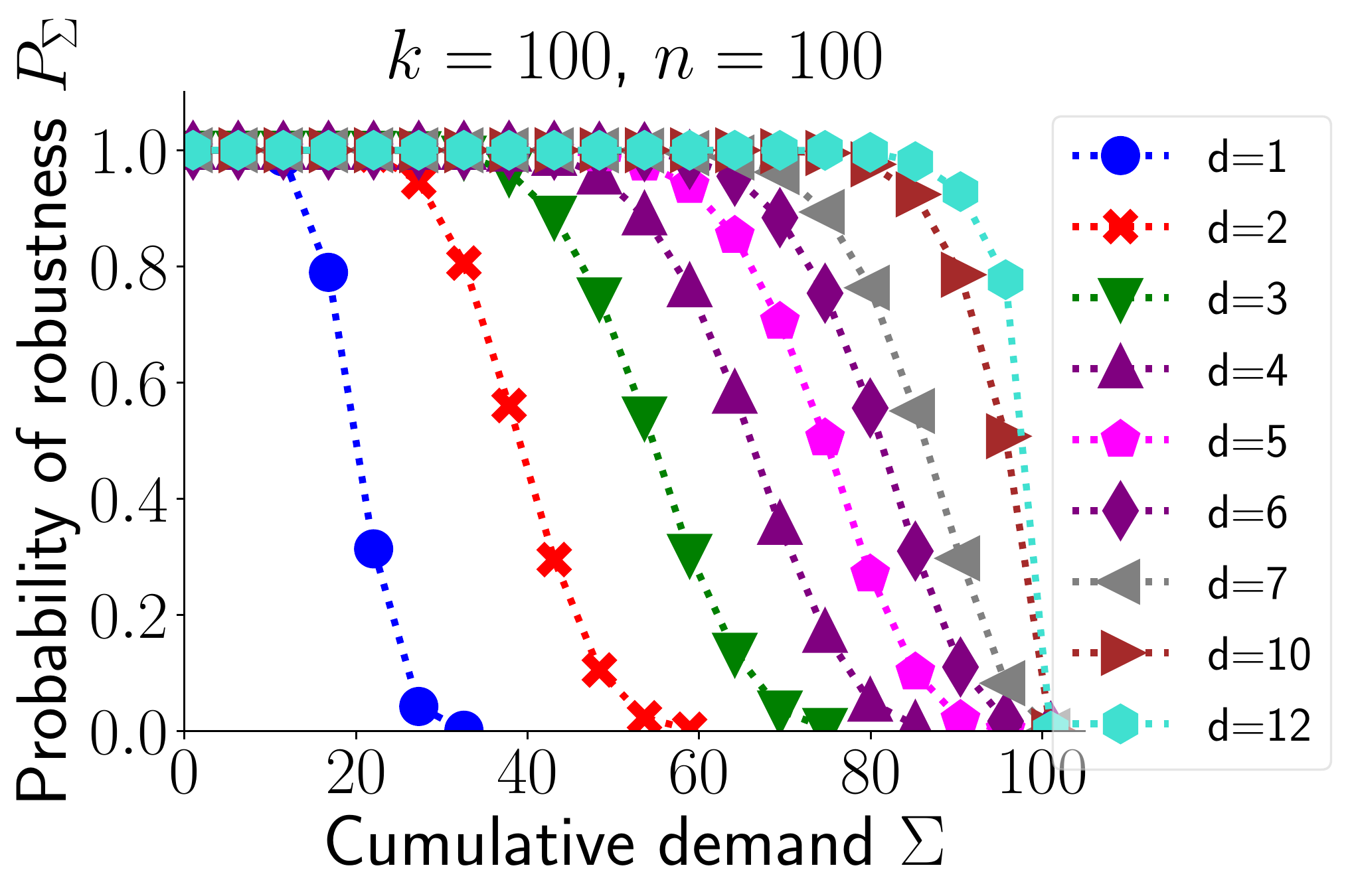}
  \end{subfigure}
  \begin{subfigure}
    \centering
    \includegraphics[width=.4\textwidth]{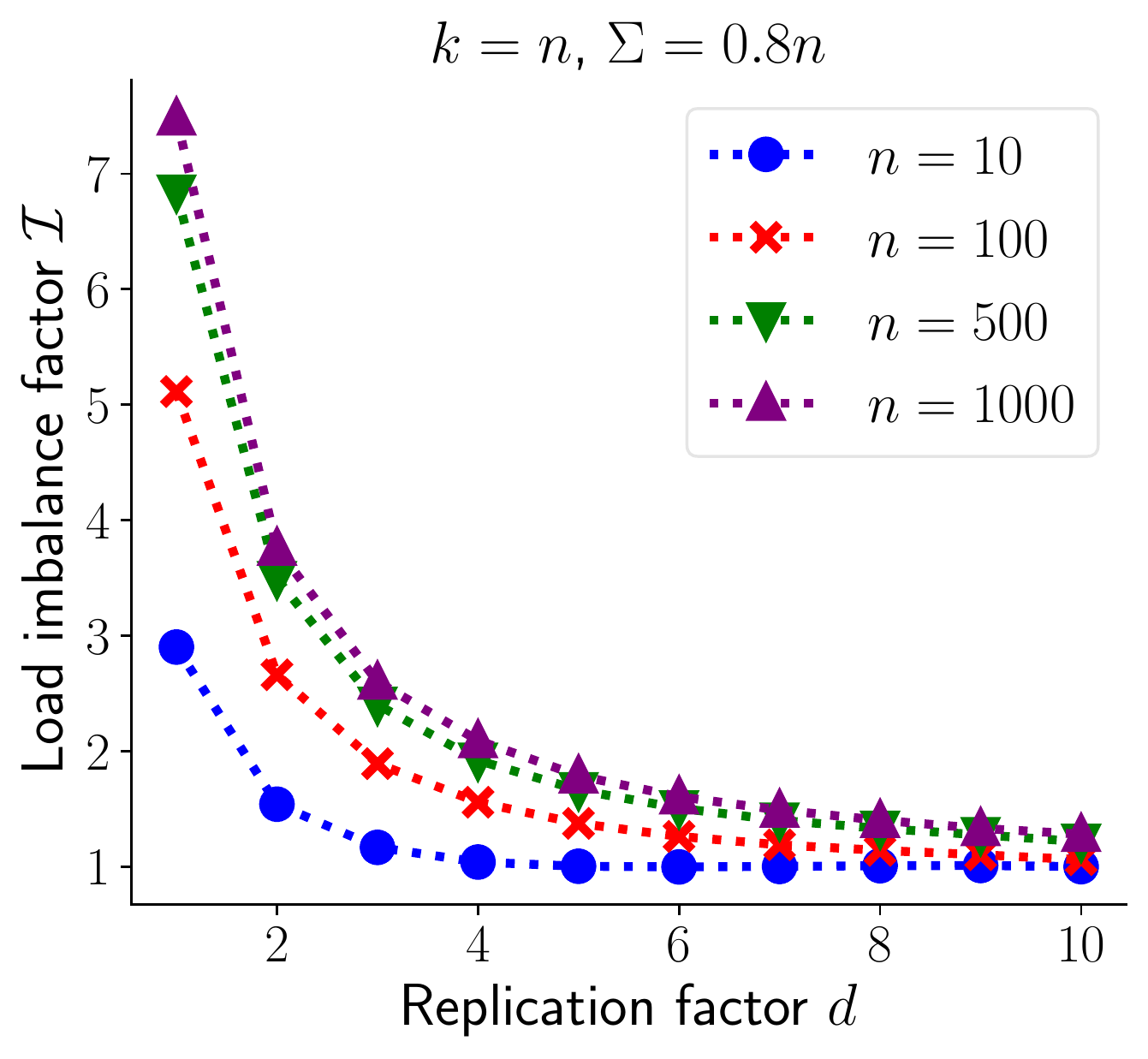}
  \end{subfigure}
  \caption{Simulated average values for $\mathcal{P}_{\Sigma}$ and $\mathcal{I}$ for a system that implements $d$-choice allocation with cyclic design.
  Cumulative offered load on the system is set to $\Sigma=0.8n$.
  System stores $k=n$ objects across $n$ nodes.}
\label{fig:I_wd}
\end{figure}

Constructions with $r$-gap designs decouple object choices $C_i$ that are $r$-apart at the cost of enlarging the overlaps between those that are close to each other, as in the clustering or cyclic designs.
The Balanced Incomplete Block Designs (BIBD) allow control of the overlaps between every pair of $C_i$.

\begin{definition}[BIBD, \cite{CombinatorialDesigns:Stinson07}]
  A $(d, \lambda)$ block design is a class of equal-size subsets of $\mathcal{X}$ (the set of stored objects), called blocks (storage nodes), such that every point in $\mathcal{X}$ appears in exactly $d$ blocks (service choices), and \emph{every} pair of distinct points is contained in exactly $\lambda$ blocks.
\label{def:block_design}
\end{definition}

Since we assume $k = n$, the block designs we consider are symmetric.
A symmetric BIBD with $\lambda=1$ guarantees that $|C_i \cap C_j| = 1$ for every $j \neq i$.
Since this case represents the minimal overlap between sets $C_i$ we focus on this case. The block design we consider refers to $(d, 1)$ symmetric BIBD, that is every object appears in $d$ nodes and every pair of distinct objects is contained in exactly one node.
Since every pair of $C_i$ overlaps at one node, we have
\[ \sum_{i=1}^k \sum_{j \neq i} \left|C_i \cap C_j \right| = (k-1)k. \]
Then by \eqref{eq:S_Sigmaum_of_intersection_cardinalities}, such block designs are possible only if $k = d^2 - d + 1$.
For instance a 3-choice allocation with a block design is given as
\begin{equation}
  \icol{a\\b\\c} ~~\icol{a\\f\\g} ~~\icol{a\\d\\e} ~~\icol{b\\d\\f} ~~\icol{b\\e\\g} ~~\icol{c\\d\\g} ~~\icol{c\\e\\f}
\label{eq:eq_3choice_bibd}
\end{equation}

The sufficient and necessary conditions presented in Lemma~\ref{lm:rgap_neccsuffcond_for_stability} cannot be used on a storage allocation with a block design, since they are not $r$-gap designs.
However using ideas that are similar to those used to derive Lemma~\ref{lm:rgap_neccsuffcond_for_stability}, we can find the following conditions for system stability.
\begin{lemma}
  Consider a system with $d$-choice allocation constructed with a block design and operating under a cumulative offered load of $\Sigma$.
  For stability of the system, a necessary condition is given as $M^{(c)}_{n, d} \leq (d^2 - 2d + 3)/\Sigma$
  and a sufficient condition is given as $M^{(c)}_{n, d} \leq d/2\Sigma$.
  
\label{lm:bibd_neccsuffcond_for_stability}
\end{lemma}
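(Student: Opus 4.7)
The plan is to parallel the arguments used for Lemma~\ref{lm:rgap_neccsuffcond_for_stability}, adapted to the incidence structure of a symmetric $(d,1)$-BIBD and exploiting the exchangeability of uniform spacings (which lets us choose any cyclic labeling of the objects without changing the joint distribution of demands).

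For the necessary condition, I first identify a $d$-subset $S^\ast$ of objects with node expansion equal to $d^2 - 2d + 3$. The natural construction takes $d-1$ points from some block $B$ (so they share the node labeled $B$) and appends one external point. Using the BIBD counting constraints $\sum_j j\,b_j = d^2$ (each of the $d$ objects lies in $d$ blocks) and $\sum_j \binom{j}{2} b_j = \binom{d}{2}$ (each pair of points lies in exactly one block), together with $b_d = 0$, one obtains $b_{d-1} = 1$, $b_2 = d-1$, $b_1 = d^2 - 3d + 3$, whence $|N(S^\ast)| = d^2 - 2d + 3$. A short optimization over these constraints shows that this is the maximum value of $|N(S)|$ across all $d$-subsets $S$ which are not themselves blocks. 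The next step is to produce a cyclic labeling of the $n$ objects such that no $d$ consecutive objects form a block; this is feasible because the $n$ blocks form a very sparse subfamily of the $\binom{n}{d}$ total $d$-subsets, and it can be verified directly on the Fano plane in \eqref{eq:eq_3choice_bibd}. Under such a labeling every cyclic $d$-window has $|N(\cdot)| \leq d^2 - 2d + 3$, so stability forces $\sum_{j=i}^{i+d-1} \rho_j \leq d^2 - 2d + 3$ for every starting index $i$. Taking the maximum over $i$ and dividing by $\Sigma$ gives the claimed bound on $M^{(c)}_{n,d}$.

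For the sufficient condition, I use the uniform splitting scheme in which each object's demand $\rho_i$ is divided equally among its $d$ service choices. Under this scheme the load on the node serving block $B_j$ equals $\rho(B_j)/d$, so stability reduces to $\rho(B_j) \leq d$ for every block $B_j$. The structural step is to argue that, under a suitable cyclic labeling, the $d$ objects of every block $B_j$ can be covered by a union of at most two cyclic arcs of total length at most $2d$; this is the source of the factor of $2$ in the stated bound. From this, $\rho(B_j)$ is at most the sum of $2d$ consecutive demands, which is at most $2\,\Sigma\,M^{(c)}_{n,d}$. Combining with the hypothesis $M^{(c)}_{n,d} \leq d/(2\Sigma)$ then yields $\rho(B_j) \leq d$ for every block, giving stability.

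The main obstacle in both directions is the combinatorial problem of producing a cyclic labeling of the objects with the two desired properties: that no $d$ consecutive objects ever form a block, and that every block's footprint is contained in a short union of arcs. These are statements about the interplay between the cyclic order and the BIBD incidence; they do not follow from the aggregate identity \eqref{eq:S_Sigmaum_of_intersection_cardinalities} alone. A direct verification is straightforward for small cases such as the Fano plane, and for general $(d,1)$-BIBDs one should be able to argue existence via a counting or averaging argument that exploits $n = d^2 - d + 1$ growing much faster than the window size $d$, though the details require care.
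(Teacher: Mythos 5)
Your necessary-condition argument is essentially the paper's: you use the same extremal configuration ($d-1$ objects of one block plus one external object) and arrive at the same value $d + (d-1) + (d-2)^2 = d^2-2d+3$, and you are in fact more explicit than the paper about needing a cyclic labeling in which no $d$ consecutive objects form a block (a block itself expands to all $d^2-d+1$ nodes, so such windows must be excluded). Two caveats: the ``short optimization over these constraints'' does not by itself give maximality over non-block $d$-sets --- with $b_d=0$ the two counting identities alone allow expansion up to $d^2-d$, so you also need the geometric fact that two nodes share exactly one object (e.g., summing the weights $|\ell \cap S|-1$ over lines through two points of $S$ off the longest secant); and the existence of a block-free labeling needs an argument (a first-moment count works for $d\geq 4$, the Fano case by hand). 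These are repairable, and this half is aligned with the paper's proof.

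The sufficiency half has a genuine gap: the structural labeling you rely on does not exist in general, and no fixed demand-oblivious splitting such as uniform splitting can prove the claim. If every block were covered by two cyclic arcs of total length at most $2d$, the larger arc would contain at least $\lceil d/2\rceil$ of its objects, so each block would contribute at least $\binom{\lceil d/2\rceil}{2}$ pairs of objects at cyclic distance less than $2d$; since $\lambda=1$ each pair lies in exactly one block, summing over the $n$ blocks forces $\binom{\lceil d/2\rceil}{2}\leq 2d-1$, which fails for every $d\geq 17$ (and even for small $d$ you give no construction). Consequently, once $d$ is moderately large, in \emph{any} labeling some block $B$ has three objects pairwise at cyclic distance at least $d$; putting demand $d/2$ on each of them satisfies $M^{(c)}_{n,d}\leq d/(2\Sigma)$ yet gives $\rho(B)=3d/2>d$, so your uniform split overloads node $B$ even though the system is in fact stable under that demand. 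The paper avoids this entirely by arguing on extreme points: the spiky demand vectors with spikes every $d$ positions are feasible up to spike height $1+(d-1)/2$, because by the $\lambda=1$ property a spiky object shares each of its $d$ nodes with at most one other spiky object; convexity of the capacity region (Lemma~\ref{lm:cap_region_is_convex}) then extends feasibility to all demand vectors whose $d$-window sums are at most $d/2$, exactly as in the proof of Lemma~\ref{lm:rgap_neccsuffcond_for_stability}. Your sufficiency argument needs to be replaced by an argument of this extreme-point-plus-convexity type, with the demand split allowed to depend on the offered demand vector.
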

\begin{proof}
  See Appendix~\ref{subsec:proof_lm_bibd_neccsuffcond_for_stability}.
\end{proof}

The stability conditions given in Lemma~\ref{lm:bibd_neccsuffcond_for_stability} allow us to find bounds on $\mathcal{P}_\Sigma$ and $\mathcal{I}$ for storage allocations with block designs, similar to those that were stated in Theorem~\ref{thm:P_I_dchoice_w_clustering_cyclic}.
We do not state them here since they are obtained by simply modifying the multiplicative factors in the bounds given in Theorem~\ref{thm:P_I_dchoice_w_clustering_cyclic}.
The upper bound on $\mathcal{I}$ in this case decays as $1/d$ with increasing $d$, which says that providing $d$ service choices for each object initially reduces load imbalance at least multiplicatively by $d$. However, the lower bound on $\mathcal{I}$ decays in this case as $1/d^2$, that is, block designs can possibly implement better scaling of $\mathcal{I}$ in $d$ compared to clustering or cyclic designs.

Our asymptotic analysis does not allow ordering different designs of $d$-choice storage allocations in terms of their load balancing performance.
As discussed previously, all $d$-choice allocations yield the same cumulative overlap between object choices $C_i$ (recall \eqref{eq:S_Sigmaum_of_intersection_cardinalities}) and each design gives a different way of distributing the overlaps across object choices $C_i$.
With simulations we find that it is better to evenly spread the overlaps between choices $C_i$ using a block design, that is, \emph{many but consistently small overlaps are better than fewer but occasionally large overlaps}.
Fig.~\ref{fig:fig_I_cl_vs_cyclic_vs_bibd} shows the average $\mathcal{I}$ for systems with 3- and 5-choice allocations that are constructed using clustering, cyclic or block designs.
We see here, and in other simulation results we omit, that the largest gain in load balancing is achieved by moving from clustering to cyclic, while moving further to block design yields a smaller gain in $\mathcal{I}$.
Furthermore cyclic designs exist for any value of $k$, $n$ and $d < n$, while block designs exist only for a restricted set of $k$, $n$ and $d$.
Cyclic design therefore appears to be favorable for constructing multiple-choice storage allocations in real systems.

Currently we don't have a rigorous way to understand how designs with different overlaps compare with each other in terms of $\mathcal{P}_\Sigma$ or $\mathcal{I}$.
In the following subsection, we present our intuitive reasoning on why consistently small overlaps is better in terms of load balancing than shrinking overlaps between some objects and making them larger for others.

\begin{figure}[h]
  \centering
  \begin{subfigure}
    \centering
    \includegraphics[width=0.3\textwidth]{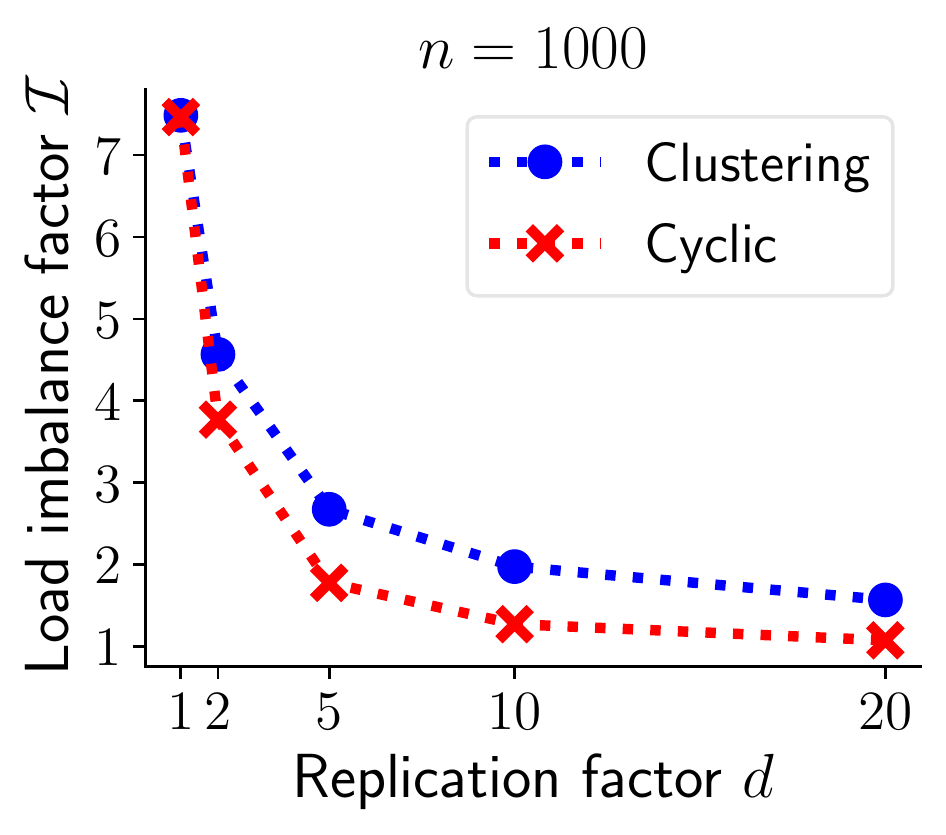}
  \end{subfigure}
  \begin{subfigure}
    \centering
    \includegraphics[width=0.3\textwidth]{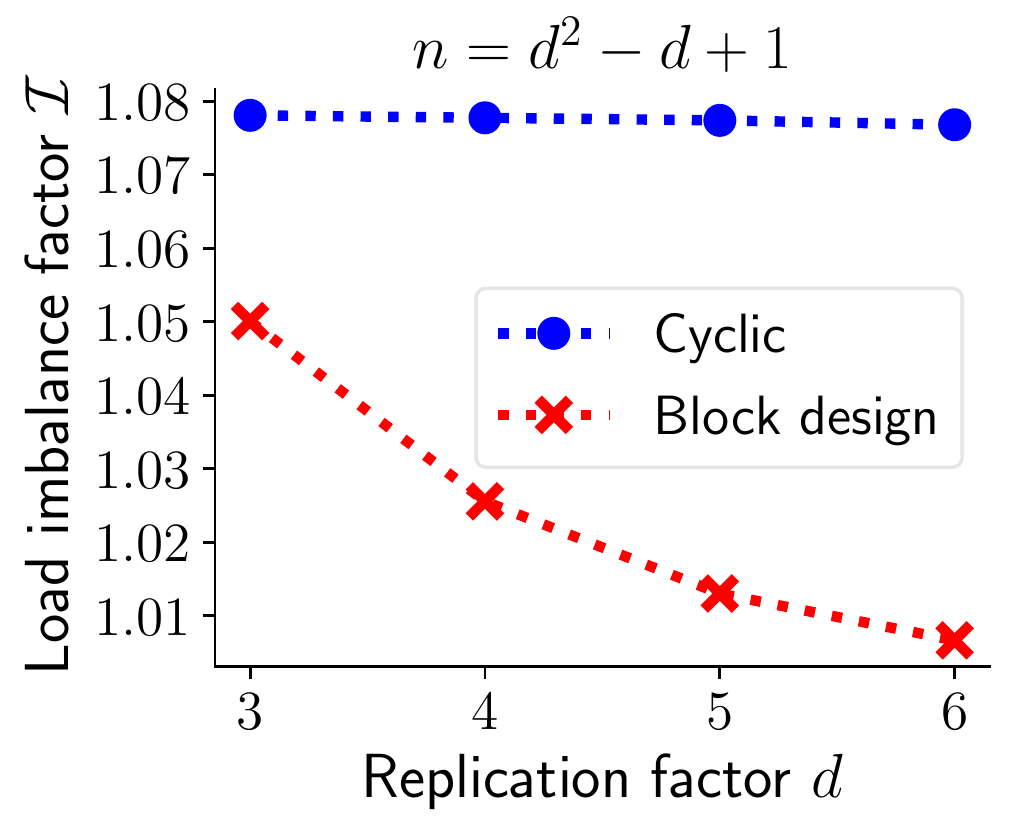}
  \end{subfigure}
  \caption{Simulated average value of $\mathcal{I}$ for $d$-choice allocation with different designs.
  Note that clustering and block designs do not co-exist for the same $d$ and $n$.}
\label{fig:fig_I_cl_vs_cyclic_vs_bibd}
\end{figure}

\subsection{On the Impact of Overlaps Between the Service Choices}
\label{subsec:why_smaller_overlaps_is_better}
Storage redundancy allows the system to split the demand for the popular objects across multiple nodes, hence enabling the system to achieve better load balance across the nodes in the presence of skews in object popularities.
In order to minimize the risk of overburdening a storage node, a natural strategy would be to decouple the overlaps between the service choices ($C_i$) for the objects that are expected to be more popular than others.
In this paper we assume no a priori knowledge on the object popularities; in particular we assume cumulative demand remains constant at $\Sigma$ while all possible object popularity vectors are equally likely, which implies that the object demands are distributed as the uniform spacings within $[0, \Sigma]$ (Sec.~\ref{subsec:sys_model}).
Our model then seeks to answer how one should design the overlaps between the service choices of objects when no a priori knowledge is available on the object popularities.

Recall from \eqref{eq:S_Sigmaum_of_intersection_cardinalities} that all $d$-choice allocations yield the same cumulative overlap between object choices $C_i$ and each design gives a different way of distributing the overlaps across $C_i$.
With simulations (as presented in Fig.~\ref{fig:fig_I_cl_vs_cyclic_vs_bibd}) we found that in order to achieve higher load balancing performance, it is better to spread the overlaps evenly across all pairs of objects (using block design) than distributing them in an unbalanced manner by  implementing smaller overlaps between some objects while implementing larger number of overlaps between others (such as using clustering or cyclic design).
Reducing the overlaps between the service choices for a given set of objects $S$ enlarges the node expansion of $S$ (as explained in detail in Sec.~\ref{subsec:dchoice_alloc_eval}), hence increasing the capacity available for jointly serving the objects within $S$.
However this leads to a reduction in the node expansion for other sets of objects, hence reducing the capacity available for the joint use of those objects.

Overall reducing the service choice overlaps between the objects that are known to be more popular than others will allow the system to balance the offered load, which is expected to be skewed towards the popular objects, more effectively.
However reducing the overlaps for a particular set of objects is risky when we don't know which objects are going to be more popular, because this would increase the overlaps for other sets of objects, one of which might end up being the true set that is more popular than others.
This is exactly the case implemented in our offered load model; few objects will be highly popular while most of them will have average popularity (as implied by \ref{N_a_b_large} and \ref{N_a_b_medium}) and we don't know a priori which of those that are highly popular.
When no information is available on which objects will be more popular, it is not possible to select the popular objects and reduce the overlaps between their service choices.
Then the natural strategy would be to avoid the risk of large overlaps.
Indeed the simulations show for our case with no a priori knowledge on object popularities that allocating the service choices for a group of sets of objects with smaller service choice overlap (as in clustering or cyclic design) performs on average worse than treating all objects the same and minimizing the overlaps across the service choices of all pairs of objects (as in block designs).


The rationale of favoring many but consistently small overlaps over fewer but occasionally larger overlaps has very recently been observed to perform well also in the context of scheduling compute jobs with bi-modal job size distribution.
The authors in \cite{RedundancySchedulingForBiModalJobSizes:BehrouziS19} consider replicating every arriving job (ball) across $r$ nodes (bins), in which the overlaps between the sets of nodes assigned to subsequent jobs impact queueing times at the nodes.
The authors observed that the most effective way to control the overlaps across the subsequent node-assignment rounds is to use a block design, which balances the large jobs across the nodes more effectively than cyclic or random job-to-node assignment strategies.

\begin{figure*}[t]
  \centering
  \begin{subfigure}
    \centering
    \includegraphics[width=.32\textwidth, keepaspectratio=true]{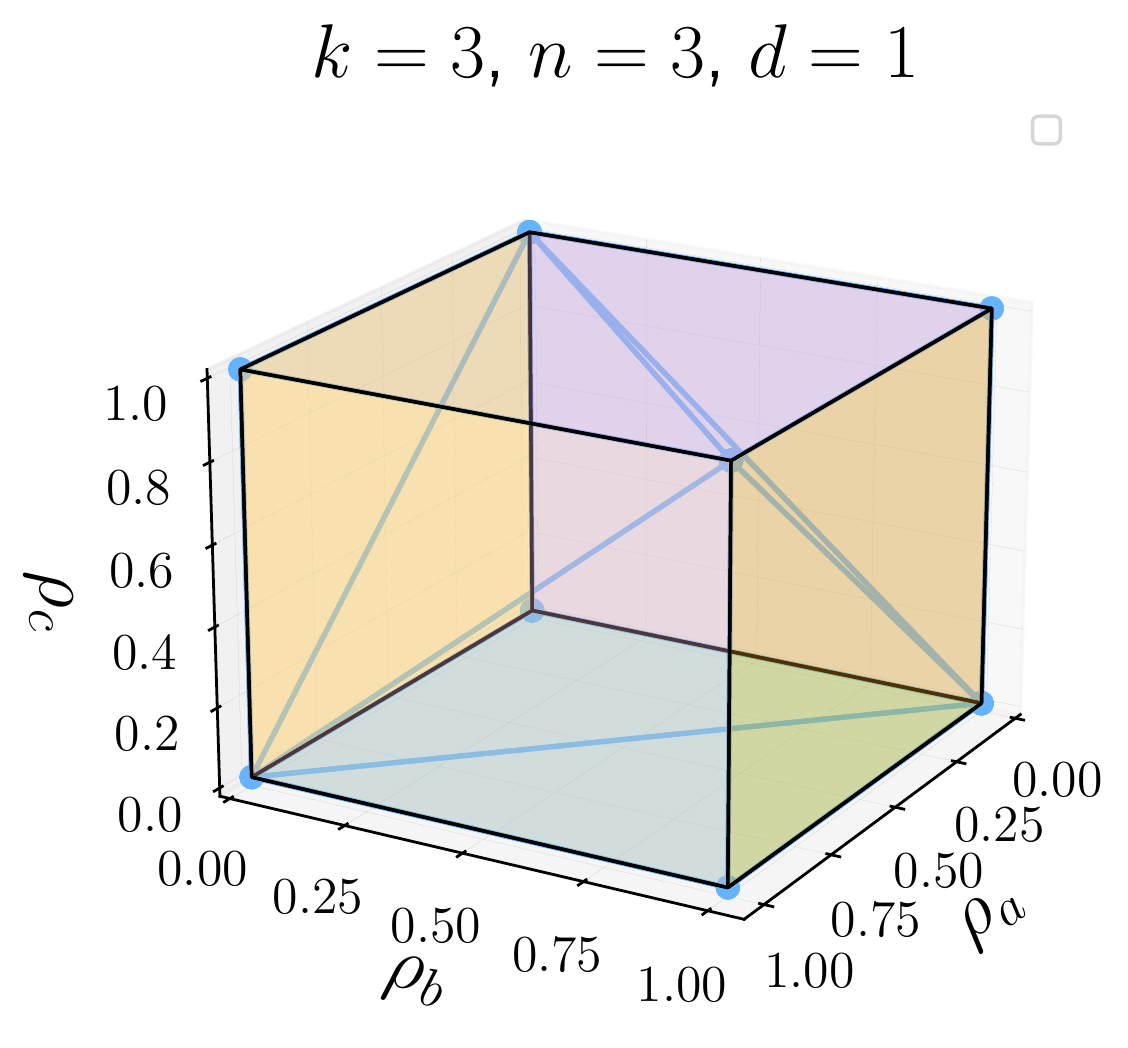}
  \end{subfigure}
  \begin{subfigure}
    \centering
    \includegraphics[width=.32\textwidth, keepaspectratio=true]{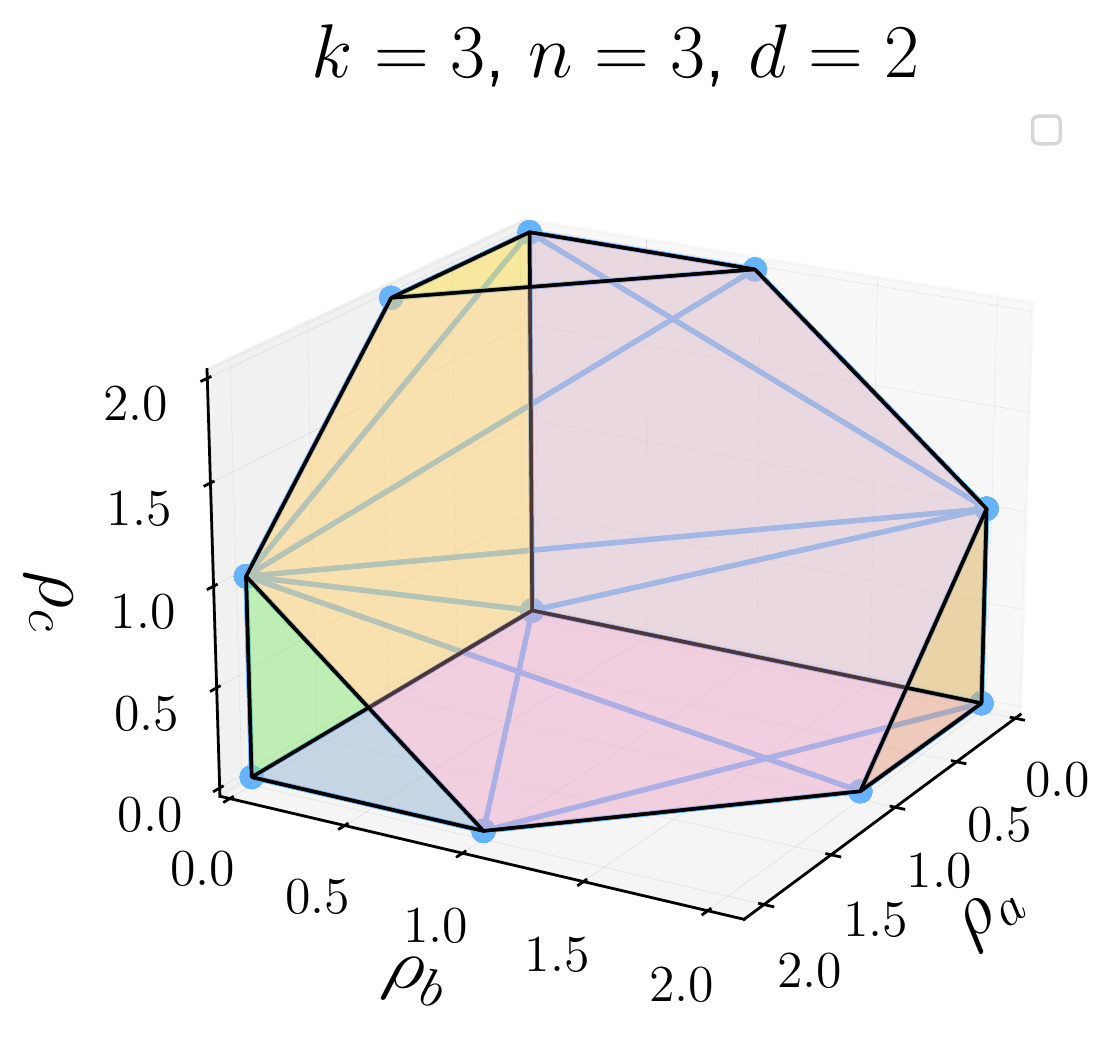}
  \end{subfigure}
  \begin{subfigure}
    \centering
    \includegraphics[width=.32\textwidth, keepaspectratio=true]{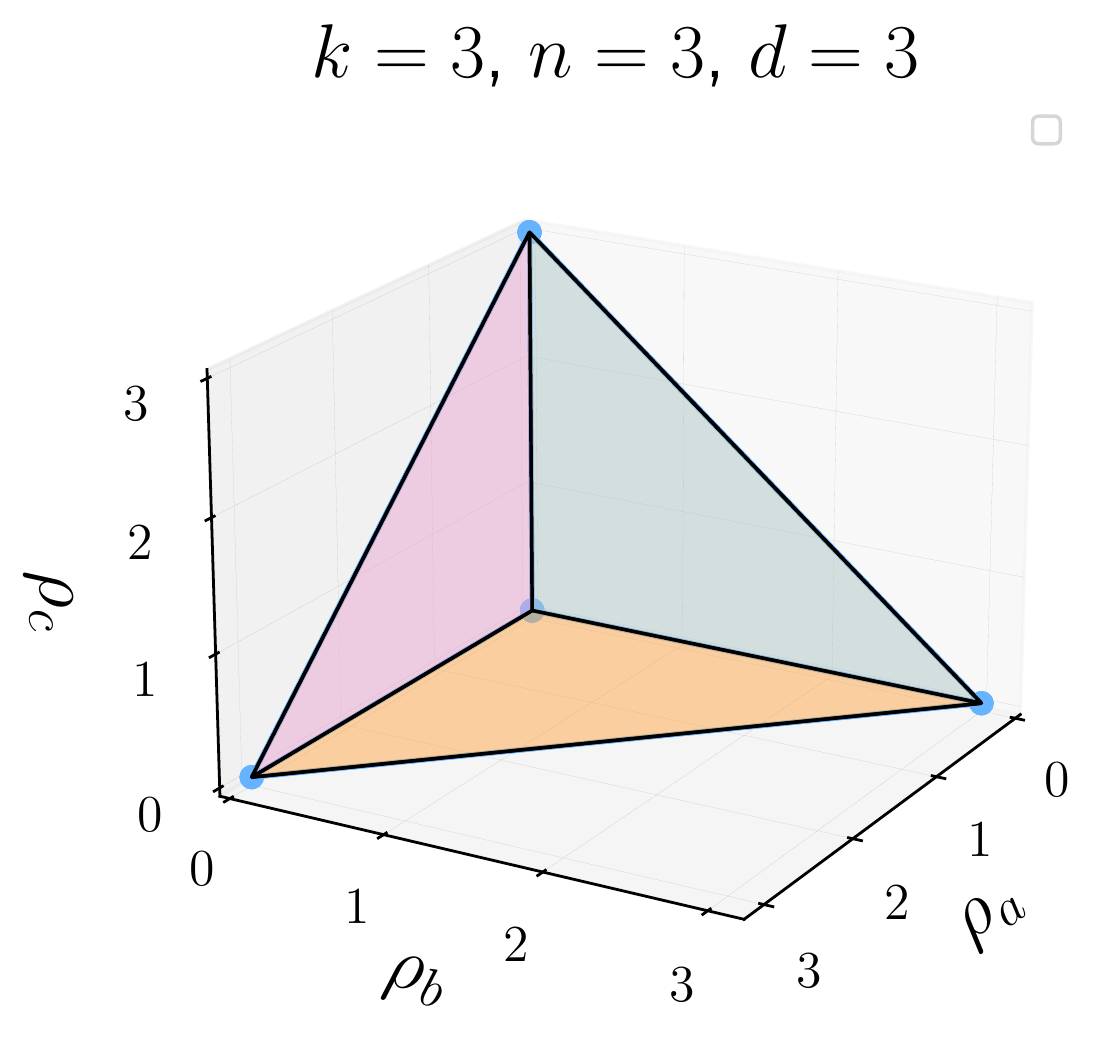}
  \end{subfigure}
  \caption{Service capacity region of regular balanced $d$-choice allocation for $d = 1, 2, 3$.}
  \label{fig:C_dchoice}
\end{figure*}

\section{Interpreting Load Balancing Performance with the Shape of Service Capacity Region}
\label{sec:loadbalancing_w_geometry}
Fig.~\ref{fig:C_dchoice} plots the capacity region $\mathcal{C}$ for a system of three servers and three objects with $d$-choice allocation constructed with cyclic design for $d = 1, 2, 3$ (the cyclic design was introduced in Sec.~\ref{subsec:dchoice_alloc_eval}).
When $d=1$, $\mathcal{C}$ is given by the standard unit cube. Setting $d=2$ extends $\mathcal{C}$ by a unit length at the \emph{skew} corners that lie on coordinate axes. We call them skew corners because the object demand vectors that are close to the corners represent the load scenarios with skewed object popularities.
Setting $d=3$ extends the skew corners by an additional unit of length and yields a simplex capacity region. This implies that the total capacity that is available in the system (which is $3$ in this example) can be arbitrarily used for serving any stored object when $d=3$, i.e., when each object is available at every server.


Previously, we observed that incrementing $d$ from one to two yields the greatest increase in the system's load balancing performance and further increments yield diminishing gains (cf.\ Fig.~\ref{fig:I_wd}).
We now investigate this through the geometric interpretation of $\mathcal{P}_{\Sigma}$ that was given in \eqref{eq:P_Sigma}.
As a Corollary of Lemma~\ref{lm:addingchoice_expands_C}, the capacity region for $d$-choice allocation is contained by that for $(d+1)$-choice allocation. This can be seen for the example given in Fig.~\ref{fig:C_dchoice}.
Recall that $\mathcal{S}_{\Sigma}$ is the $k-1$ dimensional standard simplex of side length $\Sigma$ as defined in \eqref{eq:S_Sigma} and $\mathcal{C}$ is the $k$ dimensional polytope representing the system capacity region.
$\mathcal{P}_{\Sigma}$ is proportional to $\mathrm{Vol}(\mathcal{A})$ where $\mathcal{A} := \mathcal{S}_{\Sigma} \cap \mathcal{C}$ (by \eqref{eq:P_Sigma}), which increases with $d$, hence $\mathcal{P}_{\Sigma}$ increases with $d$.
$\mathcal{A}$ is a $k-1$ dimensional polytope such that $\mathcal{S}_{\Sigma}$ and $\mathcal{A}$ share the same (Chebyshev\footnote{Chebyshev center $c$ of a set $S$ is computed by solving $\min_{c, r}\{r \mid \norm{x - c} \leq r, ~\forall x \in S\}$ }) center. Again examples in Fig.~\ref{fig:C_dchoice} help seeing this.

In order to better understand the effect of incrementing $d$ on the load balancing performance, let us go through the examples given in Fig.~\ref{fig:C_dchoice}.
Suppose $\Sigma = 3$. Then we have $\mathrm{Vol}(\mathcal{S}_{\Sigma}) = 9\sqrt{3}/2$.
When $d=1$, $\mathcal{A} = \{(1,1,1)\}$ and $\mathrm{Vol}(\mathcal{A}) = 0$, hence $\mathcal{P}_{\Sigma} = 0$.
When $d=2$, $\mathcal{A}$ is a polygon with the set of vertices
\[ \Set{(0,1,2), ~(0,2,1), ~(1,0,2), ~(1,2,0), ~(2,0,1), ~(2,1,0)} \]
and $\mathrm{Vol}(\mathcal{A}) = 3\sqrt{3}$, hence $\mathcal{P}_{\Sigma} = 2/3$.
When $d=3$, $\mathcal{A} = \mathcal{S}_{\Sigma}$, hence $\mathcal{P}_{\Sigma} = 1$.
This geometric view can be extended to larger dimensions. Incrementing $d$ extends $\mathcal{C}$ by a unit length in the skew corners, which also expands $\mathcal{A}$. This expansion in $\mathcal{A}$ happens outward from its center equally in every direction when $d$ is small. However, the boundary of $\mathcal{S}_{\Sigma}$ does not allow expansion in every direction beyond a value of $d$. Furthermore, the shape of $\mathcal{S}_{\Sigma}$ causes the expansion per increment in $d$ to diminish in volume as $d$ gets larger.
Thus, as $\mathrm{Vol}(\mathcal{A})$ increases, the increase in $\mathcal{P}_{\Sigma}$ per increment in $d$ diminishes as $d$ gets larger.

\section{\texorpdfstring{$d$}{TEXT}-fold Redundancy with XOR's}
\label{sec:dchoice_wxors}
In this section, we will answer \ref{q:3} posed in the Introduction.
Thus far, we have only considered $d$-choice storage allocations with object replicas.
A replicated copy adds a new service choice for only a single object, while a \emph{coded} copy can add a new choice simultaneously for multiple objects.
When an XOR of $r$ objects (i.e., $r$-XOR) is stored on a node that did not previously host any of the XOR'ed objects, each of the $r$ objects will gain a recovery set, i.e., a set of $r$ nodes that can jointly serve the object of interest.

We here consider the $d$-choice storage allocation with $r$-XOR's, which is implemented by distributing the $k$ exact object copies and $k(d-1)/r$ of their $r$-XOR'ed copies evenly across the storage nodes while complying with Def.~\ref{def:reg_balanced_dchoice_alloc}.
This makes sure that each object can be directly accessed through its exact copy and through $d-1$ recovery sets.
Note that we consider recovery sets that contain a single XOR'ed object, which potentially is less storage efficient than schemes that have been previously proposed based on batch codes \cite{BatchCodesAndTheirApps:IshaiKO04}.
For instance, the 3-choice allocation given in \eqref{eq:3choice_cyclic} with object replicas is implemented with 2-XOR's as
\begin{equation}
  \icol{a\\d+c} \icol{b\\f+g} \icol{c\\a+b} \icol{d\\b+e} \icol{e\\a+d} \icol{f\\g+c} \icol{g\\e+f}.
\label{eq:3choice_w_2xors}
\end{equation}

Allocation with $r$-XOR's reduces the storage overhead multiplicatively by $r$.
However, object access from a recovery set requires downloading an object copy from each of the $r$ nodes that jointly implement the choice, hence download overhead of object recovery grows multiplicatively with $r$.
As a direct consequence of this, the load imbalance factor grows additively with $r$ as stated in the following.
\begin{theorem}
  Consider a system with $d$-choice storage allocation that is created with $r$-XOR's, where $r \geq 2$ is an integer.
  
  When $d = o\left(\log(n)\right)$, the following inequality holds in the limit $n \to \infty$
  \begin{equation}
    \frac{1}{2} \leq \frac{\mathcal{I} \cdot d}{\log(n) + \beta_{n, d}} \leq 1 \quad \text{a.s.},
   \label{eq:I_dchoice_wxors_convergence_as_smalld}
  \end{equation}
  where $\beta_{n, d} = r(d-1)\left(1 + \log_{(2)}(n) - \log\left(1 + r(d-1)\right)\right)$,
  and if $\Sigma_n = b_n \cdot n/\log(n)$ for some sequence $b_n > 0$, then as $n \to \infty$
  \begin{equation}
    \mathcal{P}_{\Sigma_n} \to \begin{cases}
      1 & \limsup b_n/d < 1, \\
      0 & \liminf b_n/2d > 1.
    \end{cases}
  \label{eq:P_dchoice_wxors_convergence_smalld}
  \end{equation}
  
  When $d = c\log(n)$ for some constant $c > 0$, the following inequality holds in the limit $n \to \infty$
  \begin{equation}
    \frac{1}{2} \leq \frac{\mathcal{I}}{(\alpha + 1)\left(\frac{3}{2c\alpha} \cdot \frac{\log_{(2)}(n)}{\log(n)} + r\right)} \leq 1 \quad \text{a.s.},
  \label{eq:I_dchoice_wxors_convergence_as_largerd}
  \end{equation}
  where $\alpha$ is the unique positive solution of $e^{-1/c} = (1 + \alpha)e^{-\alpha}$,
  and if $\Sigma_n = b_n \cdot n/\log(n)$ for some sequence $b_n > 0$, then as $n \to \infty$
  \begin{equation}
    \mathcal{P}_{\Sigma_n} \to \begin{cases}
      1 & \limsup 1.5\tau \cdot b_n/d < 1, \\
      0 & \liminf 0.25\tau \cdot b_n/d > 1,
    \end{cases}
  \label{eq:P_dchoice_wxors_convergence_largerd}
  \end{equation}
  where $\tau = c(1 + \alpha)^2/\alpha$.
\label{thm:P_I_dchoice_wxors}
\end{theorem}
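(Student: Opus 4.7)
The argument runs parallel to that of Theorem~\ref{thm:P_I_dchoice_w_clustering_cyclic}: I would first derive cyclic-style necessary and sufficient stability conditions that sandwich $\mathcal{P}_\Sigma$ between probabilities of the form $\Prob{M^{(c)}_{n, q} \leq \mathrm{const}/\Sigma}$, and then feed these into the asymptotic results for circular maximal $q$-spacings (Lemmas~\ref{lm:Md_convergence_indist} and~\ref{lm:Md_convergence_as}, extended to the unit circle in Appendix~\ref{subsec:proof_Mcirculard}). The genuinely new ingredient, compared with the replica case, is the identification of the correct index $q$ and the careful handling of the fact that routing demand through an $r$-XOR recovery set loads $r$ nodes rather than one.

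I would work with the natural cyclic extension of the allocation in~\eqref{eq:3choice_w_2xors}, in which $s_i$ stores the primary $o_i$ together with $d-1$ XOR'ed blocks whose shift offsets are fixed to yield a regular balanced allocation. The key combinatorial quantity is $q := r(d-1)+1$, the number of distinct objects that $s_i$ helps serve: one through the primary and $r$ through each of the $d-1$ XOR'ed copies. In the cyclic layout these $q$ objects occupy a window of consecutive indices on the circle, so the design behaves as an $r(d-1)$-gap design in the sense of Definition~\ref{def:rgap_design}. Following Lemmas~\ref{lm:rgap_neccsuffcond_for_stability} and~\ref{lm:clustering_cyclic_neccsuffcond_for_stability}, I would then establish
\begin{equation*}
  \Prob{M^{(c)}_{n, q} \leq d/\Sigma} \;\leq\; \mathcal{P}_\Sigma \;\leq\; \Prob{M^{(c)}_{n, q+1} \leq 2d/\Sigma}.
\end{equation*}
For the sufficient direction I would analyze the routing that splits each object's demand equally across its $d$ choices: the load at $s_j$ equals $(1/d)\sum_{i \in W_j} \rho_i$ over its window $W_j$ of $q$ objects, giving maximum load $M^{(c)}_{n, q}\cdot\Sigma/d$, which is $\leq 1$ precisely under the left-hand event. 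For the necessary direction I would adapt the node-expansion argument of Lemma~\ref{lm:on_rgap}: the demand of any $i$ consecutive objects must be absorbed by at most $i + 2r(d-1)$ nodes of unit capacity, which gives the companion upper bound on $\mathcal{P}_\Sigma$.

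Substituting $q = r(d-1)+1$ into Lemmas~\ref{lm:Md_convergence_indist} and~\ref{lm:Md_convergence_as}, and using $\mathcal{I} = M^{(c)}_{n, q}\cdot n/d$ on the sufficient side (and the analogous expression with $q+1$ and $2d$ on the necessary side) then delivers the stated scalings directly. For $d = o(\log n)$, equation~\eqref{eq:Md_convergence_as_smalld} with $d \leftarrow q$ produces precisely the term $\beta_{n, d} = r(d-1)\bigl(1 + \log_{(2)}(n) - \log(r(d-1)+1)\bigr)$ of~\eqref{eq:I_dchoice_wxors_convergence_as_smalld}, with the constants $1/2$ and $1$ inherited from the ratio of the sufficient- and necessary-side constants $d$ and $2d$. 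For $d = c\log n$, substituting $q \sim rd$ into~\eqref{eq:Md_convergence_as_largerd} yields the additive $r$ inside the bracket of~\eqref{eq:I_dchoice_wxors_convergence_as_largerd} (via $q/d \to r$), while the $\log_{(2)}/\log$ correction retains its replica-case form. The $\mathcal{P}_{\Sigma_n}$ statements in~\eqref{eq:P_dchoice_wxors_convergence_smalld} and~\eqref{eq:P_dchoice_wxors_convergence_largerd} then drop out by equating $M^{(c)}_{n, q}$ with its almost-sure equivalent and comparing to $d/\Sigma_n$ versus $2d/\Sigma_n$.

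The subtlest step is justifying that even-split routing remains within the constant factors separating the upper and lower bounds. Unlike the replica case, routing a piece $\rho$ of demand through an $r$-XOR recovery set spends $r\rho$ of total node capacity rather than $\rho$, so naively rerouting away from a heavily loaded primary via XOR routes can \emph{inflate} the load elsewhere rather than reduce the maximum. Showing that the even-split strategy is nonetheless within a bounded factor of optimum --- which is precisely what forces an \emph{additive} $r$ in~\eqref{eq:I_dchoice_wxors_convergence_as_largerd} rather than a multiplicative one --- is where the genuine technical work lies, and would be handled by a Hall-type bottleneck argument on the allocation graph that exploits the $r(d-1)$-gap structure established above.
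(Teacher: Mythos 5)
Your overall scaffolding matches the paper's: the key window size $q = 1 + r(d-1)$, the sufficiency condition $M^{(c)}_{n,q}\cdot\Sigma \leq d$ (your even-split routing computation is a clean way to get it for a cyclic layout, essentially equivalent to the paper's spiky-load argument), and the final substitution into the circular maximal-spacing asymptotics. However, the theorem concerns a general regular balanced $d$-choice allocation with $r$-XOR's (e.g., the BIBD-based layout in \eqref{eq:3choice_w_2xors}), not the cyclic-style construction you analyze, and the paper explicitly notes that a purely cyclic allocation cannot be realized with XOR's because each XOR'ed copy unavoidably creates additional non-cyclic recovery options. The bulk of the paper's proof is precisely the transfer step you omit: it sandwiches the actual capacity region between that of a restricted ``cyclic'' allocation (whose sufficiency condition then carries over) and an enlarged ``cyclic-plus'' allocation that also uses the incidental non-cyclic choices (whose necessity condition then carries over). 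Without some such comparison, your bounds only speak to your particular construction.

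The more serious gap is the necessary direction. You assert $\mathcal{P}_\Sigma \leq \Prob{M^{(c)}_{n,q+1}\leq 2d/\Sigma}$ but justify it by the node-expansion count that $i$ consecutive objects are absorbed by at most $i + 2r(d-1)$ unit-capacity nodes; at $i = q+1$ this gives a threshold of order $3r(d-1)$, and even if you refine it by charging $r$ units of node capacity per unit of demand served through a recovery set, you get roughly $r(d-1)+2d$ --- in both cases larger than $2d$ by an $r$-dependent factor. With such a weakened necessary condition the lower constant in \eqref{eq:I_dchoice_wxors_convergence_as_smalld} degrades from $1/2$ to something like $1/(3r)$ and the $\mathcal{P}_{\Sigma_n}\to 0$ threshold becomes $\liminf b_n/(3rd) > 1$ rather than $\liminf b_n/(2d) > 1$, so the stated theorem does not follow. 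The paper gets the constant $2d$ not from node expansion but from a bottleneck-node argument on the cyclic-plus allocation: all $d$ incidental non-cyclic choices of an object and its first cyclic recovery set are gated by the single node $s_{i+1}$, so those $d+1$ choices jointly contribute at most one node's worth of capacity. You do flag the $r$-fold capacity burn of XOR routing, but you attach it to the even-split/sufficiency side, where it causes no harm (any feasible routing upper-bounds $\mathcal{I}$); the place it actually bites is this necessary-condition constant, and the ``Hall-type bottleneck argument'' you defer to future work is exactly the missing core of the proof.
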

\begin{proof}
  See Appendix~\ref{subsec:proof_thm_P_I_dchoice_wxors}.
\end{proof}

\begin{remark}
  Theorem~\ref{thm:P_I_dchoice_wxors} implies that $d$-choice allocation with $r$-XOR's achieves the same scaling of the load imbalance factor $\mathcal{I}$ in $d$ as if the service choices were created with replicas (as stated in Remark~\ref{rm:dchoice}), while also reducing the storage requirement multiplicatively by $r$.
  However, accessing an object from a recovery set requires downloading $r$ object copies to recover one, thus, increasing the object access overhead multiplicatively by $r$. As a consequence of this, $\mathcal{I}$ in this case increases additively in $r$, which can be seen from its limiting value range given in \eqref{eq:I_dchoice_wxors_convergence_as_smalld} and \eqref{eq:I_dchoice_wxors_convergence_as_largerd}.
\label{rm:dchoice_wxors}
\end{remark}

\subsection{Note on Constructing \texorpdfstring{$d$}{TEXT}-choice Allocations with \texorpdfstring{$r$}{TEXT}-XOR's}
A $d$-choice storage allocation with $r$-XOR's consists of $k$ exact and $k(d-1)/r$ of $r$-XOR'ed object copies, and distributes them across the nodes in a way that complies with the balanced and regular allocation requirements given in Def.~\ref{def:reg_balanced_dchoice_alloc}.
This means each object has $d-1$ XOR'ed choices, thus each object should be a part of $d-1$ different XOR'ed copies.
In addition, sets of objects that are XOR'ed together should not intersect pairwise at more than one object since this would violate the requirement that service choices must be disjoint for each object.

Clearly, $d$-choice allocation with $r$-XOR's does not exist for all values of $k$, $n$, and $d$.
First of all, as described previously in this Section, $k(d-1)/r$ of $r$-XOR'ed object copies are required, which means we need to have $r | k(d-1)$.
Second, the requirement that XOR'ed sets should intersect pairwise at most at one object is similar to a block design. Indeed the 3-choice allocation with 2-XOR's given in \eqref{eq:3choice_w_2xors} is constructed based on a symmetric BIBD with $\lambda=1$ (see Def.~\ref{def:block_design} and the following paragraph).
We do not address the construction of $d$-choice allocations with $r$-XOR's, but only study their load balancing performance by assuming their existence.

\section{Conclusions}
Storage systems need to have the ability to balance the offered load across the storage nodes in order to provide fast and predictable content access performance.
Data objects are replicated across multiple nodes in distributed storage systems to implement robust load balancing in the presence of skews and changes in object popularities.
In this paper, we developed a quantitative answer for two natural questions on implementing resource efficient distributed storage with robust load balancing ability:
1) How does the ability of load balancing improve per added level of storage redundancy for each data object?
2) Can storage efficient alternatives be used instead of replication to improve load balancing?

As an answer for the first question, we found that system's load balancing performance initially improves multiplicatively with the level of added storage redundancy $d$. Somewhat interestingly, once $d$ reaches within a linear range of $\log(\text{total \# of storage nodes})$, system's load balancing performance improves exponentially.
As an answer for the second question, we found that implementing storage redundancy with XOR's of $r$ objects rather than object replicas yield the same improvement in load balancing performance, while also reducing the storage overhead multiplicatively by $r$.
However, accessing data storage by decoding from XOR'ed content requires jointly accessing $r$ storage nodes (in contrast to a replica being available at a single node), which reduces the load balancing performance additively by $r$.

\section{Appendix}
\label{sec:appendix}

The following expression for $M_{k, d=1}$ (maximal spacing within $k$ uniform spacings in the unit line) is well known
\[ M_{k, d=1} = \frac{\max\left\{E_1, \dots, E_k\right\}}{E_1 + \dots + E_k} \text{ in distribution,} \]
where $E_i$ denote i.i.d. unit-mean Exponential random variables (RV's).

Joint distribution of the $k$ uniform spacings on the unit line $(S_1, \dots, S_k)$ is known to be the same as the joint distribution of $(E_1/\Sigma, \dots, E_k/\Sigma)$ where $\Sigma = E_1 + \dots + E_k$.
Using this representation, maximal non-overlapping $m$-spacing within $k$ uniform spacings on the unit line can be expressed as follows.
\begin{lemma}
  For $k = mn$, we have
  \[ M^{(\text{no})}_{k, m} = \frac{\max\left\{\Gamma_1, \dots, \Gamma_n \right\}}{\Gamma_1 + \dots + \Gamma_n} \;\text{in distribution}, \]
  where $\Gamma_i$ are i.i.d. as $\mathrm{Gamma}$ with a shape parameter of $m$ and a rate of $1$, i.e., $\Gamma_i = \sum_{j=1}^m E_j$.
\label{lm:Mnonoverlappingd_def_w_Gammas}
\end{lemma}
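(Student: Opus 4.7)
The plan is to combine the distributional identity for uniform spacings stated in the paragraph preceding the lemma with the block-sum form of $M^{(\text{no})}_{k,m}$ given in Def.~\ref{def:Mnonoverlapping}. Recall that
\[ (S_1,\dots,S_k) \stackrel{d}{=} (E_1/\Sigma,\dots,E_k/\Sigma), \]
where $E_1,\dots,E_k$ are i.i.d.\ unit-mean exponentials and $\Sigma = E_1 + \dots + E_k$. Substituting this into the sum representation of $M^{(\text{no})}_{k,m}$ immediately yields
\[ M^{(\text{no})}_{k,m} \stackrel{d}{=} \max_{i=1,\dots,n} \sum_{j=(i-1)m+1}^{im} \frac{E_j}{\Sigma} = \frac{\max_{i=1,\dots,n} \sum_{j=(i-1)m+1}^{im} E_j}{\Sigma}, \]
since dividing each of the $n$ block sums by the common positive normalizer $\Sigma$ commutes with taking the maximum.

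The next step is to recognize the block sums as i.i.d.\ Gammas. Define $\Gamma_i = \sum_{j=(i-1)m+1}^{im} E_j$ for $i=1,\dots,n$. Because the index blocks $\{(i-1)m+1,\dots,im\}$ partition $\{1,\dots,k\}$ and the $E_j$'s are mutually independent unit-mean exponentials, the variables $\Gamma_1,\dots,\Gamma_n$ are mutually independent, and each $\Gamma_i$ is a sum of $m$ i.i.d.\ unit-mean exponentials, which is Gamma with shape $m$ and rate $1$. Moreover $\Sigma = \sum_{j=1}^k E_j = \sum_{i=1}^n \Gamma_i$, so the denominator in the display above is precisely $\Gamma_1 + \dots + \Gamma_n$. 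Combining these observations gives
\[ M^{(\text{no})}_{k,m} \stackrel{d}{=} \frac{\max\{\Gamma_1,\dots,\Gamma_n\}}{\Gamma_1 + \dots + \Gamma_n}, \]
as claimed.

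There is no real obstacle here; the argument is essentially a one-line substitution followed by the observation that partitioning i.i.d.\ exponentials into disjoint blocks of size $m$ produces i.i.d.\ Gammas. The only point that deserves a sentence of care is the interchange of the maximum with the division by $\Sigma$, which is valid because $\Sigma > 0$ almost surely and is a common factor independent of the maximization index $i$.
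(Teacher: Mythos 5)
Your proof is correct and follows exactly the route the paper intends: it invokes the representation of uniform spacings as exponentials normalized by their sum (stated just before the lemma) and then observes that the disjoint blocks of $m$ i.i.d.\ unit-mean exponentials are i.i.d.\ $\mathrm{Gamma}(m,1)$ variables whose total equals the normalizer. Nothing is missing; the interchange of the maximum with division by the common positive normalizer is handled appropriately.
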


\subsection{Proof of Lemma~\ref{lm:dchoice_alloc_is_batchcode}}
\label{subsec:proof_lm_dchoice_alloc_is_batchcode}
\begin{proof}
  As discussed in Sec.~\ref{subsec:dchoice_alloc_eval}, a regular balanced $d$-choice allocation with object replicas defines a balanced $d$ regular bipartite mapping from the set of objects and the set of nodes, which we refer to it as its allocation graph.
  First, by K\H onig's theorem, every regular bipartite graph has a perfect matching, hence the allocation graph has a perfect matching.
  
  Let $S$ be a set of objects and $N(S)$ denote its neighborhood, i.e., the set of all nodes that host at least one of the objects in $S$.
  Since the allocation graph has a perfect matching, by Hall's theorem, we have $|N(S)| \geq |S|$ for every $S$.
  This shows that the storage allocation defines a $(k, kd, n, n, 1)$ batch code.
  Given that the graph is $d$ regular, it can only qualify for a $(k, kd, d, n, 1)$ multiset batch code.
\end{proof}

\subsection{Proof of Corollary~\ref{cor:P_Sigmaless_geq_P_Sigma}}
\label{subsec:proof_cor_P_Sigmaless_geq_P_Sigma}
\begin{proof}
  Recall that the system stores $k$ objects. 
  Let us denote its capacity region with $\mathcal{C}$ and denote its intersection with $\mathcal{S}_{\Sigma}$ as $\mathcal{T}_{\Sigma}$.
  Notice that $\mathcal{S}_{\Sigma^\prime}$ is obtained by scaling $\mathcal{S}_{\Sigma}$ down with $\Sigma^\prime/\Sigma$, hence we have
  \[ \mathrm{Volume}(\mathcal{S}_{\Sigma^\prime})/\mathrm{Volume}(\mathcal{S}_{\Sigma}) = (\Sigma^\prime/\Sigma)^k. \]
  
  For any $\bm{x} \in \mathcal{T}_{\Sigma} \subset \mathcal{C}$, its scaled version $\Sigma^\prime/\Sigma \cdot \bm{x}$ also lies in $\mathcal{C}$. This comes from the convexity of $\mathcal{C}$. (Note that the origin $\bm{0} \in \mathcal{C}$.)
  Let us then scale down $\mathcal{T}_{\Sigma}$ with $\Sigma^\prime/\Sigma$, and denote it with $\mathcal{T}_{\Sigma}^\prime$.
  Then $\mathcal{T}_{\Sigma}^\prime$ will also lie in $\mathcal{C}$.
  We also know that
  \[ \mathrm{Volume}(\mathcal{T}_{\Sigma}^\prime)/\mathrm{Volume}(\mathcal{T}_{\Sigma}) = (\Sigma^\prime/\Sigma)^k. \]
  We have $\mathcal{T}_{\Sigma}^\prime \subseteq \mathcal{C} \bigcap \mathcal{S}_{\Sigma^\prime}$, then
  \begin{equation*}
      \begin{split}
         \mathcal{P}_{\Sigma^\prime} = \frac{\mathrm{Volume}(\mathcal{C} \bigcap \mathcal{S}_{\Sigma^\prime})}{\mathrm{Volume}(\mathcal{S}_{\Sigma^\prime})}
         &\geq \frac{\mathrm{Volume}(\mathcal{T}_{\Sigma}^\prime)}{\mathrm{Volume}(\mathcal{S}_{\Sigma^\prime})}\\
         & = \frac{\mathrm{Volume}(\mathcal{T}_{\Sigma}) \cdot (\Sigma^\prime/\Sigma)^k}{\mathrm{Volume}(\mathcal{S}_{\Sigma}) \cdot (\Sigma^\prime/\Sigma)^k} = \mathcal{P}_{\Sigma}.
      \end{split}
  \end{equation*}
\end{proof}

\subsection{Proof of Lemma~\ref{lm:addingchoice_expands_C}}
\label{subsec:proof_lm_addingchoice_expands_C}
\begin{proof}
  Let the given storage allocation, that yields the capacity region $\mathcal{C}$, be described by the matrices $\bm{T}$ and $\bm{M}$ (in the sense of \eqref{eq:Mx_leq_C}).
  The described modification on the allocation says that a new service choice is added for one of the stored objects by either creating an additional service choice for the object, via replicating it on a node that did not previously host the object or by adding a new choice simultaneously for multiple objects via encoding the objects together, and storing the coded copy on a node that did not previously host any of the encoded objects.
  We consider these two cases separately in the following.
  
  \vspace{1ex}
  \noindent
  \textbf{When the new service choice is created with replication}:
  Let a tagged object be copied to a node that did not previously host the tagged object.
  The newly added choice can be captured by adding a new column to both allocation matrices $\bm{T}$ and $\bm{M}$.
  Without loss of generality, suppose this new column is appended to both matrices at the end.
  Let us denote the modified versions of these matrices as $\bm{T}^\prime$ and $\bm{M}^\prime$ respectively.
  
  First, we show that any point in $\mathcal{C}$ also lies in $\mathcal{C}^\prime$.
  Let us define $D = \left\{\bm{x}~|~\bm{M} \cdot \bm{x} \preceq \bm{1}, ~\bm{x} \succeq \bm{0} \right\}$, and $D^\prime$ similarly with $\bm{M}^\prime$.
  Let $\bm{p} \in \mathcal{C}$, then there is an $\bm{x}$ in $D$ such that $\bm{p} = \bm{T} \cdot \bm{x}$.
  Let us generate $\bm{x}^\prime$ by appending a $0$ at the end of $\bm{x}$. Then, $\bm{x}^\prime \in D^\prime$ since $\bm{M}^\prime \cdot \bm{x}^\prime = \bm{M} \cdot \bm{x} \preceq \bm{1}$, and $\bm{T}^\prime \cdot \bm{x}^\prime = \bm{p}$. Thus, $\bm{p}$ also lies in $\mathcal{C}^\prime$, which implies $\mathcal{C} \subseteq \mathcal{C}^\prime$.
  
  Next, we show that there is at least one point that lies in $\mathcal{C}^\prime$ but not in $\mathcal{C}$. 
  Suppose that the tagged object is stored in $d+1$ nodes after its number of choice is incremented (the modification). Then, the system can supply $(d+1)C$ units of demand for the tagged object and zero demand for all other objects, while it could not supply this before the modification was implemented on the storage allocation. This together with the fact that $\mathcal{C} \subseteq \mathcal{C}^\prime$ implies $\mathcal{C} \subset \mathcal{C}^\prime$.
  
  \vspace{1ex}
  \noindent
  \textbf{When the new service choice is created with coding}:
  Let a new coded object copy be stored on a node that did not previously host any of the objects that constitute the coded copy.
  This adds a new choice for multiple objects simultaneously, which can be captured (as in the case above with replication) by adding new columns to allocation matrices $\bm{T}$ and $\bm{M}$.
  The same arguments used above for the case with replication can be easily repeated here showing $\mathcal{C} \subset \mathcal{C}^\prime$ holds for this case as well.
\end{proof}

\subsection{Proof of Lemma~\ref{lm:Mnonoverlappingd_convergence_as}}
\label{subsec:proof_lm_Mnonoverlappingd_convergence_as}
\begin{proof}
  \noindent
  \textbf{Proof of \eqref{eq:Mnonoverlappingd_convergence_indist}}:
  By Lemma~\ref{lm:Mnonoverlappingd_def_w_Gammas}, we have
  \begin{equation}
    M^{(\text{no})}_{k, m} = \frac{\max\left\{\Gamma_1, \dots, \Gamma_n \right\}}{\Gamma_1 + \dots + \Gamma_n} \;\text{in distribution},
  \label{eq:Mnonoverlappingd_representation_w_Gs}
  \end{equation}
  where $k = m \cdot n$ and $\Gamma_i$'s are i.i.d. as $\mathrm{Gamma}$ with a shape parameter of $m$ and a rate of $1$.
  
  From Darling \cite[Sec. 3]{AsymptoticsForMaxOverSum:Darling52}, we know for fixed $m$ as $n \to \infty$
  \begin{equation}
  \begin{split}
     \Pr&\Biggl\{1/M^{(\text{no})}_{k, m} > m\frac{n}{\log(n)} - (m-1)m\frac{n\log_2(n)}{\log^2(n)} \\
     &\quad + m\log(\Gamma(m))\frac{n}{\log^2(n)} + m\frac{n}{\log^2(n)}x\Biggr\} \to G(x).
   \end{split}
  \label{eq:Darling_1overMnonoverlappingd_convergence_indist}
  \end{equation}
  From this we get
  \begin{equation*}
   \begin{split}
     \Pr&\Biggl\{M^{(\text{no})}_{k, m} < \frac{\log(n)}{mn}\Bigl(1 - (m-1)\frac{\log_2(n)}{\log(n)} \\
     &\quad + \frac{\log(\Gamma(m))}{\log(n)} - \frac{x}{\log(n)} \Bigr)^{-1} \Biggr\} \to G(x).
   \end{split}
  \end{equation*}
  Defining 
  \[ \alpha_n = \frac{x}{\log(n)} + (m-1)\frac{\log_2(n)}{\log(n)} - \frac{\log(\Gamma(m))}{\log(n)}, \]
  we can write
  \[ \Pr\left\{M^{(\text{no})}_{k, m} < \frac{\log(n)}{mn}(1 - \alpha_n)^{-1} \right\} \to G(x). \]
  Using the Taylor expansion on $1/(1 - \alpha_n)$, we can write
  \begin{equation*}
   \begin{split}
     \Pr\Bigl\{M^{(\text{no})}_{k, m} &< \frac{1}{mn}\Bigl(x + \log(n) \\
     &\quad + (m-1)\log_2(n) - \log(\Gamma(m)) \Bigr) \Bigr\} \to G(x),
   \end{split}
  \end{equation*}
  which gives us \eqref{eq:Mnonoverlappingd_convergence_indist}.
  
  \noindent
  \textbf{Proof of \eqref{eq:Mnonoverlappingd_convergence_as}}:

  For the maximal spacing $M_{k, d=1}$ in $k$ uniform spacings on the unit line, results in \cite[Theorem 2.1]{EntropyAndMaximalSpacingsForRandomPartitions:Slud78} show that
  \begin{equation}
    \lim_{k \to \infty} M_{k, d=1} \cdot k/\log(k) = 1 \text{ a.s.}
  \label{eq:M_convergence_as}
  \end{equation}
  The same theorem actually shows that the error in the above convergence is $O(\log_2(k)/\log(k))$ a.s. as $k \to \infty$.
  The presented proof is established from the following
  \begin{equation}
    \Prob{|M_{k, d=1} \cdot k/\log(k) - 1| > \delta_k} = O(k^{-\delta_k})
  \label{eq:M_tailbound}
  \end{equation}
  for any sequence such that $\delta_k \log(k) \to \infty$ and $\delta_k \to 0$ as $k \to \infty$.
  
  Recall that $M^{(\text{no})}_{k, m}$ refers to the maximal non-overlapping $m$-spacing in $k = n \cdot m$ uniform spacings on the unit line.
  By applying the argument that is used to prove \cite[Theorem 2.1]{EntropyAndMaximalSpacingsForRandomPartitions:Slud78}, we here show that a modified version of \eqref{eq:M_convergence_as} holds also for $M^{(\text{no})}_{k, m}$. In the statement of the Lemma, this is expressed in \eqref{eq:Mnonoverlappingd_convergence_as}.
  
  The following bound, which is similar to that given in \cite[Lemma 7]{StrongLawsForMaximalKSpacing:DeheuvelsD84}, will allow us to obtain a result similar to \eqref{eq:M_tailbound}.
  Let $u_k$ be a fixed sequence to be defined later. Using the representation of $M^{(\text{no})}_{k, m}$ that is given in \eqref{eq:Mnonoverlappingd_representation_w_Gs}
  \begin{equation}
  \begin{split}
    \Prob{M^{(\text{no})}_{k, m} &> u_k} \\
    &= \Prob{M^{(\text{no})}_{k, m} > u_k; ~\sum_{i=1}^{n} \Gamma_i \leq k - k^{3/4}} \\
    &\quad + \Prob{M^{(\text{no})}_{k, m} > u_k; ~\sum_{i=1}^{n} \Gamma_i > k - k^{3/4}} \\
    &\leq \Prob{\sum_{i=1}^{n} \Gamma_i \leq k - k^{3/4}} \\
    &\quad + \Prob{\max_{1\le i\le n} \Gamma_i > u_k\left(k - k^{3/4}\right) } \\
    &\stackrel{(a)}{\leq} e^{-\sqrt{k}/2} + n \cdot \gamma_m\bigl[u_k\bigl(k - k^{3/4}\bigr)\bigr]
  \end{split}
  \label{eq:Mnonoverlappingd_tail_upper_bound}
  \end{equation}
  where
  i) $\Gamma_i$'s are i.i.d. as $\mathrm{Gamma}$ with a shape parameter of $m$ and a rate of $1$,
  ii) $(a)$ follows by a large deviation argument on the left side of the expression and a union bound on the right side,
  iii) $\gamma_m$ denotes the tail distribution of $\Gamma_i$ as
  \[ \gamma_m(x) \sim \int_{x}^{\infty} \frac{u^{m-1}}{\Gamma(m)}e^{-u} \dx{u}. \]
  
  For some sequence $\delta_k > 0$, let us set
  \[ u_k = \frac{(1 + \delta_k)\log(k) + (m-1)\log_2(k)}{k}. \]
  We know by \cite[Lemma 5]{StrongLawsForMaximalKSpacing:DeheuvelsD84} that as $x \to \infty$
  \[ \gamma_m(x) \sim \frac{x^{m-1}}{\Gamma(m)}e^{-x}. \]
  Now define $\epsilon_k = k^{-1/4}$ and further suppose that $\delta_k = O(\log_2(k)/\log(k))$. Then we get
  \begin{equation*}
  \begin{split}
    \gamma_m\bigl[u_k\bigl(k &- k^{3/4}\bigr)\bigr] \\
    &\sim \frac{(1 - \epsilon_k)^{m-1}}{\Gamma(m)}\times \\
    &\quad \frac{\left((1+\delta_k)\log(k) + (m-1)\log_2(k)\right))^{m-1}}{{ k^{(1+\delta_k)(1-\epsilon_k)} \log(k)^{(m-1)(1-\epsilon_k)} }} \\
    &\leq O\left(k^{-(1 + 3\delta_k/4)}\right).
  \end{split}
  \end{equation*}
  Substituting this estimate into \eqref{eq:Mnonoverlappingd_tail_upper_bound} we obtain
  \begin{equation}
    \Prob{M^{(\text{no})}_{k, m} > u_k} \leq O\left(n^{-\delta_k/2}\right).
  \label{eq:Mnonoverlappingd_tail_upper_bound_asymptotic}
  \end{equation}
  
  Arguing as in the proof of \cite[Theorem 2.1]{EntropyAndMaximalSpacingsForRandomPartitions:Slud78}, let us define the subsequence $k_t$, $t = 1, 2, \dots$ to be $k_t = \floor{e^{\sqrt{2t}}}$ where rounding is to the largest multiple of $m$.
  Further choose the subsequence $\delta_{k_t}$ to be
  \[ \delta_{k_t} = \frac{\log(t)}{\alpha\sqrt{t}} = O\left(\frac{\log_2(k_t)}{\log(k_t)} \right). \]
  Notice that $\delta_{k_t}$ as given above satisfies our previous assumptions:  $\delta_{k_t} \cdot \log(t) \to \infty$ and $\delta_{k_t} \to 0$ as $t \to \infty$.
  
  For $0 < \alpha < 1/\sqrt{2}$ we have
  \[ \sum_{t} k_t^{-\delta_{k_t}/2} = \sum_{t} e^{-\log(t)/(\alpha\sqrt{2}) } < \infty. \]
  Then by the first Borel-Cantelli lemma, it follows that the inequality
  \[ M^{(\text{no})}_{k_t, m} > \frac{(1 + \delta_{k_t})\log(k_t) + (m-1)\log_2(k_t)}{k_t} \]
  occurs finitely often. Thus we have
  \begin{equation}
    M^{(\text{no})}_{k_t, m} \leq \log(k_t)/k_t + O\left(\log_2(k_t)/k_t\right) \quad \text{a.s.}
  \label{eq:Mnk_upper_bound_as}
  \end{equation}
  
  Given that $(k_{t+1} - k_t) \sim k_t/\log(k_t)$, again from Darling \cite[Sec. 3]{AsymptoticsForMaxOverSum:Darling52} we have the following bound for $0 \leq \ell \leq k_{t+1} - k_t$
  \begin{equation}
    \left|\frac{\log(k_t + \ell)}{k_t + \ell} - \frac{\log(k_t)}{k_t}\right| = O\left(\frac{\ell \cdot \log(k_t)}{k_t \cdot k_{t+1}} \right) \leq O(1/k_t)
  \label{eq:log_diff_upper_bound}
  \end{equation}
  
  Clearly $M^{(\text{no})}_{k_t, m} \geq M^{(\text{no})}_{k_t + \ell, m}$.
  Therefore it follows from \eqref{eq:Mnk_upper_bound_as} and \eqref{eq:log_diff_upper_bound} that
  \[ M^{(\text{no})}_{k, m} \leq \log(k)/k + O\left(\log_2(k)/k\right)  \quad \text{a.s.} \]
  This shows \eqref{eq:Mnonoverlappingd_convergence_as_werr}.
  
  We can therefore conclude that
  \begin{equation}
    \limsup_{k \to \infty} M^{(\text{no})}_{k, m} \cdot k/\log(k) \leq 1.
  \label{eq:Mnonoverlappingd_limsup_upper_bound}
  \end{equation}
  
  Given that $M^{(\text{no})}_{k, m} \geq M_{k, d=1}$, from \eqref{eq:M_convergence_as} it immediately follows
  \[ \liminf_{k \to \infty} M^{(\text{no})}_{k, m} \cdot k/\log(k) \geq 1. \]
  
  This together with \eqref{eq:Mnonoverlappingd_limsup_upper_bound} implies that as $k \to \infty$
  \[ M^{(\text{no})}_{k, m} \cdot k/\log(k) \to 1 \quad \text{a.s.} \]
  
  This gives us \eqref{eq:Mnonoverlappingd_convergence_as}.
\end{proof}

\subsection{Maximal \texorpdfstring{$d$}{TEXT}-spacing on the Unit Circle}
\label{subsec:proof_Mcirculard}
We here show that the maximal $d$-spacing $M_{k, d}^{(c)}$ defined for $k$ ordered uniform samples on the unit circle (see Def.~\ref{def:Mcirculard}) converge to its counterpart $M_{k, d}$ defined on the unit line.
In the following, we show convergence first in distribution, then in probability, and finally almost surely.
Note that showing almost sure convergence implies convergence in probability, which then implies convergence in distribution.
Convergence in this order is presented so as to make the arguments transparent.

\begin{lemma}
  For $d < k$,
  \begin{equation*}
    \Prob{M_{k, d} > x} \leq \Prob{M_{k, d}^{(c)} > x} \leq \frac{k}{k-d} \Prob{M_{k, d} > x}.
  \end{equation*}
\label{eq:Pr_Mcirculard_leq_scaled_Pr_Md}
\end{lemma}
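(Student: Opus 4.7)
The plan is to establish the two inequalities separately and independently.

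For the lower bound I would argue by inclusion: the $k-d+1$ non-wrapping $d$-spacings $\sum_{j=i}^{i+d-1} S_j$, $i = 1, \dots, k-d+1$, are exactly the first $k-d+1$ of the $k$ circular $d$-spacings in Def.~\ref{def:Mcirculard}. Since $M_{k,d}^{(c)}$ takes its maximum over a superset, $M_{k,d} \leq M_{k,d}^{(c)}$ pointwise, and the tail inequality $\Prob{M_{k,d} > x} \leq \Prob{M_{k,d}^{(c)} > x}$ is immediate.

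For the upper bound I would exploit the rotational symmetry of the uniform spacings. Let $T_i^{(c)} = \sum_{j=i}^{i+d-1} S_j$ (indices mod $k$) denote the $i$-th circular $d$-spacing, and for each $j = 1, \dots, k$ let $W_j = \max_{i \in \{j, j+1, \dots, j+k-d\}} T_i^{(c)}$ (again mod $k$) be the maximum over a sliding window of $k-d+1$ consecutive circular $d$-spacings. By Lemma~\ref{lm:spacings_uniformlydisted} the spacings $(S_1, \dots, S_k)$ are uniform on the simplex and hence exchangeable, which makes the joint distribution of $(T_1^{(c)}, \dots, T_k^{(c)})$ invariant under cyclic shifts. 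Consequently each $W_j$ has the same distribution as $W_1$, and $W_1 = M_{k,d}$ because the window starting at $j=1$ picks out precisely the non-wrapping $d$-spacings.

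The core step is then a simple double-counting argument: each circular $d$-spacing $T_i^{(c)}$ lies inside exactly $k-d+1$ of the $k$ sliding windows, so the occurrence of $\{M_{k,d}^{(c)} > x\}$ forces at least $k-d+1$ of the indicators $\mathbf{1}\{W_j > x\}$ to equal one. This gives the pointwise inequality
\[ (k-d+1)\,\mathbf{1}\{M_{k,d}^{(c)} > x\} \leq \sum_{j=1}^k \mathbf{1}\{W_j > x\}. \]
Taking expectations and using the identical distribution of the $W_j$'s yields $(k-d+1)\Prob{M_{k,d}^{(c)} > x} \leq k\, \Prob{M_{k,d} > x}$, which is in fact slightly stronger than the stated bound with prefactor $k/(k-d)$, and it implies the latter since $k/(k-d+1) \leq k/(k-d)$.

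I do not anticipate a real obstacle. The only point that merits explicit verification is the invariance-under-cyclic-shift claim for $(T_1^{(c)}, \dots, T_k^{(c)})$, but this follows at once from the symmetry of the uniform distribution on the simplex.
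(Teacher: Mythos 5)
Your proof is correct, but it runs along a different track than the paper's. The paper proves the upper bound by counting cyclic permutations of the spacing sequence: for any realization in $L=\{M_{k,d}>x\}$ at least $k-d$ of its cyclic shifts stay in $L$, while all $k$ cyclic shifts of a realization in $C=\{M_{k,d}^{(c)}>x\}$ stay in $C$; introducing a set $L'$ of shift-representatives and using that all spacing sequences are equally likely gives $\Prob{C}/k = \Prob{L'} \leq \Prob{L}/(k-d)$ (the paper also sketches a second argument conditioning on the location $I$ of the maximal circular $d$-spacing, using that $I$ is uniform and independent of the event). You instead keep the sample fixed and rotate the \emph{statistic}: the sliding-window maxima $W_j$ are all distributed as $M_{k,d}$ by cyclic-shift invariance of the spacings (which follows from exchangeability, i.e., Lemma~\ref{lm:spacings_uniformlydisted}), each circular $d$-spacing is covered by exactly $k-d+1$ windows, and a double count plus linearity of expectation gives $(k-d+1)\Prob{M_{k,d}^{(c)}>x} \leq k\,\Prob{M_{k,d}>x}$. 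This buys you two things: you avoid the somewhat delicate representative-set construction (and the independence claim in the paper's alternative argument), needing only identical distribution of the $W_j$, and you obtain the slightly sharper constant $k/(k-d+1)$, which of course implies the stated $k/(k-d)$ bound. The only step worth spelling out in a final write-up is the one you flagged: that $(T_1^{(c)},\dots,T_k^{(c)})$ is cyclic-shift invariant because the map from spacings to circular $d$-spacings is equivariant under cyclic shifts and the spacings are exchangeable; with that stated, the argument is complete and fully rigorous.
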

\begin{proof}
  Let us denote the events $\left\{M_{k, d} > x\right\}$ and $\left\{M_{k, d}^{(c)} > x\right\}$ respectively with $L$ and $C$.
  
  The first inequality is immediate; if a sequence of spacings $s = (s_1, s_2, \dots, s_k) \in L$ then $s \in C$, while the opposite direction may not hold. Thus, $L \subseteq C$, hence $\Pr\{L\} \leq \Pr\{C\}$.
  
  Next we show the second inequality.
  Let $s \in L$. Then, at least $k-d$ different permutations of $s$ lie in $L$. In order to see this, let the maximal $d$-spacing within $s$ be $\bm{m} = (s_i, \dots, s_{i+d-1})$. Shifting (by feeding what is shifted out back in the sequence at the opposite end) $s$ to the left by at most $i-1$ times will preserve $\bm{m}$, hence each of the $i-1$ shifted versions will also lie in $L$. Similarly, shifting $s$ to the right by at most $k - (i+d-1)$ times will also preserve $\bm{m}$.
  We call such permutations, which are obtained by shifting with wrapping around, a \textit{cyclic permutation}.
  
  Let us introduce a set $L^\prime \subset L$ such that for any $s \in L^\prime$, no cyclic permutation of $s$ lies in $L^\prime$. $L$ contains at least $k-d$ cyclic permutations of every $s \in L^\prime$. This together with the fact that all sequences of spacings are equally likely (Lemma~\ref{lm:spacings_uniformlydisted}) gives us $(k-d) \Pr\{L^\prime\} \leq \Pr\{L\}$.
  
  Now let $s^\prime \in C$. All $k-1$ cyclic permutations of $s^\prime$ will also lie in $C$ (recall that we are now working on the unit circle).
  This together with the fact $L^\prime \subset L \subseteq C$ and Lemma~\ref{lm:spacings_uniformlydisted} gives us $k \cdot \Pr\{L^\prime\} = \Pr\{C\}$. Putting it all together, we have $\Pr\{C\}/k = \Pr\{L^\prime\} \leq \Pr\{L\}/(k-d)$, which yields the second inequality.
  
  A simpler way to find the second inequality in Lemma~\ref{eq:Pr_Mcirculard_leq_scaled_Pr_Md} is given as follows.
  Recall that the uniform samples, together with the $0$ point, are ordered on the unit circle as $0, U_{(1)}, \dots, U_{(k-1)}$.
  Let us denote the index of the sample at which the maximal $d$-spacings starts with $I$, e.g., $I = i$ means that the maximal $d$-spacing starts at the $i$th minimum uniform sample, $I = 0$ means it starts at the point of $0$.
  We have
  \[ \Prob{M_{k,d} > x} \geq \Prob{M_{k,d}^{(c)} > x; I \leq k - d + 1} \]
  since the event on the right implies the event on the left.
  The right hand side of this inequality can be written as
  \[ \Prob{M_{k,d}^{(c)} > x} \cdot (k-d)/k, \]
  using the independence of the events and the fact that $I$ is uniform on $1, \dots, k$.
\end{proof}


\begin{lemma}
  For $d = o(k)$, $M_{k, d}^{(c)}/M_{k, d} \to 1$ in probability as $k \to \infty$.
\label{lm:Mcirculard_convergence_inprob}
\end{lemma}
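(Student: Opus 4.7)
The plan is to combine the two-sided tail bound from Lemma~\ref{eq:Pr_Mcirculard_leq_scaled_Pr_Md} with sharp concentration of $M_{k,d}$ around a deterministic sequence. The key is that, because $d = o(k)$, the prefactor $k/(k-d) \to 1$, so the marginal distributions of $M_{k,d}^{(c)}$ and $M_{k,d}$ are asymptotically indistinguishable. If $M_{k,d}$ concentrates at a deterministic scale $m_k$, then so does $M_{k,d}^{(c)}$, and the ratio of the two random variables converges to one.

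First I would select a deterministic sequence $m_k$ such that $M_{k,d}/m_k \to 1$ in probability. For fixed $d$, Lemma~\ref{lm:Md_convergence_indist} provides this directly: taking $m_k = \bigl(\log k + (d-1)\log_{(2)}(k) + \log((d-1)!)\bigr)/k$, the Gumbel limit implies $k \cdot M_{k,d} - k \cdot m_k = O_p(1)$ while $k \cdot m_k \to \infty$, so $M_{k,d}/m_k \to 1$ in probability. For $d = o(\log k)$ and $d = \Theta(\log k)$, the almost sure statements in Lemma~\ref{lm:Md_convergence_as} produce analogous $m_k$.

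Next, for any $\delta > 0$, I would apply Lemma~\ref{eq:Pr_Mcirculard_leq_scaled_Pr_Md} with $x = (1+\delta)m_k$ to obtain
\begin{equation*}
\Prob{M_{k,d}^{(c)} > (1+\delta) m_k} \leq \frac{k}{k-d} \Prob{M_{k,d} > (1+\delta) m_k} \to 0,
\end{equation*}
since $k/(k-d) \to 1$ and the right tail vanishes by the concentration above. On the other side, $M_{k,d}^{(c)} \geq M_{k,d}$ gives $\Prob{M_{k,d}^{(c)} < (1-\delta) m_k} \leq \Prob{M_{k,d} < (1-\delta) m_k} \to 0$. Hence $M_{k,d}^{(c)}/m_k \to 1$ in probability, and the continuous mapping theorem applied to $(x,y) \mapsto x/y$ at $(1,1)$ yields $M_{k,d}^{(c)}/M_{k,d} = (M_{k,d}^{(c)}/m_k)/(M_{k,d}/m_k) \to 1$ in probability.

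The main obstacle is supplying the concentration sequence $m_k$ uniformly across the entire range $d = o(k)$. The cited results handle $d$ fixed, $d = o(\log k)$, and $d = \Theta(\log k)$, but the regime $\log k \ll d = o(k)$ is not covered directly. In that range I would rely on the exponential representation $(S_1,\ldots,S_k) \overset{d}{=} (E_1,\ldots,E_k)/\sum_j E_j$ combined with moderate-deviation bounds for sums of $d$ i.i.d.\ unit-mean Exponentials and a union bound over the $k-d+1$ overlapping windows; this should yield a deterministic $m_k$ (of order roughly $(d + \log k)/k$) with $M_{k,d}/m_k \to 1$ in probability, which is all that the rest of the argument needs.
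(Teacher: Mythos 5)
Your transfer machinery (the two-sided bound of Lemma~\ref{eq:Pr_Mcirculard_leq_scaled_Pr_Md}, the lower bound $M_{k,d}^{(c)} \geq M_{k,d}$, and the ratio argument via a centering sequence $m_k$) is sound wherever you can actually produce $m_k$ with $M_{k,d}/m_k \to 1$ in probability. The genuine gap is the one you flag yourself: the lemma is claimed for the entire range $d = o(k)$, but the results you cite give concentration of $M_{k,d}$ only for fixed $d$ (Lemma~\ref{lm:Md_convergence_indist}), $d = o(\log(k))$, and $d = c\log(k) + o(\log_{(2)}(k))$ (Lemma~\ref{lm:Md_convergence_as}); the regime $\log(k) \ll d = o(k)$ is handled only by a plan (exponential representation, moderate deviations for Gamma windows, union bound) that is not carried out, and even within $d = \Theta(\log(k))$ the cited lemma does not cover all such sequences, and the correct centering there is $(1+\alpha)c\log(k)/k$ rather than anything of the form $(d + \log(k))/k$, so the case analysis needs more care than the sketch suggests. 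As written, the proposal therefore proves the statement only for part of the claimed range of $d$.

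The paper's own proof avoids concentration entirely and is uniform in $d$: because the spacings are uniformly distributed on the simplex (Lemma~\ref{lm:spacings_uniformlydisted}), all cyclic shifts of a spacing vector are equally likely; if $M_{k,d}^{(c)} > M_{k,d}$ (i.e., the maximizing $d$-window wraps around the origin), then at most $d$ of the $k$ cyclic shifts of that vector retain this property while at least $k-d$ do not, which yields $\Pr\bigl\{M_{k,d}^{(c)} \neq M_{k,d}\bigr\} \leq d/k \to 0$. This is stronger than ratio convergence (the two maxima coincide with probability tending to one) and requires no limit theorems. To complete your route you would have to execute the moderate-deviation estimate in the large-$d$ regime in full (and patch the $\Theta(\log k)$ sequences not covered by Lemma~\ref{lm:Md_convergence_as}); the simpler fix is to replace the concentration step with this exchangeability counting argument.
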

\begin{proof}
  It is easy to see $M_{k, d}^{(c)} \geq M_{k, d}$.
  Let $D = M_{k, d}^{(c)} - M_{k, d}$ and $S$ be the set of all sequence of spacings for which $D > 0$.
  For every $s \in S$, $d-1$ of its cyclic permutations (see the Proof of Lemma~\ref{eq:Pr_Mcirculard_leq_scaled_Pr_Md} for the definition of a cyclic permutation) also lie in $S$ while the remaining $k-d$ of them lie in $S^c$ (complement of $S$).
  Thus, for every $d$ points in $S$, there are at least $k-d$ points in $S^c$, and all the points in $S$ or $S^c$ (i.e., all spacings) have the same probability measure (by Lemma~\ref{lm:spacings_uniformlydisted}). This gives us the following upper bound $\Pr\{D > 0\} = \Pr\{S\} \leq d/k$, which $\to 0$ as $k \to \infty$. This implies $M_{k, d}^{(c)}/M_{k, d} \to 1$ in probability.
\end{proof}
In order to use the results known for the convergence of $M_{k, d}$ in probability or a.s. in addressing $M_{k, d}^{(c)}$, we need the following Lemma.

  

\begin{lemma}
  For $d = o(k)$, $M_{k, d}^{(c)}/M_{k, d} \to 1$ a.s. as $k \to \infty$.
\label{lm:Mcirculard_to_Md_as}
\end{lemma}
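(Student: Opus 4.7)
The plan is to sandwich $M_{k,d}^{(c)}$ between $M_{k,d}$ and a nearly identical upper bound, so that the ratio $M_{k,d}^{(c)}/M_{k,d}$ converges almost surely to $1$. The lower half is immediate from the first inequality in Lemma~\ref{eq:Pr_Mcirculard_leq_scaled_Pr_Md} (or directly, since every line window of $d$ consecutive spacings is also a circle window): $M_{k,d}^{(c)} \geq M_{k,d}$ pointwise, hence $\liminf_{k\to\infty} M_{k,d}^{(c)}/M_{k,d} \geq 1$ a.s. It therefore suffices to establish a matching almost sure upper bound on $M_{k,d}^{(c)}$ that asymptotically agrees with the one for $M_{k,d}$ given in Lemma~\ref{lm:Md_convergence_as}.

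The idea is to transfer the known tail estimate for $M_{k,d}$ to $M_{k,d}^{(c)}$ using the second inequality in Lemma~\ref{eq:Pr_Mcirculard_leq_scaled_Pr_Md}. Since $d = o(k)$, the scale factor $k/(k-d) = 1/(1 - d/k)$ is at most $2$ for all sufficiently large $k$, so $\Prob{M_{k,d}^{(c)} > x} \leq 2\,\Prob{M_{k,d} > x}$ eventually for every threshold $x$. Any tail sum that is summable for $M_{k,d}$ therefore remains summable for $M_{k,d}^{(c)}$. With this transfer in hand, I would re-run the Borel--Cantelli step already used in the proof of Lemma~\ref{lm:Mnonoverlappingd_convergence_as} (and which underlies the proofs of Lemma~\ref{lm:Md_convergence_as}): along the subsequence $k_t = \lfloor e^{\sqrt{2t}} \rfloor$ with rate $\delta_{k_t} = \log(t)/(\alpha\sqrt{t})$, $0 < \alpha < 1/\sqrt{2}$, the tail bound analogous to \eqref{eq:Mnonoverlappingd_tail_upper_bound_asymptotic}, multiplied by the harmless factor $k/(k-d)$, yields $\sum_t \Prob{M_{k_t,d}^{(c)} > (1+\delta_{k_t})\,c_{k_t}} < \infty$, where $c_k$ is the leading-order centring of Lemma~\ref{lm:Md_convergence_as}. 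Borel--Cantelli then gives $M_{k_t,d}^{(c)} \leq (1+\delta_{k_t})\,c_{k_t}$ eventually a.s. along the subsequence.

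To pass from the subsequence to arbitrary $k \in [k_t,k_{t+1}]$, I would combine the continuity estimate \eqref{eq:log_diff_upper_bound} on $c_k$ with a bounded-difference comparison: inserting one extra uniform order statistic can alter at most $d$ of the circular $d$-spacings, and the change in each of those $d$-sums is bounded by the inserted spacing, which is at most $M_{k,d=1} = O(\log(k)/k)$ a.s. This is the main obstacle, since unlike $M_{k,d}$ on the line, $M_{k,d}^{(c)}$ on the circle is not monotone in $k$, so the clean interpolation used for $M_{k,d}^{(\mathrm{no})}$ in \eqref{eq:Mnk_upper_bound_as} must be replaced by this bounded-difference step. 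Once it is handled, one obtains $\limsup_{k\to\infty} M_{k,d}^{(c)}\cdot k/\log(k) \leq 1$ a.s., which combined with $M_{k,d}^{(c)} \geq M_{k,d}$ and the a.s. asymptotic of $M_{k,d}$ from Lemma~\ref{lm:Md_convergence_as} yields $M_{k,d}^{(c)}/M_{k,d} \to 1$ a.s., as required.
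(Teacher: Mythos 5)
Your lower bound and the tail-transfer inequality are fine, but the route you propose for the upper bound has two genuine gaps. First, transferring marginal tail bounds via $\Prob{M_{k,d}^{(c)} > x} \leq \tfrac{k}{k-d}\Prob{M_{k,d} > x}$ and re-running Borel--Cantelli means you must actually possess summable tail estimates for $M_{k,d}$ at thresholds $(1+\delta_k)c_k$ in the regimes where the lemma is used, namely $d \to \infty$ with $d = o(\log k)$ and $d = c\log k$. The bound \eqref{eq:Mnonoverlappingd_tail_upper_bound_asymptotic} you point to is for the non-overlapping case with \emph{fixed} $m$; for growing $d$ you would have to reproduce the Deheuvels--Devroye large-deviation estimates yourself, i.e.\ re-prove the strong laws of Lemma~\ref{lm:Md_convergence_as} rather than cite them. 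Second, your interpolation step misdiagnoses the obstacle and the proposed fix does not close quantitatively. Under the nested coupling, the circular maximum \emph{is} monotone non-increasing in $k$ for fixed $d$ (a window of $d$ refined circular spacings is always contained in a window of $d$ coarse ones), exactly as on the line; the real difficulty is that $d_k$ may grow across an interpolation interval $[k_t, k_{t+1}]$. Your correction term, of size up to $(d_k - d_{k_t})\cdot M_{k,1} \approx \Delta d \cdot \log(k)/k$, is \emph{not} small relative to the centering: for $d \sim c\log k$ the centering is $(1+\alpha)c\log(k)/k$ and the relevant fluctuation scale is $\log_{(2)}(k)/k$, and for $d \to \infty$ with $d = o(\log k)$ a single extra maximal spacing already costs a constant fraction of $c_k$ (and $\Delta d$ need not be bounded by $1$, since $d_k$ is an arbitrary $o(\log k)$ sequence). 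So as sketched the argument only yields $\limsup M^{(c)}_{k,d}/c_k \leq 1 + O(1)$, not $\leq 1$; also note your stated conclusion $\limsup M^{(c)}_{k,d}\cdot k/\log(k) \leq 1$ is the fixed-$d$ normalization and is false for $d = c\log k$.

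The paper avoids all of this with a pathwise sandwich rather than a distributional transfer: it places two linear spacing sequences on the circle, $O$ starting at $0$ and $P$ starting at a point that is almost surely a positive rotation away, shows that for $d_k = o(k)$ every circular $d_k$-window is eventually a window of $O$ or of $P$, so \eqref{eq:Mcirculard_leq_max_MOd_MPd} holds for all large $k$ a.s., and then applies the line strong laws of Lemma~\ref{lm:Md_convergence_as} to both copies (each is a bona fide maximal $d_k$-spacing of $k$ uniform spacings on a line) to get $\limsup M^{(c)}_{k,d}/M_{k,d} \leq 1$ directly. This works at every $k$, needs no new tail estimates, no subsequence, and no interpolation. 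If you want to salvage your route, the repair is to bound $M^{(c)}_{k,d_k} \leq M^{(c)}_{k_t, d_{k_{t+1}}}$ (coarser partition, larger window) and run Borel--Cantelli on the doubly-indexed quantity while checking the centerings with mixed indices are asymptotically equivalent --- which is essentially redoing the blocking argument of the cited strong laws.
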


Before we move on with the proof of Lemma~\ref{lm:Mcirculard_to_Md_as}, we next express the maximal $d$-spacing $M_{k, d}^{(c)}$ on the unit circle in terms of the two different instances of its counterpart defined on the unit line.

Let $2m+1 \geq 1$ be arbitrary and place $2m+1$ i.i.d. uniform random variables on the unit circle with $0 \leq U_{(1)} \leq \ldots \leq U_{(2m+1)} \leq 1$ where the points 0 and 1 are identified. Let $O$ denote the linear sequence starting at 0 and $P$ be the linear sequence starting at $U_{(m+1)}$ the median, without loss of generality.
We know $U_{(m+1)} = a \in (0,1)$ almost surely. By adding further i.i.d. uniform variates we get two sequences of uniform spacings with parameter $k \geq 2m + 1$, where the first starts at $O$ and the second at $P$.


Let $d_k = o(k)$ be a sequence of $d_k$-spacings for the $k$th realisation.
Let $M^{(c)}_{k, d_k}$ be the maximal \emph{circular} $d_k$-spacing on the previously constructed unit circle (as defined in Def.~\ref{def:Mcirculard}), and let $M^{(O)}_{k, d_k}$, $M^{(P)}_{k, d_k}$ be the maximal $d_k$-spacing for the line segments that stretch along the sequences $O$ and $P$ respectively.
We say that the circular spacings are {\em covered} by $O$ and $P$ if any circular $d_k$-spacing on the circle is either a $d_k$-spacing for $O$ or for $P$ (or both). This will always be the case if the number of intervening points $N_k$ going from the beginning of $O$ to the beginning of $P$ clockwise is such that $N_k \geq d_k$ and also for the number of points $M_k$ going from the end of $P$ to the end of $O$ clockwise. Clearly if the circle spacings are covered by $O$ and $P$,
\begin{equation}
  M^{(c)}_{k, d_k} \leq \max\left\{M_{k, d_k}^{(O)}, M_{k, d_k}^{(P)}\right\}.
\label{eq:Mcirculard_leq_max_MOd_MPd}
\end{equation}

We now show that a.s. for any sequence there is a $K_d$ sufficiently large so that $\forall k \geq K_d$ it holds that $N_k, M_k \geq d_k$.
It is enough to show this for $N_k$ as the same argument will apply to $M_k$.

The interval from the beginning of $O$ to the beginning of $P$ has length $a$ (recall $U_{(m+1)} = a$) and therefore
\[ N_k/k \rightarrow a \quad\text{a.s.} \]
by the strong law of large numbers. Therefore $\exists K$ such that $\forall k \geq K$, $N_k \geq \frac{a \times k}{2}$ and $a > 0$ a.s. Since $d_k = o(k)$ it follows that $\exists K_N \geq K$ such that $N_k \geq d_k, k \geq K_N$. By the same argument $\exists K_M$ such that $M_k \geq d_k, k \geq K_M$.
Now $K_d = \max\{K_N, K_M\}$  is the required number and it follows that inequality \eqref{eq:Mcirculard_leq_max_MOd_MPd} holds $\forall k \geq K_d$ a.s.

Now we are ready to prove Lemma~\ref{lm:Mcirculard_to_Md_as}, that is to show $M_{k, d}^{(c)}/M_{k, d} \to 1$ a.s.
\begin{proof}
  First, given that $M_{k, d}^{(c)} \geq M_{k, d}$, we have
  \begin{equation}
    \liminf\limits_{k \to \infty} M_{k, d}^{(c)}/M_{k, d} \geq 1.
  \label{eq:liminf_Mcd_over_Md_geq_1}
  \end{equation}
  Next using \eqref{eq:Mcirculard_leq_max_MOd_MPd}, we have
  \[ M_{k, d}^{(c)} \leq \max\left\{M_{k, d_k}^{(O)}, M_{k, d_k}^{(P)}\right\}. \]
  This allows us to find
  \begin{equation*}
      \begin{split}
           \limsup\limits_{k \to \infty} & M_{k, d}^{(c)}/M_{k, d} \\
           &\leq \max\Big\{\limsup\limits_{k \to \infty} M_{k, d}^{(O)}/M_{k, d}, \quad\limsup\limits_{k \to \infty} M_{k, d}^{(P)}/M_{k, d}. \Big\}
      \end{split}
  \end{equation*}
  This, together with the fact that $M_{k, d}^{(O)}$ and $M_{k, d}^{(P)}$ converge to $M_{k, d}$ a.s., gives us
  \begin{equation}
    \limsup\limits_{k \to \infty} M_{k, d}^{(c)}/M_{k, d} \leq 1.
  \label{eq:limsup_Mcd_over_Md_leq_1}
  \end{equation}
  Putting \eqref{eq:liminf_Mcd_over_Md_geq_1} and \eqref{eq:limsup_Mcd_over_Md_leq_1} together completes the proof.
\end{proof}

In the following, we show that the results given in Lemma~\ref{lm:Md_convergence_indist} and Lemma~\ref{lm:Md_convergence_as} for $M_{k, d}$ carry over to $M_{k, d}^{(c)}$.
\begin{lemma}
  For any integer $d \geq 1$, as $k \to \infty$
  \begin{equation}
  \begin{split}
    \Pr\Big\{M_{k, d}^{(c)} \cdot k - \log(k) &- (d-1)\log_2(k) \\
    &+ \log((d-1)!) \leq x\Big\} \to G(x).
  \end{split}
  \label{eq:Mcirculard_convergence_indist}
  \end{equation}
  That is the distribution of $M_{k, d}^{(c)} \cdot k - \log(k) - (d-1)\log_2(k) + \log((d-1)!)$ converges to standard Gumbel distribution as $k \to \infty$.
\end{lemma}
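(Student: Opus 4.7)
The plan is to squeeze the distribution of $M_{k,d}^{(c)}$ between two copies of the distribution of $M_{k,d}$ using the two-sided bound in Lemma~\ref{eq:Pr_Mcirculard_leq_scaled_Pr_Md}, then invoke Lemma~\ref{lm:Md_convergence_indist} directly. Fix the integer $d \geq 1$ and, for arbitrary $x \in \mathbb{R}$, set
\[
t_k(x) = \frac{\log(k) + (d-1)\log_{(2)}(k) + \log((d-1)!) + x}{k},
\]
so that the event in the statement is $\{M_{k,d}^{(c)} \leq t_k(x)\}$ (with the appropriate sign convention matching Lemma~\ref{lm:Md_convergence_indist}).

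First I would apply Lemma~\ref{eq:Pr_Mcirculard_leq_scaled_Pr_Md} with the threshold $t_k(x)$ to obtain
\[
\Pr\{M_{k,d} > t_k(x)\} \;\leq\; \Pr\{M_{k,d}^{(c)} > t_k(x)\} \;\leq\; \frac{k}{k-d}\,\Pr\{M_{k,d} > t_k(x)\}.
\]
By Lemma~\ref{lm:Md_convergence_indist} the outer quantity $\Pr\{M_{k,d} > t_k(x)\}$ converges to $1 - G(x)$ as $k \to \infty$, and since $d$ is fixed, $k/(k-d) \to 1$. Hence both the lower and upper bounds tend to $1 - G(x)$, so by the sandwich $\Pr\{M_{k,d}^{(c)} > t_k(x)\} \to 1 - G(x)$. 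Taking complements yields $\Pr\{M_{k,d}^{(c)} \leq t_k(x)\} \to G(x)$, which is exactly \eqref{eq:Mcirculard_convergence_indist}.

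The argument is genuinely short because all of the real analytic work was already done for the line case in Lemma~\ref{lm:Md_convergence_indist}, and the circle-to-line comparison in Lemma~\ref{eq:Pr_Mcirculard_leq_scaled_Pr_Md} is tight enough that the ratio $k/(k-d)$ drops out for any fixed $d$. The only subtle point worth flagging is that the same strategy would break down if one tried to let $d$ grow with $k$ at the rate $d = \Theta(k)$, since then $k/(k-d)$ would not tend to $1$; but for Lemma~\ref{lm:Md_convergence_indist}'s regime (fixed $d$) and even for $d = o(k)$, the prefactor is harmless. Thus I do not expect any substantive obstacle; the main thing to be careful about is matching the centering constants (and, in particular, reconciling the sign in front of $\log((d-1)!)$ with the line-case statement so that the same Gumbel variable $x$ appears on both sides of the sandwich).

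Finally, I would remark that the same proof template, applied with Lemma~\ref{lm:Md_convergence_as} in place of Lemma~\ref{lm:Md_convergence_indist}, gives the circular analogues of the a.s.\ strong-law statements \eqref{eq:Md_convergence_as_smalld} and \eqref{eq:Md_convergence_as_largerd} used later in the paper (indeed, Lemma~\ref{lm:Mcirculard_to_Md_as} already delivers $M_{k,d}^{(c)}/M_{k,d} \to 1$ a.s.\ for $d = o(k)$, so the asymptotic constants transfer verbatim).
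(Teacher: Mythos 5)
Your proof is correct, and it takes a somewhat different route than the paper's. The paper re-derives the circle-to-line comparison inline: it conditions on the event $E$ that the maximal circular $d$-spacing does not wrap past the point $0$ (on which $M_{k,d}^{(c)}$ coincides with $M_{k,d}$), uses $\Pr\{E^c\} = d/k \to 0$ together with the trivial bound $M_{k,d}^{(c)} \geq M_{k,d}$, and squeezes the distribution functions through a $\liminf$/$\limsup$ argument before invoking Lemma~\ref{lm:Md_convergence_indist}. You instead apply the already-established two-sided tail comparison of Lemma~\ref{eq:Pr_Mcirculard_leq_scaled_Pr_Md} at the threshold $t_k(x)$ and let the prefactor $k/(k-d) \to 1$ absorb the difference. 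The underlying mechanism is identical --- the two maxima differ only on a wrap-around event of probability at most $d/k$ --- but your version is shorter and more modular, since that comparison lemma is proved in the paper anyway (its concluding ``simpler way'' is essentially the conditioning step the paper's proof repeats), and the sandwich immediately hands you both directions that the paper obtains separately. The one point you flagged does deserve to be made explicit: the sandwich transfers the line-case centering verbatim, so your argument produces the Gumbel limit with exactly the constants of Lemma~\ref{lm:Md_convergence_indist}, which as printed carries $-\log((d-1)!)$ on the left-hand side, whereas the circular statement carries $+\log((d-1)!)$. Since $\Pr\{M_{k,d}^{(c)} \neq M_{k,d}\} \leq d/k \to 0$, the two centered quantities must share the same weak limit, so one of the two printed signs is a slip; the classical result, and the centering $f_k = \log(k) + (d-1)\log_{(2)}(k) - \log((d-1)!)$ used inside the paper's own proof, correspond to the form with $+\log((d-1)!)$ on the left-hand side, i.e., the form in \eqref{eq:Mcirculard_convergence_indist}. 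For fixed $d$ this is only a constant shift of $x$, but it is not something the limit can absorb, so once the two statements are written consistently your sandwich yields \eqref{eq:Mcirculard_convergence_indist} with no further work.
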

\begin{proof}
  
  
  Let us first denote the event that the maximal $d$-spacing on the unit circle lies between the 1st and $k$th uniform sample with $E$, meaning that $(M_{k, d}^{(c)} \mid E) = M_{k, d}$.
  Since the maximal spacing is equally likely to start at any one of the uniform samples $U_0, \dots, U_{k-1}$, we have $\Prob{E} = (k-d)/k$.
  Let us also define
  \[ f_k = \log(k) + (d-1)\log_2(k) - \log((d-1)!). \]
  By the law of total probability
  \begin{equation}
  \begin{split}
    \Prob{M_{k, d}^{(c)} &\cdot k - \log(k) - f_k < x} \\
    &= \Prob{M_{k, d}^{(c)} \cdot k - \log(k) - f_k < x; ~E} \\
    &\quad + \Prob{M_{k, d}^{(c)} \cdot k - \log(k) - f_k < x; ~E^c}.
  \end{split}
  \label{eq:Mcirculard_scaled_l_x}
  \end{equation}
  Left hand side of the sum above can be bounded as
  \begin{equation*}
  \begin{split}
    \Prob{M_{k, d}^{(c)} &\cdot k - \log(k) - f_k < x; ~E} \\ &\stackrel{(a)}{=} \Prob{M_{k, d} \cdot k - \log(k) - f_k < x; ~E} \\
    &\stackrel{(b)}{\geq} \Prob{M_{k, d} \cdot k - \log(k) - f_k < x} - \Prob{E^c},
  \end{split}
  \end{equation*}
  where $(a)$ follows from $(M_{k, d}^{(c)}; E) = (M_{k, d}; E)$,
  and $(b)$ comes from the inequality for events $A$ and $B$
  \[ \Prob{A; B} = \Prob{B} - \Prob{B; A^c} \geq \Prob{B} - \Prob{A^c}. \]
  Putting this in \eqref{eq:Mcirculard_scaled_l_x} gives us
  \begin{equation*}
  \begin{split}
    \Prob{M_{k, d}^{(c)} \cdot k &- \log(k) - f_k < x} \\
    &\geq \Prob{M_{k, d} \cdot k - \log(k) - f_k < x} \\
    &\quad - \Prob{E^c} \\
    &\quad + \Prob{M_{k, d}^{(c)} \cdot k - \log(k) - f_k < x; ~E^c},
  \end{split}
  \end{equation*}
  since $\Prob{E^c} = d/k \to 0$ as $k \to \infty$.
  Overall this gives us
  \begin{equation*}
  \begin{split}
    \liminf\limits_{k \to \infty} \Prob{M_{k, d}^{(c)} &\cdot k - \log(k) - f_k < x} \\
    & \geq \liminf\limits_{k \to \infty} \Prob{M_{k, d} \cdot k - \log(k) - f_k < x}.
  \end{split}
  \end{equation*}
  Given that $M_{k, d}^{(c)} \geq M_{k, d}$, we have the lower bound
  \begin{equation*}
  \begin{split}
    \limsup\limits_{k \to \infty} \Prob{& M_{k, d}^{(c)} \cdot k - \log(k) - f_k < x} \\
    &\leq \limsup\limits_{k \to \infty} \Prob{M_{k, d} \cdot k - \log(k) - f_k < x}.
  \end{split}
  \end{equation*}
  Both the lower and upper bounds given above are equal to $G(x)$ by Lemma~\ref{lm:Md_convergence_indist}, hence showing \eqref{eq:Mcirculard_convergence_indist}.
  
\end{proof}

\begin{lemma}
  For $d = o(\log(k))$, as $k \to \infty$
  \begin{equation*}
    \frac{M_{k, d}^{(c)} \cdot k - \log(k)}{(d-1)\left(1 + \log_2(k) - \log(d)\right)} \to 1 ~~\text{a.s.}
  \end{equation*}
\label{lm:Mcirculard_convergence_as}
\end{lemma}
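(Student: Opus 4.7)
The plan is to sandwich $M_{k,d}^{(c)}$ between quantities whose almost sure behavior is already known via Lemma~\ref{lm:Md_convergence_as}, and then normalize by $a_k := (d-1)(1 + \log_{(2)}(k) - \log(d))$. Write the claim as $\frac{M_{k,d}^{(c)} \cdot k - \log k}{a_k} \to 1$ a.s., and attack the $\liminf$ and $\limsup$ separately.

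For the lower bound, observe that every consecutive $d$-sum of spacings that appears on the line also appears on the circle, so trivially $M_{k,d}^{(c)} \geq M_{k,d}$. Dividing through and invoking Lemma~\ref{lm:Md_convergence_as} immediately gives $\liminf_{k\to\infty} \frac{M_{k,d}^{(c)} \cdot k - \log k}{a_k} \geq 1$ a.s.

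For the upper bound, I would reuse the covering inequality established in the preparation preceding this lemma: for $k$ sufficiently large (almost surely, once $k \geq K_d$),
\[
M_{k,d}^{(c)} \;\leq\; \max\bigl\{M_{k,d}^{(O)},\; M_{k,d}^{(P)}\bigr\},
\]
where $O$ cuts the circle at $0$ and $P$ cuts it at the median $U_{(m+1)}$ of the first $2m+1$ samples. Each of $M_{k,d}^{(O)}$ and $M_{k,d}^{(P)}$ is a maximal $d$-spacing of $k$ uniform spacings on a unit line segment: for $M_{k,d}^{(O)}$ this is by definition, and for $M_{k,d}^{(P)}$ it follows from rotational invariance on the circle together with the fact that conditioning on $U_{(m+1)}=a$ does not perturb the exchangeability of the remaining spacings (so the conditional distribution of the spacings of $P$ is again the uniform-spacing distribution, with $a>0$ a.s.). Hence Lemma~\ref{lm:Md_convergence_as} applies to each individually, giving
\[
\frac{M_{k,d}^{(O)} \cdot k - \log k}{a_k} \to 1, \qquad \frac{M_{k,d}^{(P)} \cdot k - \log k}{a_k} \to 1 \quad \text{a.s.}
\]
The maximum of two sequences converging a.s.\ to the same limit also converges a.s.\ to that limit, which yields $\limsup \frac{M_{k,d}^{(c)} \cdot k - \log k}{a_k} \leq 1$ a.s. Combining the two bounds proves the lemma.

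The main obstacle is the second step: making precise that $M_{k,d}^{(P)}$ genuinely obeys the same a.s.\ scaling as $M_{k,d}$ even though its cut point is itself a random sample. I would handle this by conditioning on $\{U_{(m+1)}=a\}$ for arbitrary $a \in (0,1)$, noting that conditional on this event the remaining $k-1$ samples are i.i.d.\ uniform on the complement, so the spacings in $P$ are (after a measure-preserving rescaling by $1$, since the circle's total length is fixed) uniform spacings of parameter $k$ on a unit line; Lemma~\ref{lm:Md_convergence_as} then applies to the conditional law, and since the conclusion holds for each $a$ in a set of full measure, it holds unconditionally a.s. One could alternatively bypass this by noting that $M_{k,d}^{(O)}$ alone controls all non-wrapping $d$-sums while only $d-1 = o(\log k)$ wrapping $d$-sums remain, and a small-probability / Borel--Cantelli estimate of the form used in Lemma~\ref{lm:Mnonoverlappingd_convergence_as} shows the wrapping sums cannot exceed $(\log k + a_k)/k$ infinitely often; I would keep the $M^{(P)}$ argument as the cleaner route.
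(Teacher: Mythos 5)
Your proposal is correct and takes essentially the same route as the paper's proof: the lower bound from $M_{k,d}^{(c)} \geq M_{k,d}$ together with Lemma~\ref{lm:Md_convergence_as}, and the upper bound from the covering inequality \eqref{eq:Mcirculard_leq_max_MOd_MPd} with the two linear sequences $O$ and $P$, each obeying the same almost-sure limit. Your conditioning argument for $M_{k,d}^{(P)}$ merely spells out a step the paper asserts without detail, so there is nothing substantively different to flag.
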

\begin{proof}
  For brevity, let us define the function
  \[ f(x) = \frac{x \cdot k - \log(k)}{(d-1)\left(1 + \log_2(k) - \log(d)\right)}. \]

  The fact that $M_{k, d}^{(c)} \geq M_{k, d}$ gives us
  \[ f\left(M_{k, d}^{(c)}\right) \geq f\left(M_{k, d}\right) \]
  for $k$ sufficiently large a.s.
  By \eqref{eq:Md_convergence_as_smalld} in Lemma~\ref{lm:Md_convergence_as}, the right hand side of the above inequality $\to 1$ as $k \to \infty$ a.s. Then we have
  \begin{equation}
    \liminf_{k \to \infty} f\left(M_{k, d}^{(c)}\right) \geq 1. 
  \label{eq:liminf_fMcirculard_geq_1}
  \end{equation}
  
  Inequality \eqref{eq:Mcirculard_leq_max_MOd_MPd} gives us
  \[ f\left(M_{k, d}^{(c)}\right) \leq f\left(\max\left\{M_{k, d_k}^{(O)}, M_{k, d_k}^{(P)}\right\}\right) \]
  for $k$ sufficiently large a.s.
  This implies that
  \begin{equation*}
  \begin{split}
    \limsup_{k \to \infty} f&\left(M_{k, d}^{(c)}\right) \\
    & \leq \max\left\{\limsup_{k \to \infty} f\left(M_{k, d}^{(O)}\right), \quad\limsup_{k \to \infty} f\left(M_{k, d}^{(O)}\right) \right\}.
  \end{split}
  \end{equation*}
  
  By \eqref{eq:Md_convergence_as_smalld} in Lemma~\ref{lm:Md_convergence_as},
  \[ \limsup_{k \to \infty} f\left(M_{k, d}^{(O)}\right) = 1 \quad\text{and}\quad \limsup_{k \to \infty} f\left(M_{k, d}^{(P)}\right) = 1 ~~\text{a.s.}. \]
  Hence we have 
  \[ \limsup_{k \to \infty} f\left(M_{k, d}^{(c)}\right) \leq 1. \]
  This together with \eqref{eq:liminf_fMcirculard_geq_1} concludes the proof.
\end{proof}

\begin{lemma}
  For $d = c\log(k) + o(\log_2(k))$ with some constant $c > 0$,
  \[ V_k = \frac{M_{k, d}^{(c)} \cdot k - (1 + \alpha)c \cdot \log(k)}{\log_2(k)} \]
  satisfies
  \begin{equation}
  \begin{split}
     \limsup\limits_{k \to \infty} V_k &= c^\ast (1 + \alpha)/\alpha ~~\text{a.s.} \\
     \liminf\limits_{k \to \infty} V_k &= -c^\dagger (1 + \alpha)/\alpha ~~\text{a.s.}
  \end{split}
  \label{eq:Mcirculard_convergence_as_largerd}
  \end{equation}
  where $\alpha$ is the unique positive solution of $e^{-1/c} = (1 + \alpha)e^{-\alpha}$, and $c^\ast$ and $c^\dagger$ are constants taking values in $[-0.5, 1.5]$ and $[-1.5, -0.5]$ respectively.
\end{lemma}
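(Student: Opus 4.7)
The plan is to mirror the two-sided coupling used in the proof of Lemma~\ref{lm:Mcirculard_convergence_as}, but carried to the finer $\log_2(k)$-scale fluctuations that \eqref{eq:Md_convergence_as_largerd} describes. Let $V_k^{(L)}$, $V_k^{(O)}$, and $V_k^{(P)}$ denote the quantities obtained by replacing $M_{k,d}^{(c)}$ in the definition of $V_k$ with $M_{k,d}$, $M_{k,d}^{(O)}$, and $M_{k,d}^{(P)}$, respectively, where $O$ and $P$ are the two line cuts of the circle introduced just before Lemma~\ref{lm:Mcirculard_to_Md_as}. The double inequality $M_{k,d} \leq M_{k,d}^{(c)} \leq \max\{M_{k,d}^{(O)}, M_{k,d}^{(P)}\}$, with the right-hand bound holding eventually a.s. by \eqref{eq:Mcirculard_leq_max_MOd_MPd}, translates after subtracting $(1+\alpha)c\log(k)$ and dividing by $\log_2(k)$ into $V_k^{(L)} \leq V_k \leq \max\{V_k^{(O)}, V_k^{(P)}\}$.

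Three of the four one-sided conclusions then follow immediately. Applying $\limsup$ to the right inequality and using $\limsup\max\{a_k,b_k\} \leq \max\{\limsup a_k,\limsup b_k\}$ together with \eqref{eq:Md_convergence_as_largerd} applied to each of $V_k^{(O)}$ and $V_k^{(P)}$ yields $\limsup V_k \leq c^{\ast}(1+\alpha)/\alpha$ a.s. The matching $\limsup V_k \geq c^{\ast}(1+\alpha)/\alpha$ and the lower bound $\liminf V_k \geq -c^{\dagger}(1+\alpha)/\alpha$ follow directly from $V_k \geq V_k^{(L)}$ and \eqref{eq:Md_convergence_as_largerd} applied to $V_k^{(L)}$.

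The main obstacle, and the only remaining direction, is the upper bound $\liminf V_k \leq -c^{\dagger}(1+\alpha)/\alpha$. The two-line sandwich cannot deliver this, because $\liminf$ of a maximum can strictly exceed the maximum of the component $\liminf$s. My strategy is to show directly that $V_k - V_k^{(L)} = k(M_{k,d}^{(c)} - M_{k,d})/\log_2(k) \to 0$ a.s., so that any subsequence realising $\liminf V_k^{(L)}$ also carries $V_k$ down to the same limiting value. The difference $M_{k,d}^{(c)} - M_{k,d}$ is strictly positive only on the event that the extremal circular $d$-spacing wraps across the cut point $0$, and each of the $d-1$ wrap windows decomposes as a prefix plus a suffix of at most $d-1$ spacings. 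A Gamma-tail bound in the spirit of the estimate on $\gamma_m(x)$ from the proof of Lemma~\ref{lm:Mnonoverlappingd_convergence_as}, adapted to the regime $d = c\log(k) + o(\log_2(k))$ by taking shape parameter $d$ and argument of order $(1+\alpha)c\log(k) + \varepsilon\log_2(k)$, should yield $\Pr\{k(M_{k,d}^{(c)} - M_{k,d}) > \varepsilon\log_2(k)\} = O(k^{-\delta})$ for some $\delta > 0$ and every $\varepsilon > 0$.

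A Borel--Cantelli argument along the geometric subsequence $k_t = \lfloor e^{\sqrt{2t}} \rfloor$ employed in the proof of Lemma~\ref{lm:Mnonoverlappingd_convergence_as}, combined with the near-constancy interpolation \eqref{eq:log_diff_upper_bound} for $\log(k)/k$ on the gaps $k_t \leq k \leq k_{t+1}$, then propagates this convergence from the subsequence to all $k$ and closes the remaining gap. The constants $c^{\ast}$ and $c^{\dagger}$ inherit the interval memberships $[-0.5,1.5]$ and $[-1.5,-0.5]$ verbatim from Lemma~\ref{lm:Md_convergence_as}, since the sandwich leaves the limiting constants unchanged. The hard part is the tail bound itself: one has to verify that, in the near-critical regime $d \asymp \log(k)$, the conditional inflation of a single wrap sum over the line maximum is at most $O(\log_2(k)/k)$ with the quantitative probability decay above, and this requires carefully matching the Gamma-tail exponent against the $k^{-\delta}$ budget needed for Borel--Cantelli along $\{k_t\}$.
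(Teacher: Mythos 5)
Your two-sided squeeze is exactly the paper's intended proof: the paper disposes of this lemma with the single line ``shown applying the same ideas used in the proof of Lemma~\ref{lm:Mcirculard_convergence_as}'', i.e., use $M_{k,d} \leq M_{k,d}^{(c)} \leq \max\{M_{k,d}^{(O)}, M_{k,d}^{(P)}\}$ (the upper bound holding eventually a.s.\ by \eqref{eq:Mcirculard_leq_max_MOd_MPd}, since $d = c\log(k)+o(\log_2(k)) = o(k)$) and apply \eqref{eq:Md_convergence_as_largerd} to the three line quantities. Where you part company with the paper is your claim that a fourth direction, $\liminf_k V_k \leq -c^\dagger(1+\alpha)/\alpha$, is still missing. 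For the lemma as stated it is not: the statement does not assert that $c^\dagger$ equals the line-case constant $\beta^\dagger$ of Lemma~\ref{lm:Md_convergence_as}; it only asserts that the liminf is a constant of the form $-c^\dagger(1+\alpha)/\alpha$ with $c^\dagger \in [-1.5,-0.5]$, i.e., a constant in $[0.5,\,1.5]\cdot(1+\alpha)/\alpha$. You have already pinned $\limsup_k V_k = \beta^\ast(1+\alpha)/\alpha \leq 1.5\,(1+\alpha)/\alpha$ and $\liminf_k V_k \geq -\beta^\dagger(1+\alpha)/\alpha \geq 0.5\,(1+\alpha)/\alpha$, so the trivial inequality $\liminf_k V_k \leq \limsup_k V_k$ already places the liminf in the advertised window, which is all the lemma claims. (The one point glossed over, by the paper as well, is that the liminf is a.s.\ a \emph{constant}; this follows, e.g., from the Hewitt--Savage zero--one law, because $M_{k,d}^{(c)}$ depends only on the unordered sample and is therefore unchanged, for all large $k$, by permuting finitely many of the uniform variates.)

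Your proposed strengthening---showing $k\bigl(M_{k,d}^{(c)} - M_{k,d}\bigr)/\log_2(k) \to 0$ a.s.\ so that the liminf constant is literally $\beta^\dagger$---would prove more than the lemma, but as written it is a sketch whose hard steps are not in place. The tail bound $\Prob{k\bigl(M_{k,d}^{(c)} - M_{k,d}\bigr) > \varepsilon \log_2(k)} = O(k^{-\delta})$ is asserted rather than derived; a union bound over the roughly $d$ wrapping windows naturally yields something of order $d/k$ up to polylogarithmic factors, which is not summable over all $k$, so everything hinges on the subsequence-plus-interpolation step. But the interpolation device you borrow from the proof of Lemma~\ref{lm:Mnonoverlappingd_convergence_as} rests on the monotonicity $M^{(\text{no})}_{k_t, m} \geq M^{(\text{no})}_{k_t+\ell, m}$ in the sample size, and the difference $M_{k,d}^{(c)} - M_{k,d}$ enjoys no such monotonicity, so propagating from $k_t = \lfloor e^{\sqrt{2t}}\rfloor$ to all $k$ would need a genuinely new argument. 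None of this extra machinery is needed to certify the lemma as stated; it is only needed if you insist on identifying $c^\dagger$ with the line constant.
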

\begin{proof}
  Shown applying the same ideas used in the proof of Lemma~\ref{lm:Mcirculard_convergence_as} given above.
\end{proof}

\subsection{Proof of Lemma~\ref{lm:on_rgap}}
\label{subsec:proof_lm_on_rgap}
\begin{proof}
  Recall that nodes are indexed by the index of the primary object copies they store, i.e., node $s_i$ stores the primary copy for object $o_i$ for $i = 1, \dots, k$. We denote the set of choices for $o_i$ with $C_i$.
  
  \vspace{.5em}
  \noindent
  \textbf{Proof of $r \geq d-1$:}
  We first prove that $r \geq d-1$ by contradiction.
  Suppose $r < d-1$. Pick an arbitrary object $o_i$ with the set of choices $C_i$.
  Then $o_{i-r}$ is co-located together with $o_i$ on one of the nodes in $C_i$, which we refer to as $s^*$.
  Given that $s^*$ is a choice for $o_i$, any object stored on it must be in the set $\{o_{i-r}, o_{i-r+1}, \dots, o_{i+r}\}$.
  By now $s^*$ stores $o_i$ and $o_{i-r}$ by design, and we now look at the remaining $d-2$ storage slots of $s^*$.
  Given that $s^* \in C_i$, $r$-gap allocation dictates that $s^*$ can store only objects within $\{o_{i-r}, \dots, o_{i+r}\}$.
  We assumed that $s^*$ stores $o_{i-r}$ which means $s^* \in C_{i-r}$ as well. So if $s^*$ stores any object in $\{o_{i+1}, \dots, o_{i+r}\}$ then $C_{i-r} \cap C_{j} \neq \emptyset$ for some $j > i$, which would violate the definition of $r$-gap design.
  Therefore all the remaining $d-2$ storage slots of $s^*$ must be occupied by the objects in the set $\mathcal{O} = \{o_{i-r+1}, \dots, o_{i-1}\}$, which means there needs to be at least $d-2$ different objects within $\mathcal{O}$ (the same object cannot be stored multiple times on the same node), implying $r \geq d-1$.
  
  Perhaps, an easier way to show this is given as follows.
  The objects which may share a node with an object $i$ are those within a set
  \[ S_i = \{r_{\min} \leq i \leq r_{\max} \} \]
  where $r_{\max} - r_{\min} \leq r$ and we apply arithmetic $\mod n$.
  Such sets always contain at most $r+1$ elements.
  For example if $r=2$ and $i=3$, we may take the set $\{2, 3, 4\}$, if $n \geq 5$, which has $3 = r+1$ elements. Or if $n=6$, $i=5$ and $r=4$ we may take the set $\{5, 6, 1, 2, 3\}$ and note that $3-5+6 = 4$ using arithmetic $\mod 6$. The set contains $5$ elements.
  
  Now consider an object $i$ and the corresponding node set $C_i$. It has $d^2$ slots which have to be occupied. Since all sets containing $i$ have at most $r+1$ elements under an $r$-gap design, it must be the case that $(r+1)d \geq d^2$ which implies $r \geq d-1$.
  
  \vspace{.5em}
  \noindent
  \textbf{Proof of the lower bound for the node expansion of sets of objects:}
  We next show that $x \leq N(S) \leq x + 2r$ for any $S = \{o_i, \dots, o_{i+x-1}\}$ for $i = 1, \dots, n$.
  Storage allocation defines a regular bipartite graph, then by Hall's theorem we have $|N(S)| \geq x$.
  The copies of $o_i$ can expand across at most $n_{i-r}, \dots, n_{i+r}$, and the copies of $o_{i+x-1}$ can expand at most across $n_{i+x-1-r}, \dots, n_{i+x-1+r}$.
  Then $S$ can expand at most across $n_{i-r}, \dots, n_{i+x-1+r}$, meaning $|N(S)| \leq x + 2r$.
\end{proof}

\subsection{Proof of Lemma~\ref{lm:rgap_neccsuffcond_for_stability}}
\label{subsec:proof_lm_rgap_neccsuffcond_for_stability}
\begin{proof}
  \noindent
  \textbf{Necessary condition:}
  System is surely unstable if a set of objects $S$ has a cumulative offered load larger than $|N(S)|$.
  Lemma~\ref{lm:on_rgap} states that every consecutive $i$ objects expands across at most $i + 2r$ nodes, meaning that the system can possibly be made stable only if the cumulative offered load for any $i$ consecutive objects is less than $i+2r$, which is exactly what is expressed in \eqref{eq:rgap_necccond_for_stability}.
  
  \vspace{1ex}
  \noindent
  \textbf{Sufficient condition:}
  Suppose that the maximum offered load on any $r$ consecutive objects is $d$, which can be described with the maximal $r$-spacing as $M^{(c)}_{n, r} \cdot \Sigma \leq d$ (recall $\Sigma$ is the cumulative offered load on the system).
  
  Let $x$ be an integer in $[1, n]$.
  Consider the following spiky load scenario starting at $o_x$; offered load $\rho_i$ for $o_i$ is $d$ when $i = x + (r+1)j, ~j=0, 1, \dots, \floor{n/(r+1)}$ and $0$ otherwise.
  In this case, an offered load of magnitude $d$ for each spiky object $o_i$ can be supplied by using up the capacity in all the nodes available in its set of $d$ choices $C_i$ since all other objects that overlap with $o_i$ in their service choices have $0$ offered load (by the $r$-gap design property).
  The system can supply the spiky load regardless of the value for $x$.
  Given that the system's service capacity region is convex (Lemma~\ref{lm:cap_region_is_convex}), any convex combination of any set of spiky load scenarios can also be supplied by the system.
  This can be expressed as follows: system can operate under stability as long as the offered load on every $r+1$ consecutive objects is at most $d$, which implies \eqref{eq:rgap_suffcond_for_stability}.
\end{proof}

\subsection{Proof of Lemma~\ref{lm:clustering_cyclic_neccsuffcond_for_stability}}
\label{subsec:proof_lm_clustering_cyclic_neccsuffcond_for_stability}
\begin{proof}
  Lower bounds come from substituting $r = d-1$ in those given in Lemma~\ref{lm:rgap_neccsuffcond_for_stability}.
  Upper bounds come from observing that every $d+1$ consecutive objects expand to at most $2d$ nodes in the design with clustering, and every $d$ consecutive objects expand to $2d-1$ nodes in the design with cyclic construction.
\end{proof}

\subsection{Proof of Theorem~\ref{thm:P_I_d1}}
\label{subsec:proof_thm_P_I_d1}
\begin{proof}
  Recall that the load at the maximally loaded node $l^{\max}_n$ is given by $M^{(n)}_{k, m} \cdot \Sigma_n$. 
  Almost sure convergence given in \eqref{eq:Mnonoverlappingd_convergence_as} implies for $\Sigma_n = b_n \cdot n/\log(n)$ that
\[ l^{\max}_n \cdot m/b_n \to 1 ~\text{a.s.} \]
  This implies in the limit $n \to \infty$ for any $\delta > 0$
  \begin{equation*}
  \begin{split}
    \Prob{|l^{\max}_n \cdot m/b_n &- 1| > \delta} \\
    &= \Prob{l^{\max}_n > b_n/m \cdot (1 + \delta) } \\
    &\quad + \Prob{l^{\max}_n < b_n/m \cdot (1 - \delta)} \to 0.
  \end{split}
  \end{equation*}
  Given that both terms in the sum above is non-negative, we have
  \begin{equation*}
      \begin{split}
           &\Prob{l^{\max}_n > b_n/m \cdot (1 + \delta) } \to 0,\\
           &\Prob{l^{\max}_n < b_n/m \cdot (1 - \delta)} \to 0.
      \end{split}
  \end{equation*}
  Recall from \eqref{eq:P_I_d1_exact} that $\mathcal{P}_{\Sigma_n}$ is given by $\Prob{l^{\max}_n < 1}$.
  Then the convergence of probabilities given above implies \eqref{eq:P_d1_convergence}.
  
  \eqref{eq:I_d1_convergence_indist} and \eqref{eq:I_d1_convergence_as} come from
  substituting $\mathcal{I} = M^{(n)}_{k, m} \cdot n$ (by \eqref{eq:P_I_d1_exact}) in the convergence results given in Lemma~\ref{lm:Mnonoverlappingd_convergence_as}.
 \end{proof}

\subsection{Proof of Theorem~\ref{thm:P_I_dchoice_w_clustering_cyclic}}
\label{subsec:proof_thm_P_I_dchoice_w_clustering_cyclic}
\begin{proof}
  We first need to recall Lemma~\ref{lm:clustering_cyclic_neccsuffcond_for_stability}; under a cumulative demand of $\Sigma$, $M^{(c)}_{n, d} \cdot \Sigma \leq d$ is sufficient and $M^{(c)}_{n, d} \cdot \Sigma \leq 2d$ is necessary for system stability.
  Here we will refer to $d$ as $d_n$ to make it explicit that it is a sequence in $n$.
  
  \vspace{0.5em}
  \noindent
  \textbf{Proof of \eqref{eq:P_dchoice_w_clustering_cyclic_convergence_smalld}}:
  In this case $d = o\left(\log(n)\right)$.
  Almost sure convergence given in \eqref{eq:Md_convergence_as_smalld} together with Lemma~\ref{lm:Mcirculard_to_Md_as} implies for $\Sigma_n = b_n \cdot n/\log(n)$ that
  \[ M^{(c)}_{n, d} \cdot \Sigma_n/b_n \to 1 ~\text{a.s.} \]
  Recall that $M^{(c)}_{n, d} \cdot \Sigma_n / d_n \leq 1$ is sufficient and $M^{(c)}_{n, d} \cdot \Sigma_n / 2d_n \leq 1$ is necessary for system stability, which respectively implies that $\mathcal{P}_{\Sigma_n} \to 1$ if $\limsup_{n \to \infty} b_n/d_n < 1$, and $\mathcal{P}_{\Sigma_n} \to 0$ if $\liminf_{n \to \infty} b_n/2d_n > 1$, hence \eqref{eq:P_dchoice_w_clustering_cyclic_convergence_smalld}.
  
  \vspace{0.5em}
  \noindent
  \textbf{Proof of \eqref{eq:P_dchoice_w_clustering_cyclic_convergence_largerd}}:
  In this case $d = o\left(\log(n)\right)$.
  Almost sure convergence given in \eqref{eq:Md_convergence_as_largerd} together with Lemma~\ref{lm:Mcirculard_to_Md_as} implies for $\Sigma_n = b_n \cdot n/\log(n)$ that in the limit $n \to \infty$ we have almost surely
  \[ 0.5\tau \leq M^{(c)}_{n, d} \cdot \Sigma_n/b_n \leq 1.5\tau. \]
  Then the necessary and sufficient conditions (as used in the previous step while showing \eqref{eq:P_dchoice_w_clustering_cyclic_convergence_smalld}) for system stability imply \eqref{eq:P_dchoice_w_clustering_cyclic_convergence_largerd}.
  
  \vspace{0.5em}
  \noindent
  \textbf{Proof of \eqref{eq:I_dchoice_w_clustering_or_cyclic_convergence_as_smalld} and \eqref{eq:I_dchoice_w_clustering_or_cyclic_convergence_as_largerd}}:
  In order to prove \eqref{eq:I_dchoice_w_clustering_or_cyclic_convergence_as_smalld}, let us now suppose that content access capacity at each node is $C$, in which case the sufficient and necessary conditions for stability are respectively written as $M^{(c)}_{n, d} \cdot \Sigma \leq d C$ and $M^{(c)}_{n, d} \cdot \Sigma \leq 2d C$.
  Using these we find that $C \geq M^{(c)}_{n, d} \cdot \Sigma/d$ is sufficient and $C \geq M^{(c)}_{n, d} \cdot \Sigma/2d$ is necessary for system stability.
  This means that the maximum load on any node in the system will lie in $[M^{(c)}_{n, d} \cdot \Sigma/2d, \; M^{(c)}_{n, d} \cdot \Sigma/d]$, which implies that the load imbalance factor $\mathcal{I}$ for the system lies in $[M^{(c)}_{n, d} \cdot n/2d, \; M^{(c)}_{n, d} \cdot n/d]$.
  
  Finally using the results of almost sure convergence given for $M_{n, d}$ in Lemma~\ref{lm:Md_convergence_as} (hence given for $M^{(c)}_{n, d}$ as well due to Lemma~\eqref{lm:Mcirculard_to_Md_as}), we find \eqref{eq:I_dchoice_w_clustering_or_cyclic_convergence_as_smalld} and \eqref{eq:I_dchoice_w_clustering_or_cyclic_convergence_as_largerd}.
\end{proof}

\subsection{Proof of Lemma~\ref{lm:bibd_neccsuffcond_for_stability}}
\label{subsec:proof_lm_bibd_neccsuffcond_for_stability}
\begin{proof}
  We use the following fact, which we refer to as \textbf{F} here: in a storage allocation with block design, every pair of objects overlaps at \emph{exactly} one node in their choices.
  
  \noindent
  \textbf{Necessary condition}:
  Expansion of a set $S$ of $d$ objects is maximized (of size $d^2$) when the choices for each object are pairwise disjoint. This is not possible due to \textbf{F}.
  Let us start forming $S$ by picking an arbitrary object $o_i$ with the set of choices $C_i$. In order to maximize the expansion of $S$, let us form the rest of $S$ by selecting one object from each node in $C_i$. Given \textbf{F}, no pair in $S \setminus o_i$ is hosted on the same node.
  However, this does not prevent all objects within $S \setminus o_i$ to be hosted on some other node (since a node hosts $d$ different objects).
  In this case the expansion of $S$ will consist of $d + (d-1) + (d-2)^2 = d^2 - 2d + 3$ nodes, which gives us the necessary condition for stability.
  
  \noindent
  \textbf{Sufficient condition}:
  We here consider the spiky load scenario discussed in the proof of Lemma~\ref{lm:rgap_neccsuffcond_for_stability}; let $x$ be an integer in $[0, n]$, and the offered load for $o_i$ is $\rho$ if $i = x + d j$ for some $j = 0, 1, \dots, \floor{n/d}$ and $0$ otherwise.
  Let us refer to objects with spiky load as ``a spiky object''.
  Each spiky object shares its $d$ choices with every other spiky object, and the worst case sharing is when the object has to share $d-1$ of its choices with others. In the worst case, the system is stable only if $\rho \leq 1 + (d-1)/2$, which gives us the sufficient condition for stability.
\end{proof}

\subsection{Proof of Theorem~\ref{thm:P_I_dchoice_wxors}}
\label{subsec:proof_thm_P_I_dchoice_wxors}
\begin{proof}
  This proof is very similar to that of Theorem~\ref{thm:P_I_dchoice_w_clustering_cyclic}, except for the complication that there is no cyclic equivalence of regular balanced $d$-choice allocation with XOR's unlike the case in allocations with object replicas.
  That is why we first find auxiliary cyclic allocations that serve as lower or upper bound on the load balancing ability of the $d$-choice allocation with $r$-XOR's (this is what makes the proof more difficult), then we derive our results by studying these auxiliary cyclic allocations.

  We start by showing sufficient and necessary conditions for system stability.
  
  \vspace{1ex}
  \noindent
  \textbf{\textbf{(i)} Sufficient condition for system stability: }
  This part consists of three intermediate steps.
  
  \vspace{1ex}
  \noindent
  \textit{\underline{Step 1}: Cyclic allocation with $r$-XOR's.}
  Consider a \textit{cyclic} $d$-choice allocation in which for each object $o_i$ that is primarily stored on node $s_i$, $d-1$ choices (recovery sets) are formed by the $d-1$ consecutive disjoint $r$-sets of nodes that come right after $s_i$ (in the order of node indices, by wrapping around the sequence of nodes if necessary).
  For instance, in 3-choice cyclic allocation over nodes $[1, \dots, 6]$ with $r=2$, pairs of nodes $(s_5, s_6)$ and $(s_1, s_2)$ can jointly serve the object $o_4$ that is primarily stored on $s_4$ (recall that we assume the total number of stored objects $k$ is equal to the total number of storage nodes $n$).
  
  Notice that a cyclic allocation cannot be implemented with XOR's. This is because an additional $r$-XOR'ed copy adds a new choice simultaneously for $r$ objects over a set of $r+1$ nodes, and it is not possible for all these added choices to be a proper cyclic choice.
  For instance, let objects $a$, $b$ and $c$ be stored on nodes $s_1$, $s_2$ and $s_3$ respectively, and let us store $a+b$ on $s_3$. Then, $s_2$ and $s_3$ form a choice for $a$, which is a proper cyclic choice, while $s_1$ and $s_3$ form a choice also for $b$ and this is improper for a cyclic allocation, which we simply refer to as a non-cyclic choice.
  However, it is still possible to create an allocation that implements both cyclic and non-cyclic choices with XOR's, then restrict it to behave as a cyclic allocation as follows.
  Firstly, each of the $d-1$ cyclic recovery choices can be created for each object via a separate XOR'ed copy. These XOR'ed copies will incur non-cyclic choices as discussed, but we will ignore and never use them for object access.
  For instance in the previous example, the incurred non-cyclic choice implemented by $a$, $a+b$ stored on $(s_1, s_3)$ can be ignored and never used to access $b$, while a new proper cyclic choice can be added for $b$ by storing $b+c$ on $s_4$.
  In the following we use cyclic allocation, which is created with the restriction described here, merely as a tool to derive our results.
  
  \vspace{1ex}
  \noindent
  \textit{\underline{Step 2}: Cyclic achieves smaller capacity region than non-cyclic.}
  The capacity region of non-cyclic (our regular balanced) allocation with XOR's contains that of its cyclic counterpart.
  To see this is true, we give the following explanation.
  In non-cyclic $d$-choice allocation with $r$-XOR's, each node participates in \textit{at most} $k \cdot d/r$ different choices, while in its cyclic counterpart, each node participates in \textit{exactly} $k \cdot d/r$ different choices.
  In other words, non-cyclic allocation is using the capacity at the nodes more efficiently than its cyclic counter part, while implementing the same number of choices for each object. This expands the capacity region everywhere, or keeps it the same at worst.
  To better understand this, consider the following example of a 2-choice allocation with 2-XOR's and its corresponding allocation matrix
  \[ \icol{a\\e+f}, ~\icol{b}, ~\icol{c\\a+b}, ~\icol{d}, ~\icol{e\\c+d}, ~\icol{f} \]
  \setcounter{MaxMatrixCols}{20}
  \begin{equation}
    \bm{M} = \begin{bmatrix}
      1 & 0 & 0 & 0 & 0 & 0 & 0 & 0 & 0 & 1 & 0 & 1 \\
      0 & 1 & 1 & 0 & 0 & 0 & 0 & 0 & 0 & 0 & 0 & 0 \\
      0 & 1 & 0 & 1 & 1 & 0 & 0 & 0 & 0 & 0 & 0 & 0 \\
      0 & 0 & 0 & 0 & 0 & 1 & 1 & 0 & 0 & 0 & 0 & 0 \\
      0 & 0 & 0 & 0 & 0 & 0 & 0 & 1 & 1 & 0 & 0 & 0 \\
      0 & 0 & 0 & 0 & 0 & 0 & 0 & 0 & 0 & 1 & 1 & 0
    \end{bmatrix}.
  \label{eq:dchoice_wxors_eg}
  \end{equation}

  We next briefly explain what $\bm{M}$ represents. 
  The system achieves stability by splitting (balancing) the demand for each object across its $d$ choices in such a way that no node is over burdened (i.e., each node is offered a load of $<1$).
  Each service choice for an object is either implemented by a single (primary) node or jointly by $r$ nodes (an XOR'ed choice).
  The portion of an object's demand that is forwarded to and supplied by one of its XOR'ed choices flows simultaneously into the $r$ nodes that jointly implement the choice.
  Each $1$ within the $i$th row of $\bm{M}$ represents the assignment of an object's demand portion to node $s_i$.
  For instance, $s_1$ implements the first (primary) choice for $a$ (hence the first $1$ in the $1$st row), and participates in the second choice for objects $e$ and $f$ (hence the second and third $1$ in the $1$st row).
  The cyclic counterpart of the allocation given above in \eqref{eq:dchoice_wxors_eg} would be
  \[ \icol{a\\e+f}, ~\icol{b\\f+a}, ~\icol{c\\a+b}, ~\icol{d\\b+c}, ~\icol{e\\c+d}, ~\icol{f\\d+e} \]
  \begin{equation}
    \bm{M}^c = 
    \begin{bmatrix}
      1 & 0 & 0 & 0 & 0 & 0 & 0 & 0 & 0 & 1 & 0 & 1 \\
      0 & 1 & 1 & 0 & 0 & 0 & 0 & 0 & 0 & 0 & 0 & \underline{1} \\
      0 & 1 & 0 & 1 & 1 & 0 & 0 & 0 & 0 & 0 & 0 & 0 \\
      0 & 0 & 0 & \underline{1} & 0 & 1 & 1 & 0 & 0 & 0 & 0 & 0 \\
      0 & 0 & 0 & 0 & 0 & \underline{1} & 0 & 1 & 1 & 0 & 0 & 0 \\
      0 & 0 & 0 & 0 & 0 & 0 & 0 & \underline{1} & 0 & 1 & 1 & 0
    \end{bmatrix},
  \label{eq:dchoice_wxors_eg_cyclic}
  \end{equation}
  where the incurred non-cyclic choices are ignored in $\bm{M}^c$, e.g., $a$ is never accessed from $f+a$ and $f$.
  Each row sums to $3$ in $\bm{M}^c$, while half the rows sum to less than $3$ in $\bm{M}$ (notice the additional $\underline{1}$'s in $\bm{M}^c$).
  In other words, non-cyclic allocation implements the same number of choices for each object with smaller overlap between different choices compared to its cyclic counterpart.
  A node's capacity is shared by all the service choices in which the node participates. Thus, it is better to have less overlap between choices in order to achieve greater capacity region.
  This is the ``inefficiency'' of cyclic allocation that causes it to achieve a smaller capacity region than its non-cyclic counterpart.
  
  Let $D = \left\{\bm{x}~|~\bm{M} \cdot \bm{x} \preceq \bm{1}, ~\bm{x} \succeq \bm{0}\right\}$, and $D^c$ be defined similarly with $\bm{M}^c$.
  It is easy to see that any $\bm{x}$ in $D^c$ will also lie in $D$ (recall the additional $\underline{1}$'s in $\bm{M}^c$). In addition, 
  non-cyclic $d$-choice allocation and its cyclic counterpart share the same $\bm{T}$ (i.e., the other allocation matrix that yields the capacity region $\mathcal{C}$ (or $\mathcal{C}^c$) by transforming $D$ (or $D^c$); see Sec.~\ref{subsec:cap_region}).
  Thus, we have $\mathcal{C} \supseteq \mathcal{C}^c$.
  This together with Def.~\ref{def:P_Sigma} implies that probability $\mathcal{P}_{\Sigma}$ for non-cyclic $d$-choice allocation is at least as large as that for its cyclic counterpart.
  
  \vspace{1ex}
  \noindent
  \textit{\underline{Step 3}: A sufficient condition for the stability of cyclic allocation.}
  Recall from Lemma~\ref{lm:rgap_neccsuffcond_for_stability} how we found a sufficient condition for stability when the allocation is constructed with the clustering or cyclic ($r$-gap in general) designs.
  Using the same arguments, a sufficient condition for the stability of the cyclic $d$-choice allocation with $r$-XOR's is found as $M^{(c)}_{n, 1 + r(d-1)} \cdot \Sigma \leq d$, where $\Sigma$ is the cumulative offered load on the system and $M^{(c)}_{n, 1 + r(d-1)}$ is the maximal ($1 + r(d-1)$)-spacing for $n$ uniform spacings on the unit circle.
  The reason for caring about ($1 + r(d-1)$)-spacing's in this case (rather than $d$-spacing's as was the case for allocations with object replicas) is because an object's first choice is implemented by the (primary) node that stores the object, and its XOR'ed choices are implemented by the $d-1$ disjoint $r$-sets of nodes that follow the primary node in (cyclic) order.
  The reason for keeping the right hand side of the sufficient condition unchanged at $d$ (as for the allocations with replication) is that object access from an $r$-XOR'ed choice requires accessing all $r$ nodes that jointly implement the choice, so $r(d-1)$ nodes that form the $d-1$ XOR'ed choices for an object can at most provide a capacity of $d-1$, which together with the capacity of the primary node adds up to $d$.
  
  \vspace{1ex}
  \noindent
  \textit{\underline{Final Step}: Putting it all together.}
  As discussed above, $\mathcal{P}_{\Sigma}$ for cyclic $d$-choice allocations is a lower bound for that of its non-cyclic counterpart.
  Thus, the sufficient condition $M^{(c)}_{n, 1 + r(d-1)} \cdot \Sigma \leq d$ for the stability of cyclic allocations will also be sufficient for the stability of its non-cyclic counterpart (i.e., our regular balanced allocation).
  
  \vspace{1ex}
  \noindent
  \textbf{\textbf{(ii)} Necessary condition for system stability:}
  We again here relate the cyclic allocation with $r$-XOR's as introduced in part (i) to its non-cyclic counterpart (our regular balanced allocation).
  We do this again in three intermediate steps that are in the same spirit as those given in part (i).
  
  \vspace{1ex}
  \noindent
  \textit{\underline{Step 1}: Cyclic-plus allocation.}
  Recall in part (i) that we created a cyclic $d$-choice allocation by adding all $k(d-1)$ XOR'ed copies that are necessary to implement the $d-1$ cyclic choices for each object; and then ignoring the incurred non-cyclic choices by never considering them for object access (e.g., recall the non-cyclic allocation in \eqref{eq:dchoice_wxors_eg} and its cyclic counterpart in \eqref{eq:dchoice_wxors_eg_cyclic}).
  Let us also consider and use the incurred non-cyclic choices for object access here, and refer to this form of allocation as \textit{cyclic-plus}.
  
  \vspace{1ex}
  \noindent
  \textbf{Cyclic-plus achieves greater capacity region than non-cyclic.}
  The capacity region of cyclic-plus allocations will contain that of its non-cyclic counterpart, which together with Def.~\ref{def:P_Sigma} implies that $\mathcal{P}_{\Sigma}$ for cyclic-plus allocations will be at least as large as that of its non-cyclic counterpart.
  This is because cyclic-plus allocations implement all the choices that their non-cyclic counterparts implement plus some additional choices (e.g., compare \eqref{eq:dchoice_wxors_eg} with \eqref{eq:dchoice_wxors_eg_cyclic}), which will yield at least as large a capacity region everywhere as the one without the additional choices.
  
  \vspace{1ex}
  \noindent
  \textit{\underline{Step 2}: A necessary condition for the stability of cyclic-plus allocation.}
  In cyclic-plus $d$-choice allocations, there are $d$ cyclic and $d$ non-cyclic choices for each object. Notice that each non-cyclic choice for an object is due to a cyclic choice of another object.
  Consider an object primarily stored on $s_i$, then $s_{i+1 \mod n}$ participates in this object's first cyclic choice and all of its non-cyclic choices. This is a direct consequence of how cyclic choice with XOR's are constructed.
  For instance, consider object $a$ in \eqref{eq:dchoice_wxors_eg_cyclic}, its first cyclic choice is ($b$, $a+b$) and its only non-cyclic choice is ($f+a$, $a$), where both $b$ and $f+a$ are stored on the node that comes right after $a$'s primary node.
  Thus, all of the $d$ additional non-cyclic choices and the first cyclic choice for an object depend on a single node, which will be a bottleneck when these choices must be used simultaneously to access the object. In other words, all of these $d+1$ choices (one cyclic and $d$ non-cyclic) can simultaneously yield at most as much capacity as of a single node.
  
  Due to the bottleneck node described above, even the additional non-cyclic choices are not sufficient to achieve stability in a cyclic-plus $d$-choice allocation with $r$-XOR's when any $1 + r(d-1)$ consecutive nodes have a cumulative offered load $> 2d-1$, that is when $M^{(c)}_{n, 1 + r(d-1)} \cdot \Sigma > 2d-1$, hence a necessary condition for stability is that as $M^{(c)}_{n, 1 + r(d-1)} \cdot \Sigma \leq 2d$.
  This is easy to see using the exact same arguments we used to show the corresponding necessary stability condition in Lemma~\ref{lm:rgap_neccsuffcond_for_stability} for $d$-choice allocation with object replicas.
  
  \vspace{1ex}
  \noindent
  \textit{\underline{Final Step}: Putting it all together.}
  We showed that the probability $P_{\Sigma}$ for cyclic-plus allocation is an upper bound on that of its non-cyclic counterpart, thus $M^{(c)}_{n, 1 + r(d-1)} \cdot \Sigma \leq 2d$ is also a necessary stability condition for non-cyclic (our regular balanced) $d$-choice allocation.
  
  From now on we will refer to $1 + r(d-1)$ as $D$.

  \vspace{1ex}
  \noindent
  \textbf{Proof of \eqref{eq:P_dchoice_wxors_convergence_smalld} and \eqref{eq:P_dchoice_wxors_convergence_largerd}.}
  Follows from the exact same arguments used in the proof of respectively \eqref{eq:P_dchoice_w_clustering_cyclic_convergence_smalld} and \eqref{eq:P_dchoice_w_clustering_cyclic_convergence_largerd} (Theorem~\ref{thm:P_I_dchoice_w_clustering_cyclic}).
  
  \vspace{1ex}
  \noindent
  \textbf{Proof of \eqref{eq:I_dchoice_wxors_convergence_as_smalld} and \eqref{eq:I_dchoice_wxors_convergence_as_largerd}.}
  Using the same arguments used in the proof of \eqref{eq:I_dchoice_w_clustering_or_cyclic_convergence_as_smalld} and \eqref{eq:I_dchoice_w_clustering_or_cyclic_convergence_as_largerd} (Theorem~\ref{thm:P_I_dchoice_w_clustering_cyclic}),
  we can conclude here that the load imbalance factor $\mathcal{I}$ for the system lies in $[M^{(c)}_{n, D} \cdot n/2d, \; M^{(c)}_{n, D} \cdot n/d]$.
  The results of almost sure convergence given for $M_{n, D}$ in Lemma~\ref{lm:Md_convergence_as} hold also for $M^{(c)}_{n, D}$ by Lemma~\ref{lm:Mcirculard_to_Md_as}. Using the same arguments given in the proof of Lemma~\ref{lm:Md_convergence_as}, we derive \eqref{eq:I_dchoice_wxors_convergence_as_smalld} and \eqref{eq:I_dchoice_wxors_convergence_as_largerd}.
\end{proof}

\section{Acknowledgements}
Authors would like to thank Gala Yadgar for the fruitful discussions and recommendations in the early stages of this work.
This research is supported by the National Science Foundation
under Grants No. CIF-1717314.

\bibliographystyle{IEEEtran}
\bibliography{main}


\begin{IEEEbiographynophoto}{Mehmet Fatih Aktas}
is currently a senior engineer at MathWorks. His main research interest is to make distributed computer systems faster and more robust to runtime changes. He has been doing research in both systems development and theoretical analysis. In his research, he mostly relies on probabilistic modeling and on tools from applied probability such as queueing theory, order statistics and reinforcement learning. He previously completed an MS and a PhD in Electrical and Computer Engineering at Rutgers University, New Brunswick, NJ, USA, and a BS in Electrical and Electronics Engineering at Bilkent University, Ankara, Turkey.
\end{IEEEbiographynophoto}

\begin{IEEEbiographynophoto}{Amir Behrouzi-Far}
received the B.S. degree in electrical engineering from Iran University of Science and Technology, Tehran, Iran, in 2013 and M.S. degree in electrical and electronics engineering from Bilkent University, Ankara Turkey, in 2016. He is currently pursuing the Ph.D. degree in electrical and computer engineering at Rutgers University, New Brunswick, NJ, USA.
His research interest includes performance evaluation in distributed systems, timeliness in real time systems, scheduling in computing/storage systems, wireless communications and reinforcement learning.
\end{IEEEbiographynophoto}


\begin{IEEEbiographynophoto}{Emina Soljanin}
is a professor at Rutgers University. Before moving to Rutgers in January 2016, she was a (Distinguished) Member of Technical Staff for 21 years in the Mathematical Sciences Research of Bell Labs.%
Her interests and expertise are wide. Over the past quarter of the century, she has participated in numerous research and business projects, as diverse as power system optimization, magnetic recording, color space quantization, hybrid ARQ, network coding, data and network security, distributed systems performance analysis, and quantum information theory. She served as an Associate Editor for Coding Techniques, for the IEEE Transactions on Information Theory, on the Information Theory Society Board of Governors, and in various roles on other journal editorial boards and conference program committees.
 Prof.~Soljanin an IEEE Fellow, an outstanding alumnus of the Texas A\&M School of Engineering, the 2011 Padovani Lecturer, a 2016/17 Distinguished Lecturer, and 2019 President of the IEEE Information Theory Society.
\end{IEEEbiographynophoto}

\begin{IEEEbiographynophoto}{Phil Whiting}
received his BA degree from the University of Oxford, his MSc from the University of London and his Ph. D. was in queueing
theory from the University of Strathclyde. After a post-doc at the University of Cambridge, Phil's interests centered on wireless. In 1993 Phil participated in the Telstra trial of Qualcomm CDMA in South Eastern Australia. He then joined the Mobile research Centre at the University of South Australia Adelaide. He was a researcher at Bell Labs from January 1997 to June 2013. Phil is currently a research Professor at Macquarie University, Sydney Australia and also a consultant to Telstra for the past two years. Phil has over 25 patents in applications for  DSL vectoring, Wireless Networks and Location and Tracking. Phil has several awards including for his work in DSL vectoring and in wireless scheduling. 
 
Phil has held visiting positions including ones at Brown University (Maths), University of Korea (Engineering) and Vrij University (Maths). For the past 3 years Phil has been a STAR visiting scholar to the Maths Dept., Technical University of Eindhoven, this collaborative work includes investigations of both CSMA networks and load balancing in queueing networks.

Apart from papers in various aspects of telecommunications, Phil has been author on various aspects of probability theory, including random Vandermonde matrices, Large Deviations theory for Occupancy Models and more recently random CSMA networks and load balancing in queueing networks. Phil's current research includes Storage Systems and Wireless mmWave networks.
\end{IEEEbiographynophoto}

\end{document}